\tikzset{
  startstop/.style={rectangle, rounded corners, minimum width=3cm, minimum height=1cm, text centered, draw=black, fill=red!30},
  process/.style={rectangle, minimum width=3cm, minimum height=1cm, text centered, draw=black, fill=blue!30},
  decision/.style={diamond, minimum width=3cm, minimum height=1cm, text centered, draw=black, fill=green!30},
  arrow/.style={thick,->,>=stealth}
}
\newtheorem{lemma}{Lemma}
\newtheorem{assumption}{Assumption}[section]
\crefname{theorem}{Theorem}{Theorems}
\crefname{proposition}{Proposition}{Propositions}
\newtheorem{corollary}{Corollary}       
\newtheorem{proposition}{Proposition}   
\newtheorem{definition}{Definition}     
\crefname{theorem}{Theorem}{Theorems}
\crefname{corollary}{Corollary}{Corollaries}
\crefname{proposition}{Proposition}{Propositions}
\crefname{definition}{Definition}{Definitions}
\def\EE{\mathbb{E}}
\def\PP{\mathbb{P}}
\def\RR{\mathbb{R}}
\def\calI{\mathcal{I}}
\def\calP{\mathcal{P}}
\def\calS{\mathcal{S}}
\newcommand{\ind}{\textup{ind}}
\def\1{\mathbbm{1}}
\def\var{\mathsf{Var}}
\newcommand\independent{\protect\mathpalette{\protect\independenT}{\perp}}
\def\independenT#1#2{\mathrel{\rlap{$#1#2$}\mkern2mu{#1#2}}}
\newtheorem{remark}{Remark}
\def \var {\mathsf{Var}}
\def\independenT#1#2{\mathrel{\rlap{$#1#2$}\mkern2mu{#1#2}}}
\definecolor{myblue}{rgb}{.8, .8, 1}
\definecolor{mathblue}{rgb}{0.2472, 0.24, 0.6} 
\definecolor{mathred}{rgb}{0.6, 0.24, 0.442893}
\definecolor{mathyellow}{rgb}{0.6, 0.547014, 0.24}
\newcommand{\EIF}{{\text{EIF}}}
\newcommand{\IF}{{\text{IF}}}
\newcommand{\np}{{\text{np}}}
\pgfplotsset{compat=1.17}
\title{Estimation and Inference for Causal Explainability}
\author[1]{Weihan Zhang}
\author[2]{Zijun Gao}
\affil[1]{Division of Biostatistics, University of California, Berkeley\\
\texttt{weihan\_zhang2001@berkeley.edu}}
\affil[2]{Department of Data Science and Operations, University of Southern California\\
\texttt{zijungao@marshall.usc.edu}}
\begin{document}

\maketitle

\begin{abstract}
Understanding how much each variable contributes to an outcome is a central question across disciplines.
A causal view of explainability is favorable for its ability in uncovering underlying mechanisms and generalizing to new contexts. 
Based on a family of causal explainability quantities, we develop methods for their estimation and inference.
In particular, we construct a one-step correction estimator using semi-parametric efficiency theory, which explicitly leverages the independence structure of variables to reduce the asymptotic variance.
For a null hypothesis on the boundary, i.e., zero explainability, we show its equivalence to Fisher's sharp null, which motivates a randomization-based inference procedure. 
Finally, we illustrate the empirical efficacy of our approach through simulations as well as an immigration experiment dataset, where we investigate how features and their interactions shape public opinion toward admitting immigrants.
\end{abstract}

  \textbf{Keywords:} Explainability, Causality, ANOVA, Semi-parametric efficiency, Randomization inference


\section{INTRODUCTION}\label{sec:introduction}

\textit{``The truth is rarely pure and never simple.'' --- Oscar Wilde.} 
Outcomes typically rely on multiple factors rather than a single cause: 
customer ratings are influenced by various attributes of an product;
biological traits are influenced by multiple genes \parencite{fisher1918correlation,visscher2017tenyears, watanabe2019global, sivakumaran2011abundant}.
In the presence of multiple factors, it is practically relevant to quantify their explainability, that is how much a set of factors and their interactions account for the variability in the outcome.

Compared with association-based measures, causality-based explainability can illuminate mechanisms and transfer across contexts reliably, and is sometimes regarded as the foundation of scientific explanation
\parencite{pearl2009,vanderweele2015explanation}.
This motivates us to adopt the Causal ANOVA \parencite{pmlr-v275-gao25a}, a set of quantities generalizing the functional ANOVA \parencite{hooker2007generalized} and Sobol indices \parencite{sobol2001global} to the counterfactual perspective, to measure the explainability.
For the statistical analysis of causal ANOVA quantities, prior work has primarily focused on settings in which an oracle, typically a black-box machine learning model, is available to evaluate the outcome at arbitrarily chosen input values, with the resulting explainabilities used to assess feature importance.
However, in many domains including social science and genetics, researchers rely on collected data samples and have no access to such oracles.

Semi-parametric methods target the estimation and inference for finite-dimensional parameters while allowing nuisance components to be non-parametric, and operate on pre-collected datasets requiring no oracle.
This aligns with our scope: a scalar target (explainability summarized by a number), minimal modeling assumptions, and no oracle access. 
Therefore, we develop semi-parametric methods for the estimation and inference of Causal ANOVA quantities.
Particularly, we derive the efficient influence function and construct the associated one-step correction estimator, and explicitly show how the independence structure in Causal ANOVA can be leveraged to gain efficiency.
Degenerate nulls (the parameter of interest lies on the boundary of its space) are a known challenge in semi-parametric statistics \parencite{verdinelli2024feature}, and we design a valid randomization-based procedure for such nulls in contrast to the common practice based on sample splitting or noise injection.

\paragraph{Contributions.}
We investigate the estimation and inference for the explainability attributable to individual factors and their interactions through the lens of semi-parametric statistics.
\begin{itemize}    
    \item Our approach can be used to identify influential factors and interactions by testing for non-zero explainability, and further quantify their importance by providing confidence intervals.
   Applied to an immigration-experiment dataset, multiple factors are shown to exhibit nonzero causal explainability for public opinions toward admitting immigrants, and the interaction between \textit{Job Plan} and \textit{Job Experience}~is found to be influential as well.
   

   \item  We derive the efficient influence function that explicitly leverages the structure, i.e., (conditional) independence, in the probability model and construct the associated one-step correction estimator, which attains smaller asymptotic variance compared to the counterpart that ignores the structure. 
    The benefits of exploiting independence structure as well as our derivation extend to other applications with independent components, such as factorial A/B tests. 
 
    \item We propose a randomization-based procedure to test whether the null is degenerate,
    which requires neither sample splitting nor noise injection.
    Further combined with the confidence interval using the asymptotic distribution of one-step correction estimators (valid under non-degenerate nulls), we introduce a sequential procedure that returns valid confidence sets regardless of the null degeneracy.
\end{itemize}

\paragraph{Organization.}
In \Cref{sec:formulation}, we review Causal ANOVA, semi-parametric estimation theory, and randomization test.
In \Cref{sec:estimation}, we exploit the independence structure in the probability model to derive a more efficient influence function and provide the associated one-step estimator.
In \Cref{sec:inference}, we derive the one-step estimator's asymptotic distribution under non-degenerate nulls, introduce a randomization-based test for degenerate nulls, and combine them into a sequential procedure.
In \Cref{sec:simulation}, \Cref{sec:real.data}, we demonstrate our procedure in simulations and a real world dataset.
Proofs, additional simulations are provided in the appendix.


\paragraph{Notation.}
Let $\xi$ denote the causal estimand of interest.
Let $Y$ be the outcome and $\bm W=(W_1,\ldots,W_K)$ be the vector of factors; for $\mathcal S\subseteq[K]$,
write $\bm W_{\mathcal S}$ for the corresponding subvector. Let $\mathbb P$ denote the true joint law of $(Y,\bm W)$ on the product measurable space
$(\mathcal Y\times\mathcal W,\ \mathcal B(\mathcal Y)\otimes\mathcal B(\mathcal W))$. Let $\mu$ and $\upsilon$ denote the expectation operators $\EE[Y \mid \bm W]$ and $\EE[Y^2 \mid \bm W]$. Denote the marginal law of $\bm W$ by $\mathbb P_{\bm W}$ and the marginal law of $W_k$ by $\mathbb P_{W_k}$.
When we impose the product structure for $\bm W$, we write $\mathbb P_{\bm W} = \bigotimes_{k=1}^K \mathbb P_{W_k}$. To introduce densities, fix $\sigma$-finite dominating measures $\lambda_k$ on $(\mathcal W_k,\mathcal B(\mathcal W_k))$
(e.g.\ counting measure for discrete $W_k$ or Lebesgue measure for continuous $W_k$), and let
$\lambda_{\bm W}:=\bigotimes_{k=1}^K \lambda_k$.
When $\mathbb P_{W_k}\ll \lambda_k$, we write $p_k(w_k) := d\mathbb P_{W_k}/d\lambda_k(w_k)$ and $p_{\bm W}(\bm w) := d\mathbb P_{\bm W}/d\lambda_{\bm W}(\bm w)$. Under the product structure $\mathbb P_{\bm W}=\bigotimes_{k=1}^K \mathbb P_{W_k}$ and marginal absolute continuity,
the joint density factorizes as $p_{\bm W}(\bm w) \;=\; \prod_{k=1}^K p_k(w_k)
\quad \lambda_{\bm W}\text{-a.e.}$ We denote the conditional law of $Y$ given $\bm W$ by $\mathbb P_{Y\mid \bm W}$; when dominated by a measure $\lambda_Y$ (e.g.\ Lebesgue for continuous $Y$ or counting measure for discrete $Y$), we write $p_{Y\mid \bm W}(y\mid \bm w) := d\mathbb P_{Y\mid \bm W=\bm w}/d\lambda_Y(y)$. Estimated counterparts are denoted by $\widehat{\mathbb P}$ and $\hat\mu$, with
$\widehat{\mathbb P}_{\bm W}$, $\widehat{\mathbb P}_{W_k}$, and $\widehat{\mathbb P}_{Y\mid \bm W}$ as well as densities
defined analogously sharing the same dominating measure. For a probability model $\mathcal P$ and $\mathbb P\in\mathcal P$, let $\dot{\mathcal P}_{\mathbb P}$
denote the tangent space at $\mathbb P$ (subscript $\mathbb P$ dropped when unambiguous).
We write $\varphi_{\IF}^\xi$, $\varphi_{\EIF}^\xi$, or simply $\varphi^\xi$ for influence functions of the estimand $\xi$
(superscript $\xi$ dropped when unambiguous).
For a comprehensive summary of notation, see \Cref{tab:notation}.

\section{PROBLEM FORMULATION}\label{sec:formulation}

\subsection{Causal ANOVA}\label{sec:causal.ANOVA}

Causal ANOVA is set in the potential outcome framework~\parencite{rubin1974estimating}, with $K$ randomized treatments $W_k\in\mathcal W_k$ and $n$ units.
Each unit $i$ has potential outcomes $\bm Y_i(\cdot)$, but only $Y_i = Y_i(\bm W_i)$ is observed for the realized assignment $\bm W_i = (W_{1,i},\ldots,W_{K,i})$.
To ensure that Causal ANOVA quantities generalize to future observations, we assume the observed units are drawn i.i.d.\ from a super-population model. Formally,
\begin{align}\label{eq:super.population}
    \bm W_i \sim \mathbb P_{\bm W}, 
    \ 
    \bm Y_i(\cdot) \sim \mathbb P_{Y(\cdot)}, \ \bm W_i \independent Y_i(\cdot),
\end{align}
independently across units,
where $\mathbb P_{\bm W}$ is the (unknown) distribution of treatment assignments and $\mathbb P_{Y(\cdot)}$ is the unknown distribution of potential outcomes.

Causal ANOVA defines explainability via contrasts among potential outcomes, i.e., comparing factuals and counterfactuals.
While Causal ANOVA can accommodate any set of factors $\bm W$ that respect a directed acyclic graph (DAG), identification of the explainabilities under dependent factors generally requires additional assumptions (e.g., comonotonicity of potential outcomes).
As a result, in this paper we focus on independent treatments specified below.
\begin{assumption}[Independent treatments]\label{assu:independent}
The factors $W_k$, $k\in[K]$, are mutually independent.
\end{assumption}
\noindent Let $\mathcal P_{\ind}$ denote the model satisfying \Cref{assu:independent}, a proper subset of the non-parametric model $\calP_{\np}$ in \eqref{eq:super.population}.
\Cref{assu:independent} is satisfied in many experiment designs, for example, conjoint analysis randomizes the value of attributes independently.
More generally, settings with block-wise independence of treatment groups can be analyzed in a similar manner, where each block of treatments is regarded
as a single combined factor.
In \Cref{appe:sec:causal.ANOVA}, we show \Cref{assu:independent} can be relaxed to the conditional independence: $W_k$ are mutually independent given some covariates $\bm X$.

Let $\bm W'$ be an independent copy of $\bm W$.
Under \Cref{assu:independent}, Causal ANOVA uses the variance of the difference obtained by replacing $\bm W_\calS$ with $\bm W'_\calS$ to define the total explainability of $\bm W_\calS$, $\calS \subseteq [K]$.
\begin{definition}[Total explainability] \label{defi:total.independent}
    For $\calS \subseteq [K]$,\begin{align}\label{eq:total.variance}
        \xi(\vee_{k \in \calS} W_k)
        := \frac{\var\left(Y(\bm W) - Y(\bm W'_{\calS}, \bm W_{-\calS})\right)}{2\var\left(Y(\bm W)\right)}.
    \end{align}
\end{definition}
\noindent The explainability of interaction is defined as the variance of a difference-in-difference contrast.
\begin{definition}[Interaction explainability] \label{defi:interaction.explainability}
For $k \neq k' \in [K]$,
\begin{align}\label{eq:xi-3}\begin{split}
    &\xi(W_{k}\wedge W_{k'})
  :=\frac{\var\left(I(W_k^\prime,W^\prime_{k^\prime})\right)}{4\,\var\left(Y(\bm W)\right)},
\end{split}
\end{align} where $I(W_k^\prime,W^\prime_{k^\prime}) = Y(W_{k}',W_{k'}')-Y(W_{k}',W_{k'})-Y(W_{k},W_{k'}')+Y(W_{k},W_{k'})$.
\end{definition}
\noindent 
\Cref{defi:interaction.explainability} defines the first-order interaction, and can be generalized to higher-order interactions. However, we omit them due to the reduced usefulness and increased complexity.

When $Y$ is not fully determined by $\bm W$, the Causal ANOVA quantities are generally not identifiable. 
The non-parametric structural equation model with additive, independent errors, a.k.a. NPSEM-IE \parencite{pearl2009}, offers a useful balance between identifiability and generality.
\begin{assumption}[NPSEM-IE]\label{assu:additive.independent.error}
    The outcome satisfies $Y=f_Y(\bm W)+E_Y$ for some function $f_Y$ and an error term $E_Y$ independent of $\bm W$.
\end{assumption}
\noindent NPSEM-IE is a commonly used model for causal discovery and inference \parencite{hoyer2008nonlinear,peters2014causal}. We discuss its extensions in \Cref{sec:NPSEM.extension}. 

\begin{proposition}\label{lemma:definition}
   Under \Cref{assu:independent} and \Cref{assu:additive.independent.error}, for any $\calS \subseteq [K]$, \Cref{defi:total.independent}~is identifiable and admits the form, \begin{align}\label{eq:total.variance.2}
    \xi(\vee_{k \in \calS} W_k)
    = \frac{\mathbb{E}[\mathbb{E}^2\left[Y\mid \bm W]\right]}{\var\left(Y\right)}-\frac{\mathbb{E}\left[\mathbb{E}^2[Y\mid \bm W_{-\mathcal{S}}]\right]}{\var\left(Y\right)}.
\end{align}
\end{proposition}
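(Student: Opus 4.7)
The plan is to reduce the variance-of-contrast in \Cref{defi:total.independent} to a difference of conditional second moments, using \Cref{assu:additive.independent.error} to eliminate the error and \Cref{assu:independent} to turn the marginalization over $\bm W'_{\calS}$ into a conditional expectation.

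First I would substitute the structural form $Y(\bm w)=f_Y(\bm w)+E_Y$ into the contrast. Because the same unit is used on both sides of $Y(\bm W)-Y(\bm W'_\calS,\bm W_{-\calS})$, the error $E_Y$ cancels, giving
\begin{equation*}
Y(\bm W)-Y(\bm W'_\calS,\bm W_{-\calS}) \;=\; f_Y(\bm W)-f_Y(\bm W'_\calS,\bm W_{-\calS}).
\end{equation*}
In particular the contrast is a measurable function of $(\bm W,\bm W'_\calS)$ only, which is what identifiability requires.

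Next, set $A:=f_Y(\bm W)$ and $B:=f_Y(\bm W'_\calS,\bm W_{-\calS})$. Under \Cref{assu:independent} together with $\bm W'\stackrel{d}{=}\bm W$ independent of $\bm W$, both $A$ and $B$ have the same marginal distribution, so $\var(A-B)=2\var(A)-2\cov(A,B)$. I would then compute $\cov(A,B)$ by the tower property conditioning on $\bm W_{-\calS}$: conditionally on $\bm W_{-\calS}$, the random vectors $\bm W_\calS$ and $\bm W'_\calS$ are independent with the same distribution, hence
\begin{equation*}
\EE[AB\mid \bm W_{-\calS}] \;=\; \EE[f_Y(\bm W)\mid \bm W_{-\calS}]\cdot \EE[f_Y(\bm W'_\calS,\bm W_{-\calS})\mid \bm W_{-\calS}] \;=\; \bigl(\EE[f_Y(\bm W)\mid \bm W_{-\calS}]\bigr)^2.
\end{equation*}
Combining this with $\var(A)=\EE[f_Y^2(\bm W)]-(\EE f_Y(\bm W))^2$ yields
\begin{equation*}
\var(A-B)\;=\;2\Bigl(\EE[f_Y^2(\bm W)]-\EE\bigl[\EE^2[f_Y(\bm W)\mid \bm W_{-\calS}]\bigr]\Bigr).
\end{equation*}

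Finally, I would translate $f_Y$ back to $\EE[Y\mid \bm W]$. Under \Cref{assu:additive.independent.error}, $\EE[Y\mid \bm W]=f_Y(\bm W)+c$ with $c:=\EE[E_Y]$, so $\EE[Y\mid \bm W_{-\calS}]=\EE[f_Y(\bm W)\mid \bm W_{-\calS}]+c$. Expanding the squares, the two cross terms $2c\,\EE[f_Y(\bm W)]$ and the $c^2$ terms cancel in the subtraction, giving
\begin{equation*}
\EE[\EE^2[Y\mid \bm W]]-\EE[\EE^2[Y\mid \bm W_{-\calS}]] \;=\; \EE[f_Y^2(\bm W)]-\EE\bigl[\EE^2[f_Y(\bm W)\mid \bm W_{-\calS}]\bigr],
\end{equation*}
where I also use $\EE[f_Y^2(\bm W)]=\EE[\EE^2[Y\mid\bm W]]-2c\,\EE[f_Y(\bm W)]-c^2+\EE[\var(Y\mid\bm W)\text{-free terms}]$; more cleanly, $\EE^2[Y\mid\bm W]=(f_Y(\bm W)+c)^2$ so the two sides agree after cancellation. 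Dividing by $2\var(Y(\bm W))=2\var(Y)$ delivers \eqref{eq:total.variance.2}, and the right-hand side depends only on the law of $(Y,\bm W)$, which is identifiability.

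The calculation is largely routine; the one place requiring care is the bookkeeping with the constant $c=\EE[E_Y]$, since the proposition writes everything in terms of the observable $\EE[Y\mid\bm W]$ rather than the unobservable $f_Y$. The main conceptual step is the conditional-independence reduction that turns $\EE[f_Y(\bm W)f_Y(\bm W'_\calS,\bm W_{-\calS})]$ into $\EE[\EE^2[Y\mid\bm W_{-\calS}]]$, and this is exactly where \Cref{assu:independent} enters.
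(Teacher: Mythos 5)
Your proof is correct, but it is organized differently from the paper's. The paper first applies the law of total variance to $Y(\bm W)-Y(\bm W'_{\calS},\bm W_{-\calS})$ conditioning on the whole potential-outcome schedule $(Y(\cdot),\bm W_{-\calS})$: the conditional mean of the contrast vanishes, and conditional independence of $\bm W_{\calS}$ and $\bm W'_{\calS}$ splits the conditional variance into two equal halves, yielding $\xi=\EE[\var(Y\mid Y(\cdot),\bm W_{-\calS})]/\var(Y)$; only then does it invoke NPSEM-IE (replacing conditioning on $Y(\cdot)$ by conditioning on $E_Y$) to identify this as $\EE[\var(\EE[Y\mid\bm W]\mid\bm W_{-\calS})]/\var(Y)$. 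You instead invoke NPSEM-IE up front to cancel $E_Y$ from the contrast, and then use the elementary identity $\var(A-B)=2\var(A)-2\cov(A,B)$ together with the tower-property computation $\EE[AB\mid\bm W_{-\calS}]=\EE[A\mid\bm W_{-\calS}]^2$. Your route avoids conditioning on the somewhat delicate object $Y(\cdot)$ and makes the role of each assumption cleanly separable (NPSEM-IE kills the noise, independence computes the covariance); the paper's route isolates the identification step at the end, which makes clearer exactly where identifiability would fail without \Cref{assu:additive.independent.error}. One cosmetic issue: your parenthetical identity for $\EE[f_Y^2(\bm W)]$ involving ``$\var(Y\mid\bm W)$-free terms'' is garbled and not needed --- the clean statement you give immediately after, that $\EE^2[Y\mid\bm W]=(f_Y(\bm W)+c)^2$ and the $c$-terms cancel in the subtraction, is the correct and sufficient argument.
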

\noindent By the inclusion-exclusion principle for causal explainabilities, that is
$ \xi(W_{k} \wedge W_{k'})
     =  \xi(W_{k})+ \xi(W_{k'})- \xi(W_{k} \vee W_{k'})$,
we obtain an analogue of \Cref{lemma:definition} for the interaction explainabilities.
\begin{corollary}\label{coro:definition.interaction}
   Under the assumption of \Cref{lemma:definition}, for $k \neq k' \in [K]$, \Cref{defi:interaction.explainability} is identifiable and admits the form,  \begin{align}\label{eq:total.variance.2}
   \begin{split}
    \xi(W_{k} \wedge W_{k'})
    =& \frac{\mathbb{E}[\mathbb{E}^2\left[Y\mid \bm W]\right]}{\var\left(Y\right)}-\frac{\mathbb{E}\left[\mathbb{E}^2[Y\mid \bm W_{-k}]\right]}{\var\left(Y\right)}\\
    &-\frac{\mathbb{E}\left[\mathbb{E}^2[Y\mid \bm W_{-k'}]\right]}{\var\left(Y\right)}+ \frac{\mathbb{E}\left[\mathbb{E}^2[Y\mid \bm W_{-\{k,k'\}}]\right]}{\var\left(Y\right)}.
    \end{split}
\end{align}
\end{corollary}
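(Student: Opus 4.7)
The plan is to combine the inclusion-exclusion identity stated immediately before the corollary, namely $\xi(W_k\wedge W_{k'})=\xi(W_k)+\xi(W_{k'})-\xi(W_k\vee W_{k'})$, with three applications of \Cref{lemma:definition}. First I would invoke \Cref{lemma:definition} for the singleton sets $\calS=\{k\}$, $\calS=\{k'\}$ and for $\calS=\{k,k'\}$ to conclude that each of $\xi(W_k)$, $\xi(W_{k'})$, $\xi(W_k\vee W_{k'})$ is identifiable under \Cref{assu:independent} and \Cref{assu:additive.independent.error}, and admits the closed-form expression given in the proposition. Since $\xi(W_k\wedge W_{k'})$ is a linear combination of these three quantities via the IE identity, it is automatically identifiable as well.

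The second step is the bookkeeping. Substituting the three closed-form expressions into the IE identity, the term $\EE[\EE^2[Y\mid\bm W]]/\var(Y)$ appears with coefficients $+1$, $+1$ and $-1$, which collapse to the single leading $+1$ in the target formula. Collecting the remaining three conditional-variance terms yields exactly
\[
-\frac{\EE[\EE^2[Y\mid \bm W_{-k}]]}{\var(Y)} -\frac{\EE[\EE^2[Y\mid \bm W_{-k'}]]}{\var(Y)} +\frac{\EE[\EE^2[Y\mid \bm W_{-\{k,k'\}}]]}{\var(Y)},
\]
which matches the statement of the corollary.

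The only non-routine point is the IE identity itself, which the excerpt invokes as a known fact. If verification were required, I would set $\Delta_{\calS}:=Y(\bm W'_{\calS},\bm W_{-\calS})-Y(\bm W)$ and rewrite the difference-in-difference contrast of \Cref{defi:interaction.explainability} as $I=\Delta_{\{k,k'\}}-\Delta_{\{k\}}-\Delta_{\{k'\}}$. Under \Cref{assu:additive.independent.error} the additive error $E_Y$ cancels in each $\Delta_{\calS}$, so every contrast is a function of $f_Y$ alone, and under \Cref{assu:independent} the variables $(W_k,W_k',W_{k'},W_{k'}')$ are mutually independent. Expanding $f_Y$ in its ANOVA decomposition in the $(W_k,W_{k'})$-coordinates (conditioning on $\bm W_{-\{k,k'\}}$), only the pure interaction component survives in $I$, while the pure main-effect components dominate the cross terms among $\Delta_{\{k\}},\Delta_{\{k'\}},\Delta_{\{k,k'\}}$. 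A direct computation of the resulting variances and covariances then yields $\var(I)=2\bigl[\var(\Delta_{\{k\}})+\var(\Delta_{\{k'\}})-\var(\Delta_{\{k,k'\}})\bigr]$; dividing by $4\var(Y(\bm W))$ and matching with \Cref{defi:total.independent} gives the IE identity. The expected main obstacle is purely notational, namely keeping track of signs and of which coordinates are primed in each contrast while carrying out this ANOVA-based expansion.
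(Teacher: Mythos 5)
Your proposal is correct and matches the paper's own argument: the paper derives \Cref{coro:definition.interaction} in exactly this way, combining the inclusion--exclusion identity $\xi(W_k\wedge W_{k'})=\xi(W_k)+\xi(W_{k'})-\xi(W_k\vee W_{k'})$ (stated as \Cref{prop:variance.decomposition}, with proof deferred to the cited prior work) with three applications of \Cref{lemma:definition}, and your coefficient bookkeeping is right. Your supplementary ANOVA-decomposition sketch verifying the inclusion--exclusion identity itself is sound but goes beyond what the paper does, since the paper simply cites that identity.
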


\subsection{Influence function and one-step correction}\label{sec:IF}

Influence functions are critical in semi-parametric inference for constructing
approximately unbiased and efficient estimators robust to nuisance
estimation~\parencite{bickel1993efficient}. 
Let the quantity of interest $\xi$ be some functional of $\PP$, e.g., the causal explainability for a set of factors.
Let $\mathcal{P}$ be a class of probabilities for the law $\PP$ of $(\bm W_i,Y_i)$, and define its tangent space $\dot{\calP}$ as the closure of the linear span of its scores.
Suppose the functional $\xi$ admits the von Mises expansion
\begin{align}\label{eq:von.mises}
    \xi(\PP_{\varepsilon}) - \xi(\PP) 
    = \int \varphi(\bm w, y; \PP_{\varepsilon}) \, d(\PP_{\varepsilon} - \PP)(\bm w, y) 
    + R_2(\PP_{\varepsilon}, \PP),
\end{align}
for a perturbed distribution $\PP_{\varepsilon}$ of $\PP$, then $\varphi(\bm w,y; \PP)$ is a mean-zero, 
finite-variance function satisfying 
\[\int \varphi(\bm{w}, y; \mathbb{P}) d\mathbb{P}(\bm{w}, y) = 0,\] typically called an influence function.
Influence functions are not unique; however, the projection of any influence function onto the tangent space $\dot{\calP}$ is unique, which is called the \emph{efficient influence function}, denoted by $\varphi_{\EIF}(\bm w,y; \PP)$, and the semi-parametric efficiency bound of $\xi$ is $\var(\varphi_{\EIF}(\bm W,Y; \PP))$.
We emphasize that for the same functional, the efficient influence function depends on the chosen model class $\calP$. 
A smaller class $\calP$ yields a smaller tangent space $\dot{\calP}$ and hence likely a lower semi-parametric efficiency bound.

Influence functions provide a principled way to correct plug-in estimators and construct less biased, more efficient estimators. 
Particularly, let $\widehat{\PP}$ be an estimator of the true distribution $\PP$. 
A natural plug-in estimator of $\xi(\PP)$ is $\xi(\widehat{\PP})$. 
Given an influence function $\varphi$, the associated one-step corrected estimator is
\begin{align}\label{eq:IF.corrected}
    \widehat{\xi} 
    = \xi(\widehat{\PP}) + \PP_n \widehat{\varphi}, \quad \PP_n \widehat{\varphi} := \frac{1}{n}\sum_{i=1}^n 
      \varphi(\bm W_i, Y_i;\widehat{\PP}).
\end{align}
If $\xi(\widehat{\PP})$ is regular and efficient, then the one-step corrected 
estimator $\widehat{\xi}$ is also efficient, in the sense of attaining the 
semi-parametric efficiency bound.




We also review the literature on degenerate nulls (the parameter of interest lies on the boundary of the parameter space).
For degenerate nulls, the influence functions of the estimator vanish. As a result, the estimator typically converges to a degenerate distribution, which can not be used to construct valid confidence intervals.
Currently there are three major approaches to addressing the above problem \parencite{dai2022significance,williamson2023general,verdinelli2024feature}. As noted by \cite{verdinelli2024feature}, all these approaches rely on $O(n^{-1/2})$-level expansions to preserve validity under the null, at the cost of efficiency. In \cite{hudson2023nonparametric}, an estimator is proposed, which converges at rate $n^{-1}$ under the null and at rate $n^{-1/2}$ away from the null. However, the method does not yield valid confidence intervals in the null hypothesis's neighborhood.

\subsection{Fisher's randomization test}\label{sec:randomization.test}

Fisher's randomization test originally evaluates a test statistic over the randomization distribution induced by the assignment mechanism under some sharp null \parencite{fisher1918correlation}.
Explicitly, given a test statistic $T$, we first compute $T(Y,\bm W)$ under the observed assignment $\bm W$.
Under the sharp null, we impute the potential outcomes under the partially sharp null, draw assignments $\bm W^{(b)}$ from the known assignment mechanism, evaluate $T(Y,\bm W^{(b)})$ to form the randomization distribution, and compute the p-value
\begin{align}\label{eq:randomization.p.value}
P := \frac{1+\sum_{b=1}^B \mathbf 1\{\,T(Y,\bm W^{(b)})\ge T(Y,\bm W)\,\}}{1+B}.
\end{align}
With any $B < \infty$, $P$ is always a finite-sample valid $p\text{-value}$ in the sense that $\PP(P \leq \alpha) \leq \alpha$ for all $\alpha \in (0,1)$ \parencite{PhipsonSmyth2010pvalue}.
In Causal ANOVA, the degenerate null (explainability being zero) implies Fisher's sharp null (\Cref{sec:inference}), motivating the use of randomization-based inference.

\section{ESTIMATION}\label{sec:estimation}

We build on the semi-parametric efficiency theory to derive the influence function for the Causal ANOVA quantities~\eqref{defi:total.independent} and then incorporate it into the standard cross-fitting procedure (\Cref{algo:cross.fitting}) \parencite{schick1986asymptotically,klaassen1987consistent} to obtain the one-step correction estimator \parencite{pfanzagl1990estimation,bickel1993efficient}.
Among the one-step correction estimators satisfying the asymptotic distribution established in \Cref{sec:inference} below, the one with a smaller asymptotic variance, i.e., higher efficiency, is preferred.
In particular, the influence function that yields an estimator attaining the semi-parametric efficiency bound regarding a model $\calP$ ($\calP$ denoting a class of candidate joint distributions of $(Y,\bm W)$) is optimal and referred to as the efficient influence function (for $\calP$).

According to \Cref{lemma:definition} and \Cref{coro:definition.interaction}, for the total or interaction explainabilities, it suffices to estimate quantities of the form ${\mathbb{E}\left[\mathbb{E}[Y\mid \bm W_{-\calS}]^2\right]}/{\var\left(Y\right)}$, $\calS \subseteq [K]$, which is pathwise differentiable \parencite{Williamson2021}.

\subsection{Baseline estimator}\label{sec:EIF.np}
To begin with, consider the non-parametric model $\mathcal{P}_{\np}$ for $(Y, \bm{W})$. Under this model, the efficient influence function for $\mathbb{E}[\mathbb{E}(Y \mid \bm{W}_{-\mathcal{S}})^2] / \var(Y)$ takes the form 
(derivation in \Cref{appe:sec:derivation:IF.np}),
\begin{align}\label{eq:IF.np}
\begin{split}   
&\varphi_{\IF}(Y,\bm W; \PP)
:= \frac{2Y\mathbb{E}[Y\mid \bm W_{-\calS}] - \mathbb{E}[Y\mid \bm W_{-\calS}]^2}{\var\left(Y\right)}-\mathbb{E}\left[\mathbb{E}[Y\mid \bm W_{-\calS}]^2\right] \left(\frac{Y - \EE\left[Y\right]}{\var\left(Y\right)}\right)^2.
\end{split}
\end{align}







\subsection{Estimator leveraging independence}\label{sec:EIF.independent}

We leverage the independence among $W_k$ \eqref{assu:independent} to refine $\varphi_{\IF}$ in \eqref{eq:IF.np} to gain efficiency. The key step is to project $\varphi_{\IF}$ onto the tangent space of the restricted model $\dot{\mathcal{P}}_{\mathrm{ind}}$, a proper subspace of $\dot{\mathcal{P}}_{\mathrm{np}}$, to get a more efficient influence function $\varphi_{\EIF}$. The difference $\varphi_{\IF} - \varphi_{\EIF}$, while may be influential under $\mathcal{P}_{\np}$, has zero influence on the target estimand when $W_k$ are independent, and variation along these directions merely inflates the variability. 
By removing them, we do not lose any information of the target estimand under the independence~\eqref{assu:independent} and achieve a reduction in variance. 
\Cref{exam:project} provides an alternative explanation that demonstrates the projection procedure and the resulting efficiency gain. The next result shows the efficient influence function incorporating the independence~\eqref{assu:independent}.

\begin{proposition}\label{lemm:influence.function.conditional.expectation.squared}
Under the assumption of \Cref{lemma:definition}, for any $\calS \subseteq [K]$, the efficient influence function of $\frac{\mathbb{E}\left[\mathbb{E}\left[Y\mid \bm W_{-\mathcal{S}}\right]^2\right]}{\var\left(Y\right)}$ for distribution $\PP$ of $(Y,\bm W)$ with independent treatments, i.e., $\PP_{\bm W} = \prod_{k \in [K]} \PP_{W_k}$, takes the form
\begin{align}\label{eq:EIF}
\begin{split}
    \varphi_{\EIF}(Y,\bm W;\PP) =& \frac{2\left(Y-\mathbb{E}[Y\mid \bm W]\right)\cdot \mathbb{E}[Y\mid \bm W_{-\calS}]}{\var\left(Y\right)}+ \frac{2\sum_{k\in \calS}\mathbb{E}\left[\mathbb{E}[Y\mid \bm W]\cdot\mathbb{E}[Y\mid \bm W_{-\calS}]\mid W_k\right]}{\var\left(Y\right)}\\
&+\frac{\sum_{k\in -\calS}\mathbb{E}\left[\mathbb{E}[Y\mid \bm W_{-\calS}]^2\mid W_k\right]}{\var\left(Y\right)}- \frac{(2|\calS|+|-\calS|)\cdot\mathbb{E}\left[\mathbb{E}[Y\mid \bm W_{-\calS}]^2\right]}{\var(Y)}\\
&-\frac{\varphi^{\var(Y)}_{\EIF}(Y,\bm W;\PP)\cdot\mathbb{E}\left[\mathbb{E}\left[Y\mid \bm W_{-\calS}\right]^2\right]}{\var\left(Y\right)^2},
        \end{split}
        \\
\begin{split}
   \varphi^{\var(Y)}_{\EIF}(Y,\bm W;\PP) = &\left(Y-\EE\left[Y\right]\right)^2-\var\left(Y\mid \bm W\right)-\left(\EE[Y\mid\bm W]-\EE[Y]\right)^2\\
&+\sum_{k\in[K]}\EE\left[(Y-\EE[Y])^2\mid W_k\right]-|K|\cdot\var\left(Y\right).
\end{split}
\end{align}
\end{proposition}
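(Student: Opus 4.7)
The plan is to derive $\varphi_{\EIF}$ by projecting a non-parametric influence function onto the tangent space of the submodel $\calP_{\ind}$. By \Cref{lemma:definition}, the causal functional equals the statistical functional $\psi/V$ with $\psi := \EE[\EE[Y\mid\bm W_{-\calS}]^2]$ and $V := \var(Y)$, so it suffices to derive the EIF of $\psi/V$ under $\calP_{\ind}$. Since $\calP_{\ind}\subset\calP_{\np}$, every non-parametric influence function is also an influence function under independence, and its $L^2(\PP)$-projection onto $\dot{\calP}_{\ind}$ is automatically the unique efficient one. I will apply this projection to the non-parametric influence functions of $\psi$ and $V$ separately and combine them via the delta method
\[
\varphi_{\psi/V}\;=\;\varphi_{\psi}/V\;-\;\psi\,\varphi_{V}/V^{2}.
\]

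The key step is to characterize $\dot{\calP}_{\ind}$. Under \Cref{assu:independent}, the likelihood factorizes as $p_{Y\mid\bm W}(y\mid\bm w)\prod_{k=1}^{K}p_{k}(w_{k})$, so scores split into a conditional-on-$\bm W$ part and $K$ marginal parts. Mutual independence yields $\EE[h_{j}(W_{j})h_{k}(W_{k})]=0$ for mean-zero $h_{j},h_{k}$ with $j\neq k$, so the marginal summands are pairwise orthogonal in $L^{2}(\PP)$ and also orthogonal to the conditional part. Hence
\[
\dot{\calP}_{\ind}\;=\;\dot{\calP}_{Y\mid\bm W}\;\oplus\;\bigoplus_{k=1}^{K}\dot{\calP}_{W_{k}},\qquad \dot{\calP}_{Y\mid\bm W}=\{h:\EE[h\mid\bm W]=0\},\quad \dot{\calP}_{W_{k}}=\{h(W_{k}):\EE[h(W_{k})]=0\},
\]
and the orthogonal projection of a mean-zero $\varphi\in L^{2}(\PP)$ is
\[
\Pi\varphi\;=\;\bigl(\varphi-\EE[\varphi\mid\bm W]\bigr)\;+\;\sum_{k=1}^{K}\EE[\varphi\mid W_{k}],
\]
with the additive piece arising from projecting the $\bm W$-measurable component onto $\bigoplus_{k}\dot{\calP}_{W_{k}}$ (the normal equations decouple under independence).

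Applying $\Pi$ to $\varphi^{\np}_{\psi}=2Y\EE[Y\mid\bm W_{-\calS}]-\EE[Y\mid\bm W_{-\calS}]^{2}-\psi$ yields the conditional-part contribution $2(Y-\EE[Y\mid\bm W])\EE[Y\mid\bm W_{-\calS}]$, while the marginal contributions require a case analysis via the tower property together with $W_{k}\indep\bm W_{-\{k\}}$:
\[
\EE\!\bigl[Y\,\EE[Y\mid\bm W_{-\calS}]\,\big|\,W_{k}\bigr]=
\begin{cases}
\EE\!\bigl[\EE[Y\mid\bm W]\,\EE[Y\mid\bm W_{-\calS}]\,\big|\,W_{k}\bigr], & k\in\calS,\\
\EE\!\bigl[\EE[Y\mid\bm W_{-\calS}]^{2}\,\big|\,W_{k}\bigr], & k\in -\calS,
\end{cases}
\]
together with $\EE[\EE[Y\mid\bm W_{-\calS}]^{2}\mid W_{k}]=\psi$ when $k\in\calS$ (since $W_{k}\indep\bm W_{-\calS}$). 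Summing and dividing by $V$ recovers the first three lines of \eqref{eq:EIF}. The same template applied to $\varphi^{\np}_{V}=(Y-\EE[Y])^{2}-V$, combined with the identity $\EE[(Y-\EE[Y])^{2}\mid\bm W]=\var(Y\mid\bm W)+(\EE[Y\mid\bm W]-\EE[Y])^{2}$, reproduces the stated $\varphi^{\var(Y)}_{\EIF}$; the delta method then assembles \eqref{eq:EIF}.

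The main obstacle is the bookkeeping induced by the $k\in\calS$ versus $k\in -\calS$ split: when $k\in\calS$, $\EE[Y\mid\bm W_{-\calS}]$ does not involve $W_{k}$ and $W_{k}$ integrates out via independence, whereas when $k\in -\calS$ the variable $W_{k}$ is a coordinate of $\bm W_{-\calS}$ and survives conditioning as a quadratic; this asymmetry is exactly what produces the two structurally different sums in \eqref{eq:EIF}. A secondary subtlety is verifying that $\sum_{k}\EE[\cdot\mid W_{k}]$ really is the $L^{2}$-projection onto $\bigoplus_{k}\dot{\calP}_{W_{k}}$, which relies on mutual independence of the $W_{k}$'s so that the cross-terms in the additive normal equations vanish.
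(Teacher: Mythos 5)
Your proposal is correct and follows essentially the same route as the paper: derive the non-parametric influence functions of the numerator $\EE[\EE[Y\mid\bm W_{-\calS}]^2]$ and of $\var(Y)$, project each onto the orthogonally decomposed tangent space $\dot{\calP}_{\ind}=\dot{\calP}_{Y\mid\bm W}\oplus\bigoplus_{k}\dot{\calP}_{W_k}$ via $\varphi\mapsto(\varphi-\EE[\varphi\mid\bm W])+\sum_k\EE[\varphi\mid W_k]$, handle the $k\in\calS$ versus $k\in-\calS$ cases by the tower property and independence, and assemble the ratio by the delta method. The only cosmetic difference is that the paper derives the non-parametric influence function explicitly via the $\mathbb{IF}$ operator calculus and spends auxiliary lemmas justifying why the $\bm W_{\calS}$-marginals and the variance term must also be projected, which your argument subsumes by projecting everything onto $\dot{\calP}_{\ind}$ directly.
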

The proof of \Cref{lemm:influence.function.conditional.expectation.squared} relies on explicitly characterizing the tangent space $\dot{\calP}_{\ind}$ under~\eqref{assu:independent} and then performing the projection. 
Details are provided in \Cref{appe:sec:proof}.
In \Cref{sec:simulation}, we demonstrate the efficiency gain of $\varphi_{\EIF}$ over $\varphi_{\IF}$ in simulated datasets.


\begin{remark}\label{exam:project}According to \textit{Proposition 1.} in Chapter 3 of \cite{bickel1993efficient}, the efficient influence function can be decomposed as $\varphi_{\EIF} = \varphi - \Pi\{\varphi\mid \dot{\mathcal{P}}_{\ind}^\bot\}$, where $\varphi$ is any influence function and $\varphi_{\EIF}$ is orthogonal to $ \Pi\{\varphi\mid \dot{\mathcal{P}}_{\ind}^\bot\}$. 
    The estimator based on $\varphi_{\IF}$, $\varphi_{\EIF}$ has asymptotic variance $\var(\varphi_{\IF})$, $\var(\varphi_{\EIF})$, respectively, for any $\PP$, $\var(\varphi_{\IF})
        = \var(\varphi_{\EIF}) + \var\left(\Pi\{\varphi\mid \dot{\mathcal{P}}_{\ind}^\bot\}\right)
        \ge \var(\varphi_{\EIF})$.

\end{remark}

\subsection{Estimators leveraging additional structure}\label{sec:EIF.additional}

We briefly discuss the possibility of further improving the efficiency of $\varphi_{\EIF}$ in \eqref{eq:EIF} by leveraging additional structures of $\calP$.

If we further include the additive and independent noise assumption~\eqref{assu:additive.independent.error}, known as the location-shift regression model \parencite{tsiatis2006semiparametric}, $\varphi_{\EIF}$ can be refined accordingly and the improved influence function requires the distribution of the error term $E_Y$ (in \eqref{assu:additive.independent.error}).
However, $E_Y$ is unobserved, and its distribution is difficult to estimate with sufficient accuracy to ensure the asymptotic distribution in \Cref{sec:inference}.
Therefore, the efficiency gain from incorporating~\eqref{assu:additive.independent.error} is primarily conceptual rather than practical, and we therefore continue with $\varphi_{\EIF}$ in what follows (see Chapter 5 of \cite{tsiatis2006semiparametric} for a more detailed discussion).

If $\PP_{\bm W}$ is known, $\varphi_{\EIF}$ can be further improved, and we provide the derivation of this efficient influence function in \Cref{remark: propensity.2}.
However, in many real-data scenarios, including the application in \Cref{sec:real.data}, the distribution $\PP_{\bm W}$ is unavailable, and thus we regard $\varphi_{\EIF}$ as more relevant and useful.

\begin{algorithm}[tbp]
\caption{Cross-fitting of one-step correction estimator}
\label{algo:cross.fitting}
\begin{algorithmic}
\State\textbf{Input:} Data $\{\bm W_i,Y_i\}_{i=1}^n$, number of folds $L \ge 2$, an influence function $\varphi$, nuisance learners for ${\PP}$, ${\mu}(\bm w)=\EE[Y\mid \bm W=\bm w]$, ${\nu}(\bm w)=\EE[Y^2\mid \bm W=\bm w]$. 

\State\textbf{Partition:} Partition the sample into $L$ folds $\{\calI_\ell\}_{\ell=1}^L$ with (approximately) equal sizes $n_\ell=|\calI_\ell|$.

\For{$\ell=1,\dots,L$}

\State \textbf{Nuisance estimation:} On $\calI_{-\ell}$, obtain nuisance estimators $\widehat\mu_{-\ell}(\bm w)$ for $\mu(\bm w)$, $\widehat\nu_{-\ell}(\bm w)$ for $\nu(\bm w)$, $\widehat{p}_{W_k,-\ell}$ for $p_{W_k}$, $k=1,\dots,K$, and compute
$\widehat{p}_{\bm W,-\ell}:=\prod_{k=1}^K \widehat{p}_{W_k,-\ell}$.

\State \textbf{One-step correction:} On the held-out fold $\calI_\ell$, compute
\begin{align}\label{eq:crossfit}
\widehat\xi_{\ell}
:=
\frac{1}{n_\ell}\sum_{i\in \calI_\ell}\widehat{\xi}_{i},\quad
\widehat{\xi}_{i}:=\xi\!\big(\widehat{\PP}_{-\ell}\big)
\;+\;
\varphi\!\big(Y_i,\bm W_i;\widehat{\PP}_{-l}\big),
\end{align}
where the details $\xi(\widehat{\PP}_{-\ell})$ and
$\varphi(Y_i, \bm W_i; \widehat{\PP}_{-l})$ are given in \Cref{tab:EIF}.

\EndFor

\State\textbf{Aggregation.} Average the estimates over folds $\widehat{\xi}
:=
\frac{1}{n}\sum_{\ell=1}^L n_l\widehat\xi_{\ell}$.
Compute the variance estimator $\widehat{\sigma}^2 :=\frac{1}{n}\sum_{i=1}^n
\big(\widehat{\xi}_{i} - \widehat{\xi}\big)^2$. 

\State \textbf{Output.} Estimator $\widehat{\xi}$ and its estimated standard error $\widehat{\sigma}/\sqrt{n}$.
\end{algorithmic}
\end{algorithm}

\section{INFERENCE}\label{sec:inference}

\subsection{Asymptotic distribution of one-step correction estimator}\label{sec:asymptotic.distribution}

We derive the asymptotic distribution of the one-step correction estimator in \Cref{algo:cross.fitting} using $\varphi_{\IF}$ in \eqref{eq:IF.np} and $\varphi_{\EIF}$) in \Cref{lemm:influence.function.conditional.expectation.squared}, respectively. We define $\widehat{\varphi}_{\IF}$ as the influence function with plug-in nuisance estimators, i.e. $\varphi_{\IF}(Y,\bm W; \widehat{\PP})$.
Similarly for $\widehat{\varphi}_{\EIF}$. When unambiguous, $\|\cdot\|$ means the $L_2(\PP)$ norm.

\begin{proposition}[One-step correction estimator based on $\varphi_{\IF}$]\label{prop:limiting.np}
We assume the following.
 \begin{enumerate} 
    \item [(C1.)] (Boundedness) There exists $C < \infty$ such that $|Y|$, $|\widehat{\mu}|$, $|\mu| \leq C$, $\|p_{\bm W}\|_\infty \leq C$ and $\|\widehat{p}_{\bm W}\|_\infty \leq C$, a.s..
    
    \item [(C2.)] (Consistency of nuisance estimators) 
    $\|\widehat{\mu}-\mu\| = o_{\PP}(1)$, $\|\widehat{p}_{\bm W} - p_{\bm W}\| = o_{\PP}(1)$.
    
    \item [(C3.)] (Convergence rate of nuisance estimators) 
    $\|\widehat{\mu}-\mu\|= o_{\PP}(n^{-1/4})$, $\|\widehat{p}_{\bm W} - p_{\bm W}\| = o_{\PP}(n^{-1/4})$.
\end{enumerate}  
The following asymptotic approximation holds
\[
\sqrt{n}\big(\widehat{\xi}_{\IF}-\xi\big) 
= \sqrt{n}\,\mathbb{P}_n \widehat{\varphi}_{\IF} + o_{\mathbb{P}}(1),
\]
which implies $\widehat{\xi}_{\IF}$ is a regular and asymptotic linear estimator.
Moreover, if ${\var}(\widehat{\varphi}_{\IF})<\infty$, then 
\[
\sqrt{n}\big(\widehat{\xi}_{\IF}-\xi\big) \stackrel{d}{\to}
\mathcal{N}\!\big(0, {\var}(\varphi_{\IF})\big).
\]

\end{proposition}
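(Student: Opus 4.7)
The plan is the standard one-step correction argument combined with cross-fitting. Adding and subtracting the oracle influence function, I would decompose
\begin{align*}
\sqrt{n}(\widehat{\xi}_{\IF}-\xi)
= \underbrace{\sqrt{n}\,\PP_n \varphi_{\IF}}_{(A)}
+ \underbrace{\sqrt{n}\,(\PP_n-\PP)(\widehat{\varphi}_{\IF}-\varphi_{\IF})}_{(B)}
+ \underbrace{\sqrt{n}\,\big\{\xi(\widehat{\PP})-\xi(\PP)+\EE_{\PP}[\widehat{\varphi}_{\IF}\mid\widehat{\PP}]\big\}}_{(C)},
\end{align*}
where the conditional expectation in $(C)$ is taken over a fresh draw from $\PP$ with the nuisance $\widehat{\PP}$ held fixed. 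Once $(B),(C)=o_{\PP}(1)$, the asymptotic linearization $\sqrt{n}(\widehat{\xi}_{\IF}-\xi)=\sqrt{n}\PP_n\widehat{\varphi}_{\IF}+o_{\PP}(1)$ follows, and the classical CLT applied to $(A)$, combined with the assumed finiteness of $\var(\widehat{\varphi}_{\IF})$, delivers the normal limit.

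For term $(B)$, I would exploit the cross-fitting in \Cref{algo:cross.fitting}: on fold $\calI_\ell$, the function $\widehat{\varphi}_{\IF}$ depends only on $\calI_{-\ell}$ and is therefore independent of the evaluation sample. Conditional on $\calI_{-\ell}$, Chebyshev's inequality gives $\sqrt{n_\ell}(\PP_{n_\ell}-\PP)(\widehat{\varphi}_{\IF}-\varphi_{\IF})=O_{\PP}\big(\|\widehat{\varphi}_{\IF}-\varphi_{\IF}\|_{L_2(\PP)}\big)$. Boundedness (C1) allows me to linearize the non-smooth ingredients of $\varphi_{\IF}$ in \eqref{eq:IF.np} (the $1/\var(Y)$ and $1/\var(Y)^2$ factors, and the quadratic $\mu_{-\calS}^2$) and dominate this $L_2$ norm by $\|\widehat{\mu}_{-\calS}-\mu_{-\calS}\|+\|\widehat p_{\bm W}-p_{\bm W}\|$ up to $O_{\PP}(n^{-1/2})$ fluctuations of the scalar nuisances $\EE[Y],\var(Y)$. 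Writing $\widehat{\mu}_{-\calS}(\bm w_{-\calS})=\int \widehat{\mu}(\bm w_{-\calS},\bm w_\calS)\prod_{k\in\calS}d\widehat{\PP}_{W_k}(w_k)$ shows that consistency (C2) passes from $\widehat{\mu},\widehat p_{\bm W}$ to $\widehat{\mu}_{-\calS}$; averaging over folds yields $(B)=o_{\PP}(1)$.

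The main obstacle is the remainder term $(C)$. Here I would Taylor-expand $\xi(\widehat{\PP})-\xi(\PP)$ in the nuisances $\mu, p_{\bm W}$ (together with the scalar $\EE[Y],\var(Y)$) and verify by direct algebra, using the von Mises identity \eqref{eq:von.mises}, that the first-order variations cancel exactly against the correction $\EE_{\PP}[\widehat{\varphi}_{\IF}\mid\widehat{\PP}]$. The surviving second-order piece should obey
\[
|(C)|\lesssim \sqrt{n}\Big(\|\widehat{\mu}-\mu\|_2^2+\|\widehat p_{\bm W}-p_{\bm W}\|_2^2+\|\widehat{\mu}-\mu\|_2\cdot\|\widehat p_{\bm W}-p_{\bm W}\|_2\Big)+o_{\PP}(1),
\]
bounded by Cauchy--Schwarz with uniform control from (C1); the rate condition (C3) then forces $(C)=o_{\PP}(1)$. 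The delicate part is the book-keeping: the $1/\var(Y)$ ratios must be linearized around $\var(Y)$ using boundedness from below (which follows from C1 together with a non-degeneracy step), and the $\EE[\mu_{-\calS}^2]$ factor in $\varphi_{\IF}$ must be expanded to second order separately along the $\widehat\mu-\mu$ and $\widehat p_{\bm W}-p_{\bm W}$ directions to expose the product structure. Once $(C)$ is pinned down, the stated asymptotic linearity, regularity, and Gaussian limit follow immediately.
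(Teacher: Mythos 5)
Your proposal is correct and takes essentially the same route as the paper: the paper's own proof of this proposition is a one-line citation to Theorem~1 of \cite{Williamson2021}, whose argument --- and the paper's fully written-out proof of the companion result for $\varphi_{\EIF}$ (\Cref{prop:limiting}) --- is exactly your decomposition into a CLT term, a cross-fitted empirical-process term controlled via \Cref{l2k}, and a second-order remainder, followed by the delta method for the ratio with $\var(Y)$. The only points worth recording are that for the numerator the remainder collapses exactly to $-\|\widehat{\mu}_{-\calS}-\mu_{-\calS}\|^{2}$ (so the second-order bound you anticipate is in fact an identity), and that $\var(Y)>0$ must be assumed for the linearization of the ratio, as you correctly flag.
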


\begin{proposition}[One-step correction estimator based on $\varphi_{\EIF}$]\label{prop:limiting}
Under \Cref{assu:independent} and the assumptions of \Cref{prop:limiting.np} where the condition on the joint distribution estimator $\|\widehat{p}_{\bm{W}} - p_{\bm{W}}\|$ is replaced by conditions on the marginal distribution estimators $\|\widehat{p}_{W_k,-l} - p_{W_k}\|$, $k \in [K]$, and further assume $||\widehat{\upsilon}-\upsilon|| = o_\PP(n^{-1/4})$ as well as \begin{enumerate}

 \item [(C4.)] (Positivity) There exists $\underline c>0$ such that $\inf\limits_{k}\  p_{\bm w_k}\ \ge\ \underline c$ and $\inf\limits_{k}\  \widehat{p}_{\bm w_k}\ \ge\ \underline c$,
  \item [(C5.)] (Donsker class) There exists a nonrandom function class $\mathcal F$
that is $\PP$--Donsker and satisfies
\[
\PP\Big((p_{W_k}-\widehat p_{W_k,-l})^2 \in \mathcal F\Big)\longrightarrow 1,
\]

 \end{enumerate}
the following holds
\begin{align*}
    \sqrt{n}\big(\widehat{\xi}_{\EIF}-\xi\big) 
= \sqrt{n}\,\mathbb{P}_n \widehat{\varphi}_{\EIF} + o_{\mathbb{P}}(1),\quad 
\sqrt{n}\big(\widehat{\xi}_{\EIF}-\xi\big) \stackrel{d}{\to}
\mathcal{N}\!\big(0, {\var}(\varphi_{\EIF})\big).
\end{align*}
\end{proposition}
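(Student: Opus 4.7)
The plan is to carry out the standard one-step correction analysis in three stages: a von Mises decomposition of the one-step error into an empirical-process term plus a second-order remainder; handling the empirical-process term via the cross-fitting structure of \Cref{algo:cross.fitting} together with condition (C5); and bounding the second-order remainder using the $n^{-1/4}$ nuisance rates in (C3) and the new rate on $\widehat{\upsilon}$. The classical CLT and Slutsky's lemma then yield the Gaussian limit, and the asymptotic linearization stated in the proposition follows.

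Concretely, fix fold $\ell$ and abbreviate $\widehat{\PP}_{-\ell}$ to $\widehat{\PP}$ and $\widehat{\varphi}_{\EIF,-\ell}$ to $\widehat{\varphi}$. The von Mises expansion, applied separately to the numerator $\EE[\EE[Y\mid \bm W_{-\mathcal S}]^2]$ and the denominator $\var(Y)$ and combined through the ratio rule, gives $\xi(\widehat{\PP}) - \xi = -\PP\widehat{\varphi} + R_2(\widehat{\PP},\PP)$. Combined with the one-step identity this yields
\begin{align*}
\widehat{\xi}_\ell - \xi
\;=\; (\mathbb{P}_{n_\ell} - \PP)\widehat{\varphi} + R_2(\widehat{\PP},\PP)
\;=\; \mathbb{P}_{n_\ell}\varphi_{\EIF} + (\mathbb{P}_{n_\ell} - \PP)(\widehat{\varphi} - \varphi_{\EIF}) + R_2(\widehat{\PP},\PP),
\end{align*}
using $\PP\varphi_{\EIF} = 0$. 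For the middle, empirical-process term I would condition on the training fold $\mathcal I_{-\ell}$ so that $\widehat{\varphi}$ becomes deterministic; boundedness (C1), positivity (C4), and consistency (C2) yield $\|\widehat{\varphi}-\varphi_{\EIF}\|_{L_2(\PP)} = o_\PP(1)$, and a conditional Chebyshev bound then gives $(\mathbb{P}_{n_\ell} - \PP)(\widehat{\varphi} - \varphi_{\EIF}) = o_\PP(n^{-1/2})$. Condition (C5) enters when handling the squared marginal-density errors $(p_{W_k} - \widehat{p}_{W_k,-\ell})^2$ that appear after expanding the projection terms of $\varphi_{\EIF}$: stochastic equicontinuity over the Donsker class $\mathcal F$ supplies the required empirical-process control.

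The main obstacle is the second-order remainder $R_2$. Because $\varphi_{\EIF}$ in \Cref{lemm:influence.function.conditional.expectation.squared} carries the extra projection terms $\EE[\EE[Y\mid \bm W_{-\mathcal S}]^2\mid W_k]$ and $\EE[\EE[Y\mid \bm W]\,\EE[Y\mid \bm W_{-\mathcal S}]\mid W_k]$, the expansion generates many more bilinear residual terms in the nuisance errors $\widehat{\mu}-\mu$, $\widehat{\upsilon}-\upsilon$, and each $\widehat{p}_{W_k}-p_{W_k}$ than in the $\varphi_{\IF}$ case treated in \Cref{prop:limiting.np}. I would enumerate these by expanding the numerator and denominator contributions separately, apply Cauchy--Schwarz to obtain bounds of the form $\|\widehat{\mu}-\mu\|\cdot\|\widehat{p}_{W_k}-p_{W_k}\|$, $\|\widehat{\upsilon}-\upsilon\|\cdot\|\widehat{p}_{W_k}-p_{W_k}\|$, and $\|\widehat{\mu}-\mu\|^2$, and invoke the $n^{-1/4}$ rate hypotheses to force each product to be $o_\PP(n^{-1/2})$. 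Positivity (C4) is essential here because factors of $1/p_{W_k}$ appear when differentiating the product-structured joint density $p_{\bm W} = \prod_k p_{W_k}$. Aggregating the three contributions across the $L$ folds then gives $\sqrt{n}(\widehat{\xi}_{\EIF} - \xi) = \sqrt{n}\,\mathbb{P}_n\varphi_{\EIF} + o_\PP(1)$, after which the classical CLT applied to the mean-zero \iid sum $\mathbb{P}_n\varphi_{\EIF}$ (whose variance is finite by (C1)) delivers the Gaussian limit $\mathcal N(0, \var(\varphi_{\EIF}))$.
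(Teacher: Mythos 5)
Your overall plan matches the paper's proof: the same three-term decomposition into $\mathbb{P}_n^\ell\varphi_{\EIF}$, an empirical-process term $(\mathbb{P}_n^\ell-\PP)(\widehat{\varphi}-\varphi_{\EIF})$ controlled by cross-fitting and conditional Chebyshev, and a second-order remainder bounded by products of nuisance errors at the $o_\PP(n^{-1/4})$ rate, followed by the delta method for the ratio and the CLT. The paper likewise treats the numerator and the $\var(Y)$ denominator separately and exploits the exact cancellation between the projection terms $\sum_k\widehat{\EE}_{-\ell}[\widehat{\mu}_{-\ell}^2\mid W_k]-K\widehat{\EE}_{-\ell}[\widehat{\mu}_{-\ell}^2]$ and the first-order density-error terms, leaving only bilinear remainders.

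The one point you get wrong is where condition (C5) enters. You place it in the empirical-process term, but cross-fitting already removes any need for a Donsker condition there --- that is precisely what the conditional-Chebyshev argument (the paper's \Cref{l2k}) buys, and only $\|\widehat{\varphi}-\varphi_{\EIF}\|=o_\PP(1)$ is required. In the paper, (C5) is used inside the \emph{remainder} bound: after the telescoping expansion, the second-order density term is naturally of the form
\[
O_\PP\Big(\big\|\max_{k}(p_{W_k}-\widehat p_{W_k,-\ell})\big/(\widehat p_{W_k}\widehat p_{W_{k'}})\big\|_{L_2(\widehat{\PP}_{-\ell})}^{2}\Big),
\]
i.e.\ a squared norm under the training-fold empirical measure rather than under $\PP$. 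The Donsker hypothesis on the class containing $(p_{W_k}-\widehat p_{W_k,-\ell})^2$, via asymptotic equicontinuity of $\sqrt{n_{-\ell}}(\widehat{\PP}_{-\ell}-\PP)$ (the paper's \Cref{L2swap-train}), is what transfers this back to $\|p_{W_k}-\widehat p_{W_k,-\ell}\|_{L_2(\PP)}^2=o_\PP(n^{-1/2})$ without losing the rate; (C4) supplies the uniform lower bound $\underline c^{-4}$ on the reciprocal density factors in the same step. Your write-up would not close this gap as stated, since Cauchy--Schwarz against the population measure alone does not control the estimated-measure norm that actually appears. This is a localized fix rather than a flaw in the architecture of the argument.
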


Proofs of \Cref{prop:limiting.np,prop:limiting} are provided in \Cref{appe:sec:proof}.
Condition (C3) implies that, for $\widehat{\xi}_{\IF}$, $\widehat{\xi}_{\EIF}$,
the nuisance functions only need to be estimated at rate $o_\PP(n^{-1/4})$ to guarantee the $o_\PP(n^{-1/2})$ convergence of the estimator of the causal estimand. 
Comparing \Cref{prop:limiting.np} and \Cref{prop:limiting}, there are several differences: (1) the asymptotic variance satisfies $\var(\widehat{\varphi}_{\EIF}) \le \var(\widehat{\varphi}_{\IF})$, implying that $\widehat{\xi}_{\EIF}$ is more efficient; (2) \Cref{prop:limiting} relaxes the requirements (C2, C3) on the estimator of the joint distribution of all factors $\widehat{p}_{\bm{W}}$ to the marginal distributions $\widehat{p}_{W_k,-l}$, $k \in [K]$, which are easier to satisfy; (3) \Cref{prop:limiting} additionally requires estimating $\upsilon$ (i.e., $\EE[Y^2\mid \bm W=\bm w]$), which enters $\varphi_{\EIF}^{\var(Y)}$ for $\var(Y)$.

\subsection{Non-degenerate null}\label{sec:inference.non.degenerate.null}

For $\var(\varphi) > 0$, where $\varphi$ may denote either $\varphi_{\IF}$ or $\varphi_{\EIF}$, 
\Cref{prop:limiting.np,prop:limiting} imply that we can construct the confidence interval for any significance level $\alpha \in (0,1)$ as
\begin{align}\label{eq:confidence.interval}
    \Big[\,\widehat{\xi}-\Phi^{-1}(1-\alpha/2)\frac{\widehat{\sigma}}{\sqrt{n}},\ 
          \widehat{\xi}+\Phi^{-1}(1-\alpha/2)\frac{\widehat{\sigma}}{\sqrt{n}}\,\Big].
\end{align}

\begin{corollary}\label{coro:coverage}
Under the assumptions of \Cref{prop:limiting}, if $\var(\varphi_{\EIF}) > 0$, 
then the confidence interval in \eqref{eq:confidence.interval} achieves asymptotic coverage. 
Similarly for $\widehat{\xi}_{\IF}$.
\end{corollary}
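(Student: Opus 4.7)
The plan is to combine the asymptotic normality from \Cref{prop:limiting} with a consistency result for the variance estimator $\widehat{\sigma}^2$ and apply Slutsky's theorem. Concretely, \Cref{prop:limiting} yields
\[
\sqrt{n}\big(\widehat{\xi}_{\EIF}-\xi\big)\;\stackrel{d}{\to}\;\mathcal{N}\!\big(0,\var(\varphi_{\EIF})\big),
\]
which is nondegenerate exactly when $\var(\varphi_{\EIF})>0$. If additionally $\widehat{\sigma}^2\stackrel{p}{\to}\var(\varphi_{\EIF})$, then by Slutsky,
$\sqrt{n}(\widehat{\xi}_{\EIF}-\xi)/\widehat{\sigma}\stackrel{d}{\to}\mathcal{N}(0,1)$,
and the usual Wald inversion delivers asymptotic coverage of the interval in \eqref{eq:confidence.interval}.

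The main work is thus establishing $\widehat{\sigma}^2\stackrel{p}{\to}\var(\varphi_{\EIF})$. Writing
$\widehat{\sigma}^2=\tfrac{1}{n}\sum_{i=1}^{n}\widehat{\xi}_i^{\,2}-\widehat{\xi}^{\,2}$
with $\widehat{\xi}_i=\xi(\widehat{\PP}_{-\ell(i)})+\varphi_{\EIF}(Y_i,\bm W_i;\widehat{\PP}_{-\ell(i)})$,
I first note that $\widehat{\xi}\stackrel{p}{\to}\xi$ is immediate from \Cref{prop:limiting}, so $\widehat{\xi}^{\,2}\stackrel{p}{\to}\xi^{2}$. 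For the first term I would work fold-by-fold, conditioning on the training fold $\calI_{-\ell}$ and treating $\widehat{\PP}_{-\ell}$ as fixed. On the held-out fold, the summands $\widehat{\xi}_i^{\,2}$ are i.i.d.\ with mean $\EE\big[(\xi(\widehat{\PP}_{-\ell})+\varphi_{\EIF}(Y,\bm W;\widehat{\PP}_{-\ell}))^2\,\big|\,\calI_{-\ell}\big]$, and under the boundedness (C1), the positivity (C4), and the form of $\varphi_{\EIF}$ in \Cref{lemm:influence.function.conditional.expectation.squared}, each $\widehat{\xi}_i$ is uniformly bounded, so a conditional law of large numbers gives
\[
\tfrac{1}{n_\ell}\sum_{i\in\calI_\ell}\widehat{\xi}_i^{\,2}
-\EE\big[\,\widehat{\xi}_i^{\,2}\,\big|\,\calI_{-\ell}\big]\;\stackrel{p}{\to}\;0.
\]
Next I would show the conditional expectation converges in probability to $\EE[(\xi+\varphi_{\EIF}(Y,\bm W;\PP))^2]$. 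This reduces to $L^2(\PP)$ convergence of the plug-in influence function $\varphi_{\EIF}(Y,\bm W;\widehat{\PP}_{-\ell})$ to $\varphi_{\EIF}(Y,\bm W;\PP)$, which follows by combining the consistency of $\widehat\mu$, $\widehat\upsilon$, and $\widehat p_{W_k}$ (from (C2) and the strengthened rate in \Cref{prop:limiting}) with the continuous mapping theorem (applicable because (C4) bounds denominators uniformly away from zero and (C1) bounds numerators) and dominated convergence.

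Averaging over folds and combining gives $\widehat{\sigma}^2\stackrel{p}{\to}\var(\varphi_{\EIF})$, at which point Slutsky concludes the argument. The identical chain of reasoning applies to $\widehat{\xi}_{\IF}$ using \Cref{prop:limiting.np} in place of \Cref{prop:limiting} and $\varphi_{\IF}$ in place of $\varphi_{\EIF}$. I expect the main (though still routine) obstacle to be the $L^2$ convergence of the plug-in influence function, since $\varphi_{\EIF}$ contains the ratio $1/\var(Y)$ together with several conditional expectations: handling these terms cleanly requires the boundedness conditions (C1) and (C4) to convert probability-level consistency of nuisances into the $L^2$ control needed for the second-moment convergence.
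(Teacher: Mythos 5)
Your proposal is correct and follows exactly the route the paper intends: the paper gives no explicit proof of \Cref{coro:coverage}, treating it as an immediate consequence of the asymptotic normality in \Cref{prop:limiting} combined with the standard Slutsky/Wald-inversion argument, and your write-up supplies precisely that chain together with the consistency $\widehat{\sigma}^2\stackrel{p}{\to}\var(\varphi_{\EIF})$ of the variance estimator from \Cref{algo:cross.fitting}, a step the paper leaves implicit. The only detail worth flagging is that uniform boundedness of $\widehat{\xi}_i$ also needs the plug-in denominator $\widehat{\var}(Y)$ to be bounded away from zero; this is not literally listed in (C1)--(C5) but follows with probability tending to one from consistency of the nuisance estimators and $\var(Y)>0$, so it does not affect the validity of your argument.
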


\subsection{Degenerate null}\label{sec:inference.degenerate.null}

When $\xi(W_{\mathcal{S}})=0$ for some $\mathcal{S} \subseteq [K]$, the leading term in the expansion of the estimation error vanishes (\Cref{prop:limiting.np,prop:limiting}), and the asymptotic distribution cannot be directly used to construct confidence interval\footnote{
For interaction nulls, zero interaction explainability does not necessarily imply a zero influence function, and the limiting distribution remains nondegenerate and valid for inference. 
Example: consider independent $W_1, W_2,E_Y \sim \mathcal{N}(0,1)$ and $Y = W_1 + W_2 + E_Y$, we have 
$\xi(W_1)$, $\xi(W_2)$, $\xi(W_1 \vee W_2) > 0$ while $\xi(W_1 \wedge W_2) = 0$. 
Then $\varphi_{\IF} = {-2W_1W_2}/{\var(Y)}$ for $\xi(W_1 \wedge W_2)$, which does not vanish.
}.
Here, we exploit the structure of the null and the independence~\eqref{assu:independent} to introduce a randomization-based inference procedure. 
Compared with existing approaches to null degeneracy (\Cref{sec:IF}), our method is finite-sample valid and avoids sample splitting or added noise (and preventing the associated loss of power).

As the first step, we establish an equivalent statement of the zero total explainability null.
\begin{proposition}\label{prop:sharp.null}
Assume \Cref{assu:independent}, the null $\xi(W_{\mathcal{S}}) = 0$ is equivalent to 
\[
Y(w_{\mathcal{S}}, w_{-\mathcal{S}}) = Y(w_{\mathcal{S}}', w_{-\mathcal{S}}), 
\]
for any pairs $(w_{\mathcal S}, w_{-\mathcal S})$ and $(w^\prime_{\mathcal S}, w_{-\mathcal S})$ that both occur with positive probability in the support of $W$ for discrete $W$, and almost surely for continuous $W$.
\end{proposition}
\noindent 
Proof of \Cref{prop:sharp.null} is provided in \Cref{appe:sec:proof}.
\Cref{prop:sharp.null} implies that, holding components $W_{-\mathcal{S}}$ fixed, changing $W_{\mathcal{S}}$ does not affect the outcome $Y$. 
In finite samples, this translates to the partially sharp null \parencite{fisher1918correlation,zhang_2023_randomization_test},
\begin{align}\label{prop:sharp.null.finite}
    Y_i(w_{\mathcal{S}}, w_{-\mathcal{S}}) = Y_i(w_{\mathcal{S}}', w_{-\mathcal{S}}), 
    \quad \forall i \in [n].
\end{align}
Under the null~\eqref{prop:sharp.null.finite}, for any hypothetical reassignment $W_{\calS,i}^*$, we can impute the corresponding potential outcome as
$Y_i\big(W_{\calS,i}^*, W_{-\calS,i}\big) = Y_i$.

Randomization tests (\Cref{sec:randomization.test}) are standard for the partially sharp null~\eqref{prop:sharp.null.finite}.
By \Cref{prop:sharp.null}, we propose to adopt randomization tests for degenerate nulls.
Explicitly, we choose a computable test statistic that tends to be large under the alternative $\xi(W_{\calS})>0$, such as an estimator of $\xi(W_{\mathcal{S}})$. 
We then generate $W_{\calS,i}^*$ from the same distribution as $W_{\calS,i}$ conditional on $W_{-\calS,i}$, impute the potential outcomes using \eqref{prop:sharp.null.finite}, and recompute the test statistic based on $W_{\calS,i}^*$, the imputed outcomes, and compare the observed test statistic to those using $W_{\calS,i}^*$ to compute the p-value.

Note that standard randomization tests require knowing the distribution of $\bm{W}$ in order to generate $W_{\calS,i}^*$. 
However, the joint distribution of $\bm W$ may not be available in our setting. 
Fortunately, under the independence assumption~\eqref{assu:independent} and the i.i.d.\ nature of the observations, we have 
\begin{align}\label{eq:permutation.distribution}
    \big(\bm W_{\mathcal{S},\rho(i)}, \bm W_{-\calS,i}\big) \overset{d}{=} 
    \big(\bm W_{\calS,i}, \bm W_{-\calS,i}\big),
\end{align}
where $\rho$ denotes a permutation of the indices $i \in [n]$. 
Eq.~\eqref{eq:permutation.distribution} implies that, to perform the randomization test, we can generate $(\bm W_{\calS,i}^{*}, \bm W_{-\calS,i})$ by simply permuting $\bm W_{\calS,i}$ while keeping $\bm W_{-\calS,i}$ fixed.
We summarize the procedure in \Cref{algo:randomization.test}.  
The result below establishes the validity of the randomization test, and the proof is provided in \Cref{appe:sec:proof}.
\begin{proposition}\label{prop:coverage.degenerate}
Assume~\Cref{assu:independent},
under the null $\xi(\bm W_{\mathcal{S}})=0$, the randomization test in \Cref{algo:randomization.test}
controls type~I error.
\end{proposition}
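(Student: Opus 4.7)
The plan is to reduce the degenerate null $\xi(\bm W_{\calS})=0$ to the finite-sample partially sharp null \eqref{prop:sharp.null.finite} via \Cref{prop:sharp.null}, and then apply the standard exchangeability argument for Fisher's randomization test. The key observation is that once the sharp null holds, the observed outcome $Y_i$ is measurable with respect to $\bm W_{-\calS,i}$ and the exogenous potential-outcome draw only, so permuting $\bm W_{\calS}$ across units (while holding $\bm W_{-\calS}$ and $Y$ fixed) produces an imputation consistent with the null data-generating mechanism.

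First I would translate the population sharp null from \Cref{prop:sharp.null} into the finite-sample statement \eqref{prop:sharp.null.finite}: under the super-population model \eqref{eq:super.population}, $\bm Y_i(\cdot)$ are i.i.d., so the event that $\bm Y_i(\cdot)$ is constant in its $\calS$-coordinates has probability one under $\PP_{Y(\cdot)}$, and hence holds for every $i\in[n]$ almost surely. It follows that for any hypothetical reassignment $\bm W_{\calS,i}^{*}$, the imputed outcome $Y_i(\bm W_{\calS,i}^{*}, \bm W_{-\calS,i})$ equals the observed $Y_i$, so each permuted statistic $T^{(b)} := T(Y, (\bm W_{\calS}^{(b)}, \bm W_{-\calS}))$ in \Cref{algo:randomization.test} is computable from the observed data alone. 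Next I would establish conditional exchangeability: with $\sigma_0$ the identity and $\sigma_1,\ldots,\sigma_B$ i.i.d.\ uniform permutations of $[n]$ drawn independently of the data, conditional on $(Y, \bm W_{-\calS})$, the collection $(\bm W_{\calS,\sigma_b})_{b=0}^B$ is exchangeable in $b$. This follows because \Cref{assu:independent} together with the i.i.d.\ super-population draws implies that $(\bm W_{\calS,i})_{i=1}^n$ is i.i.d.\ and independent of $\bigl((\bm Y_i(\cdot))_i, (\bm W_{-\calS,i})_i\bigr)$, hence exchangeable conditional on $(Y, \bm W_{-\calS})$; combined with the independent uniform $\sigma_b$'s, a symmetry argument delivers joint exchangeability of the vector $(\bm W_{\calS,\sigma_0}, \ldots, \bm W_{\calS,\sigma_B})$, and measurability transfers it to $(T^{(b)})_{b=0}^B$.

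Given conditional exchangeability, the standard finite-sample randomization argument \parencite{PhipsonSmyth2010pvalue} yields $\PP(P\le\alpha \mid Y, \bm W_{-\calS}) \le \alpha$ for every $\alpha\in(0,1)$, and integrating gives the unconditional control $\PP(P\le\alpha)\le\alpha$. I expect the main technical obstacle to be the transition from the population identity in \Cref{prop:sharp.null}, which a priori might only read ``for $\PP_{\bm W_\calS}$-almost every $w_\calS$,'' to a statement holding simultaneously at every realized treatment level in the finite sample. For discrete or finite treatment supports this is resolved by a countable union bound; in the continuous case one can invoke \Cref{assu:additive.independent.error}, which under $\xi(\bm W_\calS)=0$ forces $f_Y$ to be $W_\calS$-invariant as a function (not merely almost surely), restoring the unit-wise sharp null needed by the randomization argument.
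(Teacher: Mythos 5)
Your proposal is correct and follows essentially the same route as the paper: reduce the degenerate null to the finite-sample partially sharp null via \Cref{prop:sharp.null}, use \Cref{assu:independent} and the i.i.d.\ sampling to justify permutation of $\bm W_{\calS}$ as a draw from the null assignment distribution (the paper's Eq.~\eqref{eq:permutation.distribution}), and invoke the standard finite-sample validity of the randomization $p$-value. You are in fact more careful than the paper's proof on the almost-everywhere versus everywhere issue in transferring the population sharp null to each unit, and your conditional-exchangeability argument makes explicit what the paper's one-line distributional-equality claim leaves implicit.
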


\subsection{Sequential procedure adapting to null degeneracy}\label{sec:sequential test}

Since it is unknown a priori whether $\xi(W_{\mathcal{S}})=0$, a procedure is needed to distinguish between the degenerate and non-degenerate cases in order to apply the suitable inference procedure above.
Explicitly, we propose to use the following sequential approach.
\begin{enumerate}
    \item Run the randomization test~\Cref{algo:randomization.test} for the degenerate null as in \Cref{sec:inference.degenerate.null} at level $\alpha$. 
    If the null is not rejected, output $\{0\}$; 
    otherwise, proceed to Step 2.
    \item Compute~Eq.~\eqref{eq:confidence.interval} in \Cref{sec:inference.non.degenerate.null} to construct a $(1-\alpha)$ confidence interval and output.
\end{enumerate}
A detailed algorithm of the sequential procedure is provided in \Cref{algorithm:general}.
We emphasize that although two tests are conducted in the sequential procedure, there is no need to split the significance level $\alpha$.
This is because both tests concern the same quantity $\xi(W_{\calS})$. The following result justifies the validity of the sequential procedure. The proof is provided in \Cref{appe:sec:proof}.
\begin{proposition}\label{prop:coverage.sequential}
The sequential procedure achieves asymptotically valid coverage under the assumptions of \Cref{coro:coverage} and \Cref{prop:coverage.degenerate}, and that the power of the randomization test converges to one as the sample size tends to infinity.  

If the power of the randomization test does not converge to one, replacing the first-step confidence interval $\{0\}$ with $[0,\infty)$ still guarantees the valid coverage.
\end{proposition}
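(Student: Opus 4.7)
The plan is to split the analysis into two disjoint cases based on the true value of $\xi := \xi(W_\calS)$, degenerate ($\xi = 0$) and non-degenerate ($\xi > 0$), and in each case bound the coverage of the sequential procedure via the law of total probability on the event $\calE := \{\text{randomization test rejects}\}$. For the degenerate case, \Cref{prop:coverage.degenerate} yields $\PP(\calE) \leq \alpha$; on $\calE^c$ the procedure outputs $\{0\}$, which contains the true $\xi = 0$, so $\PP(\xi \in \text{output}) \geq \PP(\calE^c) \geq 1-\alpha$ regardless of the behavior of the confidence interval on $\calE$. This bound is in fact finite-sample, inheriting the finite-sample guarantee of \Cref{prop:coverage.degenerate}.

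Next I would handle the non-degenerate case $\xi > 0$, where the first-stage output $\{0\}$ on $\calE^c$ misses $\xi$ entirely and we must lean on the second stage. Under the power assumption $\PP(\calE) \to 1$, and on $\calE$ the output is the confidence interval $CI$ from \eqref{eq:confidence.interval}, which by \Cref{coro:coverage} satisfies $\PP(\xi \in CI) \to 1-\alpha$. Decomposing
\[
\PP(\xi \in \text{output}) = \PP(\xi \in CI,\calE) \geq \PP(\xi \in CI) - \PP(\calE^c),
\]
and letting $n\to\infty$ gives the asymptotic lower bound $1-\alpha$. The main conceptual obstacle is justifying the absence of a Bonferroni-type split between the two stages: both stages concern the \emph{same} scalar parameter $\xi$, with the first stage merely selecting which form of output to emit rather than performing an independent hypothesis test, so no multiplicity correction is required. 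A subsidiary technical subtlety is that the CI may not even have valid coverage under the degenerate null (where $\var(\varphi)$ can vanish and \Cref{coro:coverage} fails), but this does not damage overall coverage because on the null we rely only on the randomization-test stage, not on the CI.

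Finally, for the second claim, when power does not converge to one, replacing $\{0\}$ by $[0,\infty)$ makes the first-stage output contain $\xi$ deterministically on $\calE^c$ regardless of case (since $\xi \geq 0$). Repeating the decomposition,
\[
\PP(\xi \in \text{output}) = \PP(\calE^c) + \PP(\xi \in CI,\calE) \geq \PP(\xi \in CI),
\]
where the asymptotic lower bound $\PP(\xi \in CI) \geq 1-\alpha$ holds by \Cref{coro:coverage} under the alternative; under the null we instead use $\PP(\xi \in \text{output}) \geq \PP(\calE^c) \geq 1-\alpha$ directly, exactly as in the degenerate-case argument from the first paragraph. Hence coverage is preserved without requiring power to tend to one.
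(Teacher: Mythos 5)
Your proposal is correct and follows essentially the same route as the paper: a case split on $\xi=0$ versus $\xi>0$, using the finite-sample validity of the randomization test (\Cref{prop:coverage.degenerate}) in the degenerate case, the power-to-one assumption plus \Cref{coro:coverage} in the non-degenerate case, and the trivial containment $\xi\in[0,\infty)$ for the second claim. Your version merely makes the paper's terse argument explicit via the law of total probability on the rejection event, which is a faithful elaboration rather than a different proof.
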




\section{SIMULATION}\label{sec:simulation}

We consider the following data generating mechanism and estimand: independent treatments $W_k \sim \text{Unif}(-1, 1)$, $K = 3$, $E_Y \sim \mathcal{N}(0,0.5)$. The response follows $Y = W_1^3+W_2^3+W_3^3+W_1W_2^2+\sigma \cdot W_1^2W_3+E_Y$. The estimands of interest are $\xi(W_3)$, $\xi(W_1\vee W_3)$ and $\xi(W_1 \wedge W_3)$. 
We evaluate two different estimators based on influence functions: (1) one-step correction estimator accounting for the independence among $\bm W$ ($\varphi_{\EIF}$-based estimator); (2) one-step correction estimator ignoring the independence among $\bm W$ ($\varphi_{\IF}$-based estimator).

\begin{figure}[htbp]
    \centering

    \begin{minipage}{\linewidth}
        \centering
        \begin{minipage}{0.3\textwidth}
            \centering
            \includegraphics[clip, trim = 0cm 0cm 0cm 1cm, width=\linewidth]{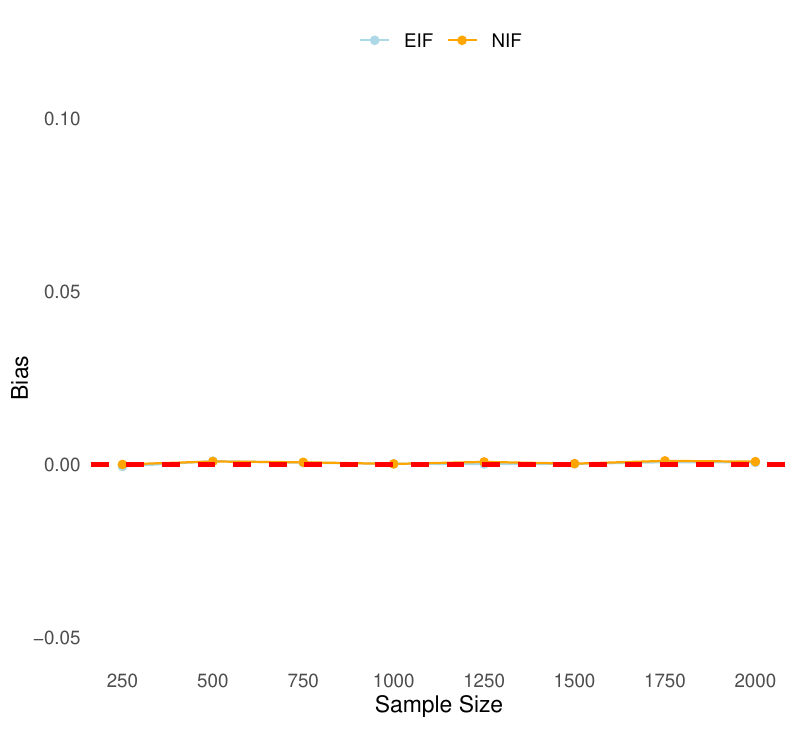}
        \end{minipage}
        \hfill
        \begin{minipage}{0.3\textwidth}
            \centering
            \includegraphics[clip, trim = 0cm 0cm 0cm 1cm, width=\linewidth]{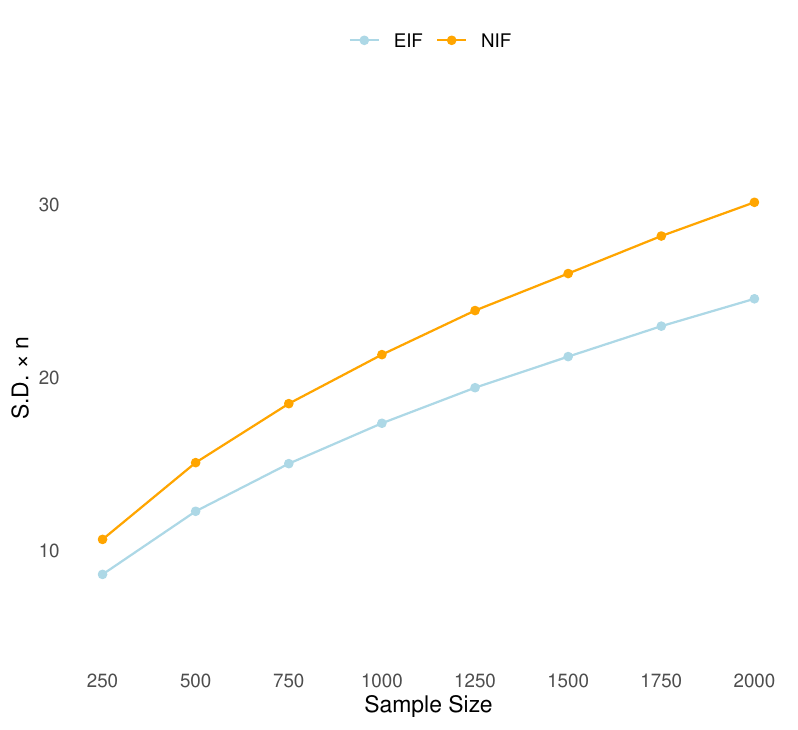}
        \end{minipage}
        \hfill
        \begin{minipage}{0.3\textwidth}
            \centering
            \includegraphics[clip, trim = 0cm 0cm 0cm 1cm, width=\linewidth]{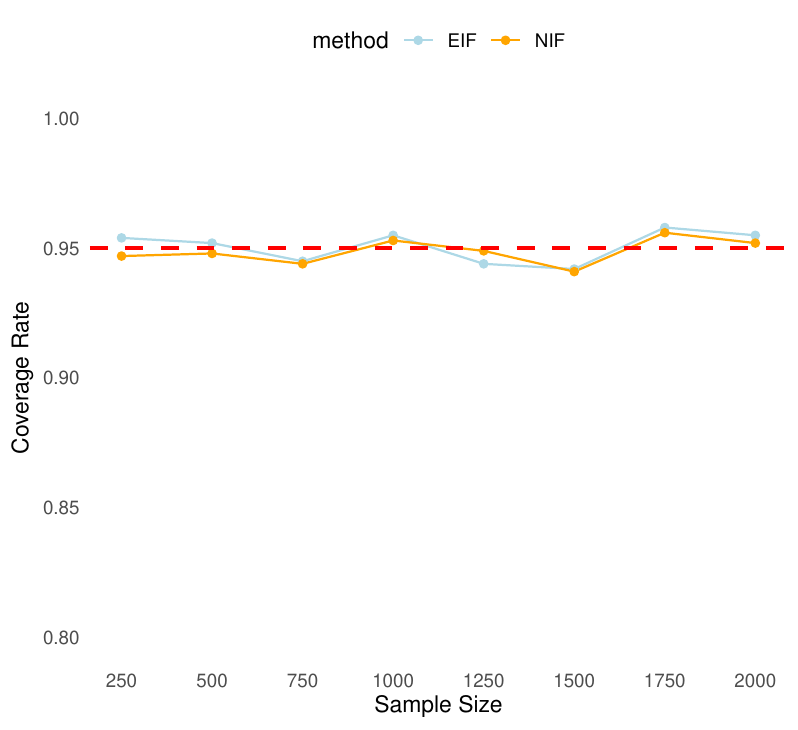}
        \end{minipage}
        \par\smallskip
        \centering
        \small{(a) $\xi(W_3)$; True nuisance functions; Varying sample size; Noise magnitude = 1}
    \end{minipage}
    \vspace{1em} 

    \begin{minipage}{\linewidth}
        \centering
        \begin{minipage}{0.3\textwidth}
            \centering
            \includegraphics[clip, trim = 0cm 0cm 0cm 1cm, width=\linewidth]{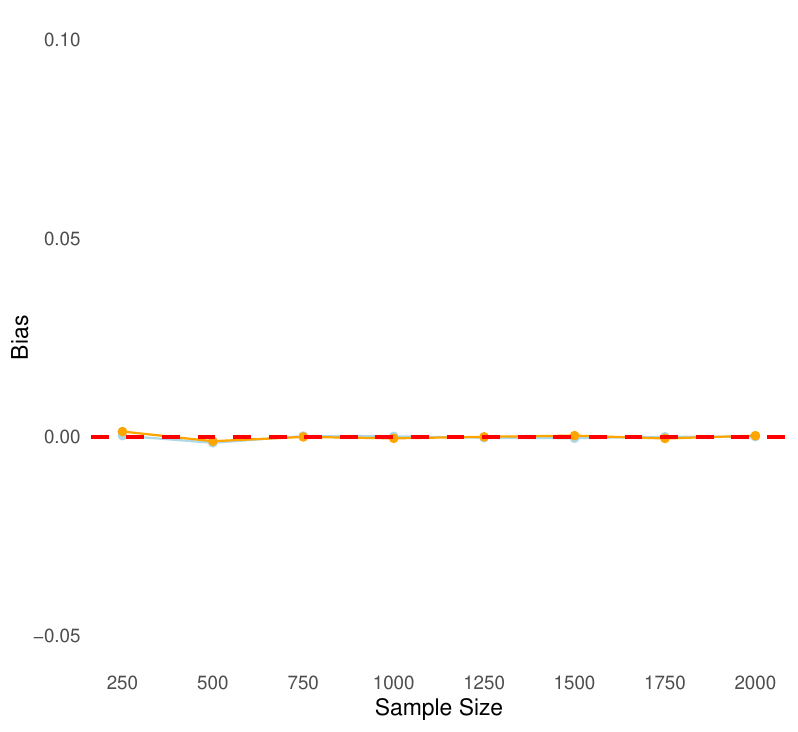}
        \end{minipage}
        \hfill
        \begin{minipage}{0.3\textwidth}
            \centering
            \includegraphics[clip, trim = 0cm 0cm 0cm 1cm, width=\linewidth]{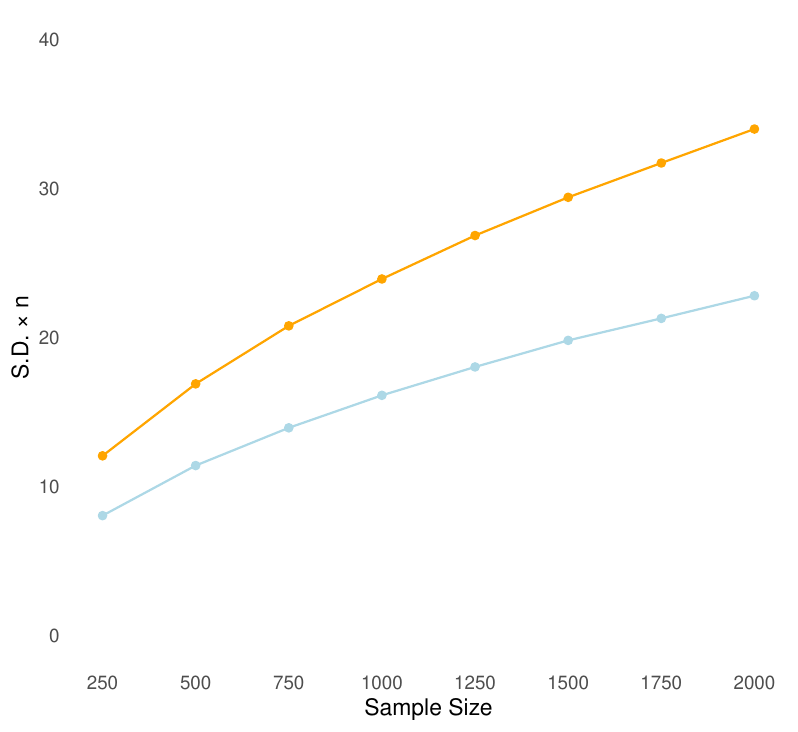}
        \end{minipage}
        \hfill
        \begin{minipage}{0.3\textwidth}
            \centering
            \includegraphics[clip, trim = 0cm 0cm 0cm 1cm, width=\linewidth]{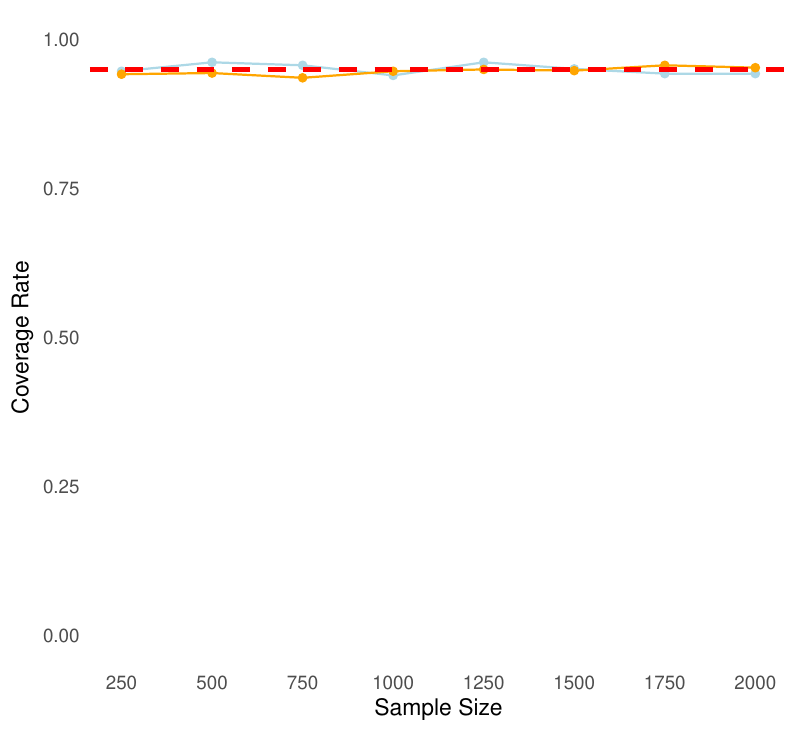}
        \end{minipage}
        \par\smallskip
        \centering
        \small{(b) $\xi(W_1 \vee W_3)$; True nuisance functions; Varying sample size; Noise magnitude = 1}
    \end{minipage}
    \vspace{1em}

    \begin{minipage}{\linewidth}
        \centering
        \begin{minipage}{0.3\textwidth}
            \centering
            \includegraphics[clip, trim = 0cm 0cm 0cm 1cm, width=\linewidth]{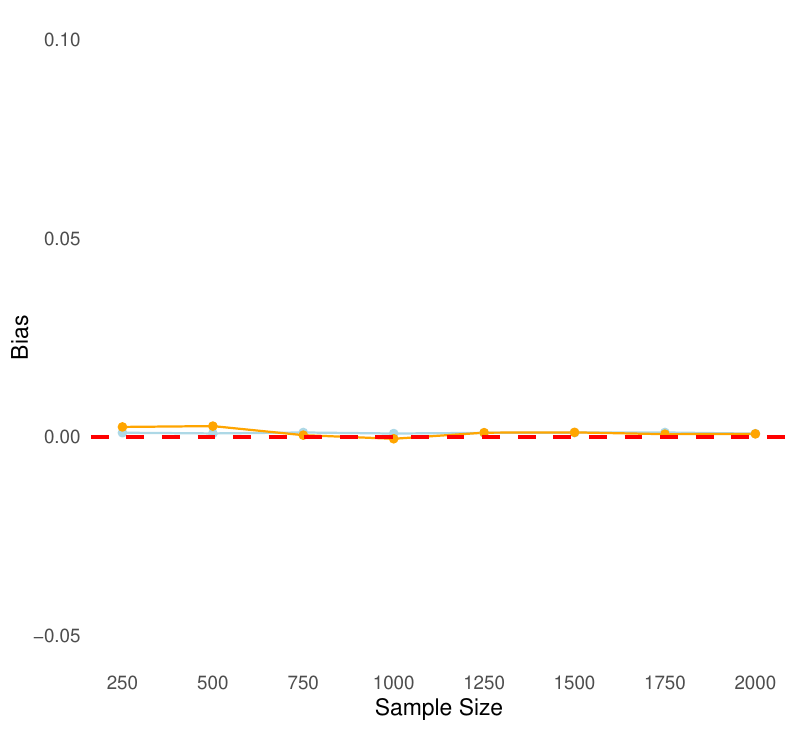}
        \end{minipage}
        \hfill
        \begin{minipage}{0.3\textwidth}
            \centering
            \includegraphics[clip, trim = 0cm 0cm 0cm 1cm, width=\linewidth]{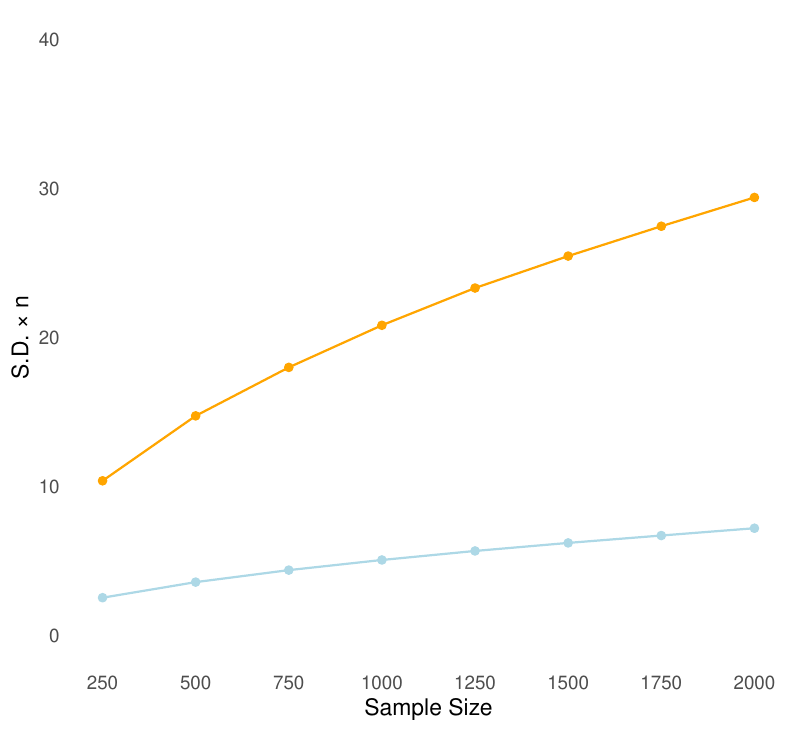}
        \end{minipage}
        \hfill
        \begin{minipage}{0.3\textwidth}
            \centering
            \includegraphics[clip, trim = 0cm 0cm 0cm 1cm, width=\linewidth]{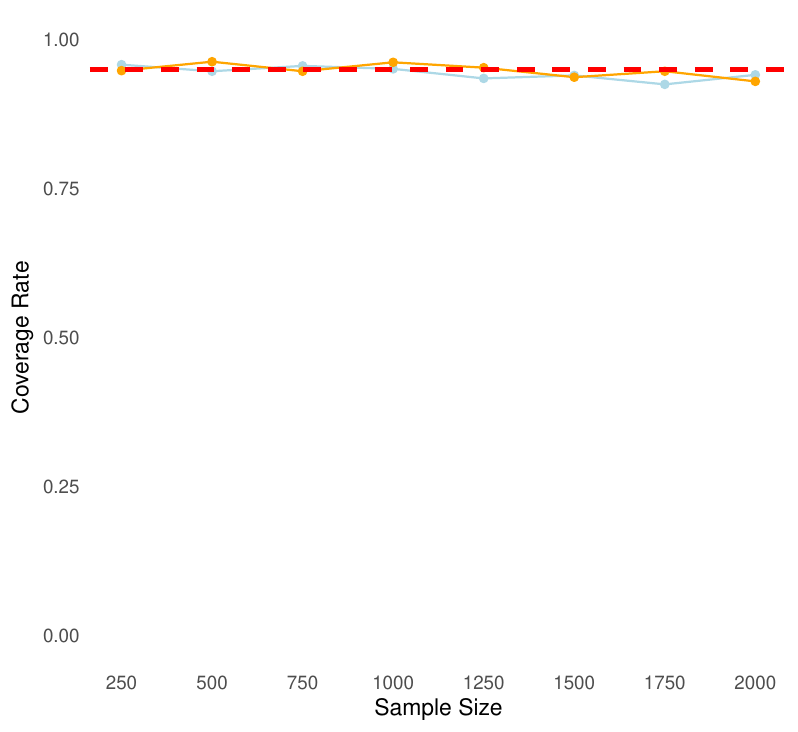}
        \end{minipage}
        \par\smallskip
        \centering
        \small{(c) $\xi(W_1 \wedge W_3)$; True nuisance functions; Varying sample size; Noise magnitude = 1}
    \end{minipage}
    \caption{Comparison of bias (left panel), estimated standard deviation times sample size (mid panel), and coverage rate (right panel; significance level $\alpha = 0.05$).
    $\varphi_{\EIF}$-based method in blue, $\varphi_{\IF}$-based method in gold.
    True nuisance functions are used. Results aggregated over $1000$ trials.}
    \label{fig:simulation1}
\end{figure}

To start, we implement the two methods using the true nuisance functions. 
In \Cref{fig:simulation1}, we vary the sample size. 
Both methods achieve exact coverage.
The $\varphi_{\EIF}$-based method exhibits lower variance, validating that exploiting independence can improve efficiency.
We provide the results with varying noise magnitude in \Cref{app:simulation.noise.magnitude}.

\begin{figure}[htbp]
    \centering

    \begin{minipage}{\linewidth}
        \centering
        \begin{minipage}{0.32\textwidth}
            \centering
            \includegraphics[clip, trim = 0cm 0cm 0cm 1cm, width=\linewidth]{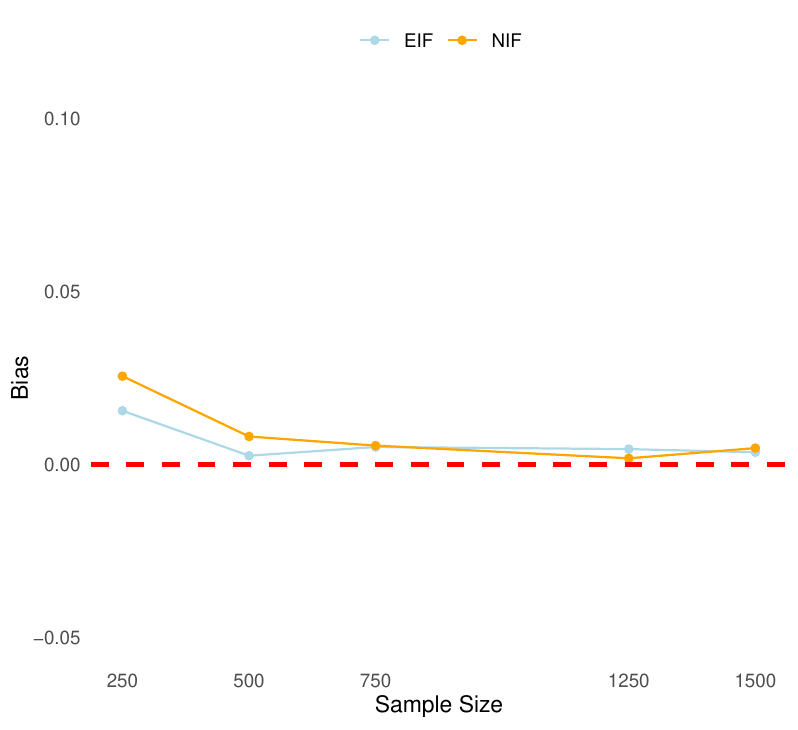}
        \end{minipage}
        \hfill
        \begin{minipage}{0.32\textwidth}
            \centering
            \includegraphics[clip, trim = 0cm 0cm 0cm 1cm, width=\linewidth]{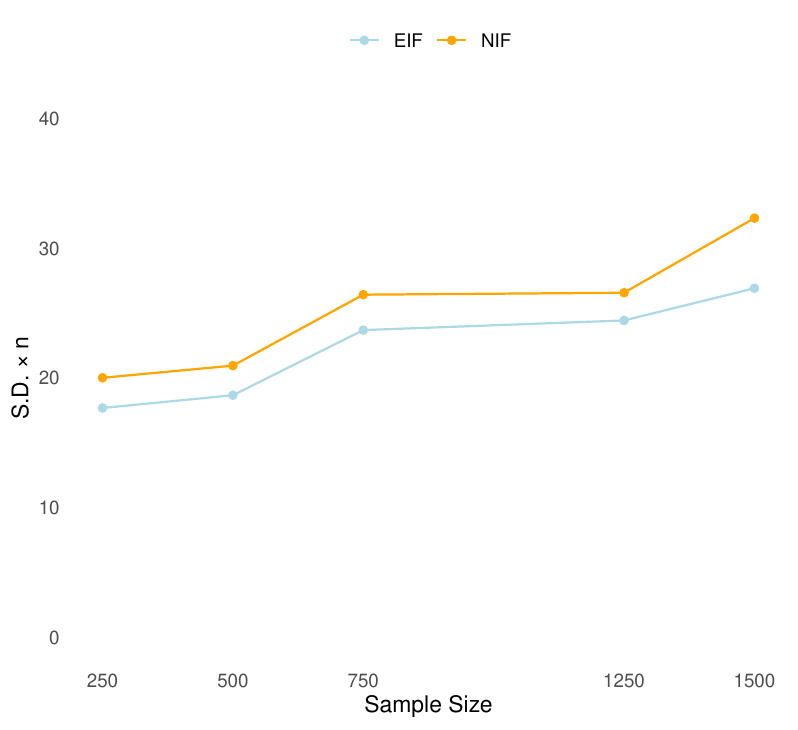}
        \end{minipage}
        \hfill
        \begin{minipage}{0.32\textwidth}
            \centering
            \includegraphics[clip, trim = 0cm 0cm 0cm 1cm, width=\linewidth]{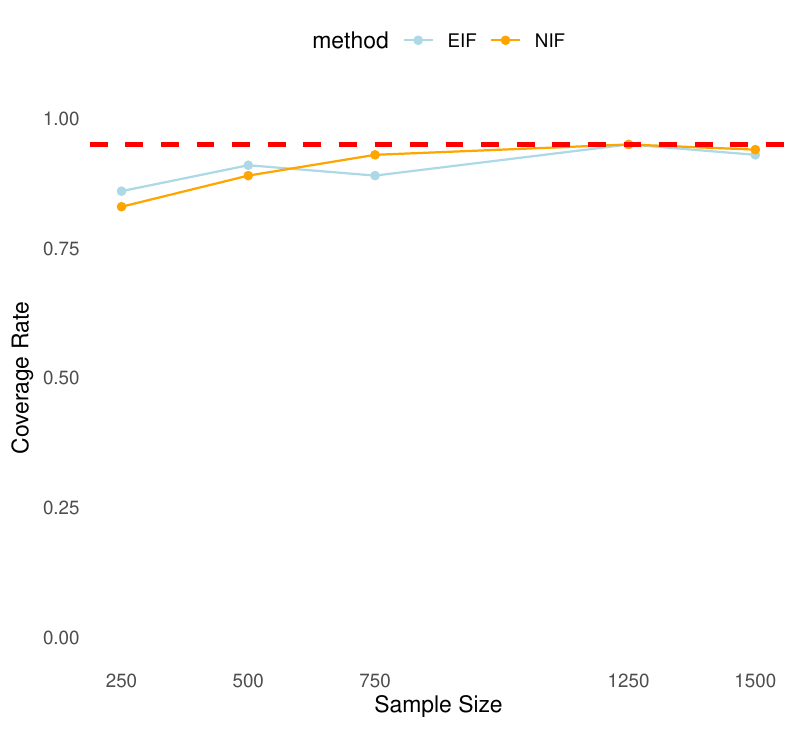}
        \end{minipage}
        \par\smallskip
        \centering
        \small{ $\xi(W_3)$; Estimated nuisance functions; Varying sample size; Noise magnitude = 1}
    \end{minipage}
    \vspace{1em} 

    \caption{Comparison using estimated nuisance functions (details in the caption of \Cref{fig:simulation1}). Results aggregated over $100$ trials.}\label{fig:simulation.estimated.nusiance}
\end{figure}

In \Cref{fig:simulation1}, we vary the sample size. 
Both methods achieve exact coverage.
The $\varphi_{\EIF}$-based method exhibits lower variance, validating that exploiting independence can improve efficiency.
We provide the results with varying noise magnitude in \Cref{app:simulation.noise.magnitude}. In \Cref{fig:simulation.estimated.nusiance}, we implement the two methods with nuisance functions estimated by the highly adaptive LASSO (HAL) and super learner (package \textsc{hal9001} and \textsc{Super Learner})
\parencite{vanderlaan2007super,hal9001}.
Note that $\varphi_{\EIF}$ involves more nuisance functions than $\varphi_{\IF}$, for example, $\varphi_{\EIF}$ additionally requires $\mathbb{E}\left[\mathbb{E}[Y\mid \bm W]\cdot\mathbb{E}[Y\mid \bm W_{-\calS}]\mid W_k\right]$ for all $k \in [K]$.
With estimated nuisance functions, the $\varphi_{\EIF}$-based method still exhibits variance reduction, although these gains are less substantial than those obtained with true nuisance functions.
In \Cref{app:estimated}, we provide the comparison for interaction
explainabilities, where the nuisance estimation is even more complex and the $\varphi_{\IF}$-based estimator appears competitive. We noticed that the efficient influence function requires more nuisance components, which would result in unstable finite-sample performance (especially under coverage) of the one-step bias-correcting estimator that could offset theoretical efficiency gains. We recommend using the nonparametric influence function (NIF, or IF mentioned in our paper) based one-step bias-correcting estimator in finite samples, especially when the practitioners are not so confident about their estimated nuisances, since it only contains two nuisances. As shown in \Cref{app:estimated}, the performance of the IF-based estimator is more robust compared to the $\varphi_{\EIF}$-based estimator in finite samples using the estimated nuisances. The $\varphi_{\EIF}$-based estimator is preferred when the nuisances are well-estimated to gain a narrower confidence interval.

\section{REAL DATA ANALYSIS}\label{sec:real.data}

Conjoint analysis is a factorial survey-based experiment in which participants choose between pairs of candidate profiles with several randomly selected characteristics. One of the most cited applications of conjoint analysis is \cite{hainmueller2015conjoint}, who examined the role of immigrants' attributes in shaping support for
their admission to the United States. In the experiment, $1396$ respondents were asked to act as immigration officials
and to decide which of a pair of immigrants they would choose for admission. Respondents evaluated a total of five pairings. The attributes were then randomly chosen to form immigrant profiles. 

We apply the explainability metric and the associated estimation and inference tools proposed to this real dataset. 
We use the $\varphi_{\IF}$-based estimator as it requires a simpler set of nuisance functions to estimate\footnote{And is not susceptible to the degeneracy problem for the interaction nulls.}. 
For the nulls of zero total explainability, we do not rule out the degenerate null issue, and thus apply the sequential procedure in \Cref{sec:sequential test}. Since the data have a cluster structure, we treat the respondent as the independent unit: let $O_i=\{O_{ir}\}^{m_i}_{r=1}$ denote all observations from respondent $i$, assume $\{O_i\}^n_{i=1}$ are i.i.d. across $i$, and allow arbitrary dependence within $i$. We compute row-level influence-function contributions $\phi_{ir}$ (with cross-fitting done by respondent folds), and aggregate to respondent-level scores $U_i=\sum_{r=1}^{m_i}\phi_{ir}$. The cluster-robust variance for $\widehat\xi=\xi(\hat\eta)+N^{-1}\sum_{i,r}\phi_{ir}$ is then \[\widehat{\mathrm{Var}}(\widehat\xi) = \frac{1}{N^2}\cdot\frac{n}{n-1}\sum_{i=1}^n (U_i-\overline U)^2,\]where $\overline U=\frac{1}{n}\sum_{i=1}^n U_i$, and we use $t_{n-1,\,1-\alpha/2}$ critical values for confidence intervals. Any randomization test must preserve the actual assignment mechanism. Under clustering, we resample/permutate $\bm W_{\mathcal{S}}$ within the respondent (and within blocks/strata if present), preserving task structure and any design constraints (e.g., paired-profile choice sets, restricted attribute combinations). We then recompute the test statistic on each resampled dataset using the same respondent-level cross-fitting and cluster-robust studentization. This yields finite-sample validity when the resampling scheme matches the true constrained randomization distribution for $\bm W_{\mathcal{S}}$. 

\Cref{tab:total-explainability} shows that the total explainabilities of \textit{Language}, \textit{Job Experience}, \textit{Job Plan}, \textit{Prior Trips to the U.S.}, \textit{Education \& Profession}, and \textit{Origin \& Application} reason are all significantly positive, except \textit{Gender}. 
Among these factors, \textit{Job Plan}, \textit{Education \& Profession} exhibit the two highest explainabilities, suggesting an encouraging pattern that immigrants' skills and qualifications are seriously evaluated.
The interaction explainability analysis results (\Cref{tab:interaction-explainability}) indicates no significant interactions for several variables. \Cref{app:rda} provides a comparison with other existing methods.

\begin{table}[t] 
\centering 
\caption{{Total explainability}}
\label{tab:total-explainability} 
\vspace{-1em} 
\begin{tabular}{l r l} 
\hline Variable & Total explainability & 95\% CI \\ 
\hline Gender & 0.0012 & [$-0.0003$, $0.0026$] \\
Language & 0.0319 & [$0.0240$, $0.0398$] \\ 
Job experience & 0.0164 & [$0.0109$, $0.0219$] \\ 
Job plans & 0.0884 & [$0.0751$, $0.1016$] \\ 
Prior trips to U.S. & 0.0366 & [$0.0275$, $0.0457$] \\ 
Education \& Profession & 0.0579 & [$0.0471$, $0.0687$] \\ 
Origin \& Application reason & 0.0155 & [$0.0101$, $0.0210$] \\ 
\hline 
\end{tabular}
\\\footnotesize 
\emph{Note:} The permutation test for Gender returns a p-value of 1, failing to reject the null hypothesis. 
\end{table} 
\begin{table}[t] \centering \caption{{Interaction explainability}} \label{tab:interaction-explainability} \vspace{-1em} 
\begin{tabular}{l l r l} \hline Factor 1 & Factor 2 & Interaction explainability & 95\% CI \\ \hline Job plan & Prior trips to U.S. & -0.0002& [$-0.0032$, $0.0029$] \\ Job plan & Job experience & 0.0014 & [$-0.0006$, $0.0034$] \\ Job plan & Language & 0.0010 & [$-0.0017$, $0.0038$] \\ \hline \end{tabular} \end{table}


\section{DISCUSSION}\label{sec:discussion}

In this work, we study the estimation and inference of causal ANOVA quantities, which quantify the explainability of individual factors and their interactions with respect to some outcome. 
Building on semi-parametric efficiency theory, we derive one-step corrected estimators both with and without incorporating independence structures across factors.
The estimator that incorporates this structure achieves smaller asymptotic variance theoretically and when nuisance components can be accurately estimated, whereas the estimator that does not explicitly leverage independence has a simpler form and may exhibit greater robustness in small samples.
To address the issue of null degeneracy, we develop a randomization-based inference procedure and integrate it with the semi-parametric analysis.
Applied to a real immigration dataset, our methods identify multiple factors with nonzero explainability, as well as a significant interaction between \textit{Job Plan} and \textit{Job Experience}.

We outline a few future directions.
When testing a set of factors and their interactions for nonzero explainability simultaneously, multiplicity arises, and it is of interest to explore proper adjustment.
    Causal ANOVA quantities exhibit a hierarchical structure (for example, if the total explainability of a factor is zero, then any interaction involving this factor must be zero), which can be exploited via closed testing or gatekeeping to improve power while maintaining error control. 
    In addition, in sequential decision problems (Markov decision processes), the Markov property yields a natural conditional independence. A direction for future research is to investigate the influence functions incorporating this conditional independence to reduce estimator's asymptotic variance.
   Moreover, when factors are dependent, Causal ANOVA quantities are generally not point-identified. A natural direction is to characterize the identified set and extend the semi-parametric analysis to the estimation and inference for this set.


\printbibliography

\clearpage

\begin{center}
{\LARGE\bfseries Appendix for ``Estimation and Inference for Causal Explainability''}
\end{center}


\vspace{1cm}
\paragraph{Organization.} 
In \Cref{sec:estimation.interaction.explainability}, we provide additional definitions of causal
ANOVA quantities and discuss their hierarchical structure.
In \Cref{covariates}, we extend the independence condition~\eqref{assu:independent} to conditional independence.
In \Cref{appe:sec:algorithm}, we present additional algorithms.
In \Cref{appe:sec:lemma}, we collect supplementary lemmas and corollaries.
In \Cref{appe:sec:proof}, we provide detailed proofs.
In \Cref{appe:sec:table}, we give additional tables.  
In \Cref{app:simulation}, we report additional empirical studies.

\section{Method extension}\label{appe:sec:causal.ANOVA}

\subsection{Interaction explainability}\label{sec:estimation.interaction.explainability}

We define the interaction effect of a set of treatments iteratively.

\begin{definition}[Interaction effect]\label{defi:interaction.effect}
    For $\bm w$, $\bm w' \in \RR^K$, let $I_{\emptyset, \bm w'} (\bm w) := Y(\bm w')$, $I_{k, \bm w'} (\bm w) :=  Y(\bm W_{k}, \bm w'_{-k}) -  Y(\bm w')$ for $k \in [K]$. For $\calS \subseteq [K]$,
\begin{align*}
    I_{\calS, \bm w'} (\bm w)
    &:= Y(\bm w_{\calS}, \bm w'_{-\calS}) - \sum_{\calS' \subsetneq
      \calS} I_{\calS',\bm w'}(\bm w).
\end{align*}
\end{definition}
\noindent Equivalently, the interaction term can be represented as a linear combination of the potential outcomes evaluated at a combination of $\bm w_{\calS'}, \bm w'_{-\calS'}$ for $\calS' \subseteq \calS$ with coefficients in $\{1,-1\}$,
\begin{align*}
   I_{\calS, \bm w'} (\bm w)
   = \sum_{\calS' \subseteq \calS} (-1)^{|\calS - \calS'|} Y(\bm w_{\calS'}, \bm w'_{-\calS'}).
\end{align*}
The interaction effect of $W_k$ and $W_k'$ takes the form
$I_{\{k,k'\}, \bm w'} (\bm w) = Y(W_k, W_{k'}, \bm w'_{-\{k,k'\}}) - Y(W_k, \bm w'_{-k}) - Y(W_{k'}, \bm w'_{-k'}) + Y(\bm w')$.
For $K = 2$ and binary $W_1$, $W_2$, \cite{vanderweele2014attributing} uses $I_{\{1,2\}, [0,0]} ([1,1])$ as the interactive effect of $W_1$ and $W_2$.

The interaction effect $I_{\calS, \bm W'} (\bm W)$ of treatments in $\calS$ only requires access to the independent copy of treatments in $\calS$.
The treatments outside $\calS$ remain unchanged.
We refer to this property as ``self-sufficiency''.
The self-sufficiency property is desirable in the sense that the definition of $I_{\calS, \bm W'}$ is invariant to the specification of the complete set of treatments.
Explicitly, let $\tilde{Y}(\bm W_{\calS}) = {Y}(\bm W_{\calS}, \bm W_{-\calS})$, then $\tilde{I}_{\calS, \bm W'}$ defined based on $\tilde{Y}(\bm W_{\calS})$ is equivalent to that using ${Y}(\bm W_{\calS}, \bm W_{-\calS})$.

We define the explainability of the interaction of a set of treatments $W_\calS$, $\calS \subseteq [K]$.
\begin{definition}
    Let $\bm W'$ be an independent copy of $\bm W$.
    For $\calS \subseteq [K]$,
    \begin{align}\label{eq:interaction.variance}
        \xi(\wedge_{k \in \calS} W_{k}) = \frac{\var\left(I_{\calS, \bm W'} (\bm W) \right)}{2^{|\calS|}\var\left(Y\right)}.
    \end{align}
\end{definition}

The difference $Y - Y(\bm W')$ can be decomposed into the sum of interaction effects with $\calS \neq \emptyset$,
\begin{align}\label{eq:anchor.decomposition}
    Y - Y(\bm W') =
\sum_{\emptyset \neq \calS \subseteq [K]} I_{\calS, \bm W'}(\bm W).
\end{align}
Note that $I_{\calS, \bm W'}(\bm W)$, $I_{\calS', \bm W'}(\bm W)$ are correlated, therefore $\var(Y - Y(\bm W'))$ typically does not equal the sum of the variance of $I_{\calS, \bm W'}(\bm W)$.
However, we show for independent treatments, $\var(Y - Y(\bm W'))$ can be represented as a linear combination of the variance of $I_{\calS, \bm W'}(\bm W)$.

\begin{proposition}[Anchored variance decomposition]\label{prop:variance.decomposition}
Under  \Cref{assu:independent},
\begin{align}\label{eq:inclusion.exclusion}
        &\xi(\vee_{k \in \calS} W_{k})
       = \sum_{\emptyset \neq \calS' \subseteq \calS}
       \left(-1\right)^{|\calS'|-1} \xi(\wedge_{k \in \calS'} W_{k}),\\
       &\xi(\wedge_{k \in \calS} W_{k})
       = \sum_{\emptyset \neq \calS' \subseteq \calS}
       \left(-1\right)^{|\calS'|-1} \xi(\vee_{k \in \calS'} W_{k}).
    \end{align}
\end{proposition}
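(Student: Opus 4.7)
The two identities are Möbius duals on the Boolean lattice of subsets of $\calS$: if the first holds, the second follows by a routine inclusion--exclusion inversion, verified by substituting the first into the second and invoking $\sum_{\calT'\subseteq\calT\subseteq\calS}(-1)^{|\calT|}=0$ whenever $\calT'\subsetneq\calS$. My plan is therefore to prove only the first identity, and to do so by reducing both $\xi(\vee_{k\in\calS}W_k)$ and $\xi(\wedge_{k\in\calS'}W_k)$ for $\calS'\subseteq\calS$ to a single family of cross-moments.

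Define, for each $\calD\subseteq[K]$, the anchored cross-moment
\[
H(\calD):=\EE\bigl[Y(\bm W)\,Y(\bm W_{-\calD},\bm W'_{\calD})\bigr],
\]
where $\bm W'$ is an independent copy of $\bm W$. Expanding the square and using the exchangeability $\EE[Y(\bm W)]=\EE[Y(\bm W'_\calS,\bm W_{-\calS})]$, which follows from \Cref{assu:independent}, directly yields $\xi(\vee_{k\in\calS}W_k)=(H(\emptyset)-H(\calS))/\var(Y)$. For the interaction, I will expand
\[
\var(I_{\calS,\bm W'}(\bm W))=\sum_{\calS_1,\calS_2\subseteq\calS}(-1)^{|\calS_1|+|\calS_2|}\,\EE\bigl[Y(\bm W_{\calS_1},\bm W'_{-\calS_1})\,Y(\bm W_{\calS_2},\bm W'_{-\calS_2})\bigr],
\]
using that $\EE[I_{\calS,\bm W'}(\bm W)]=0$ for $\calS\neq\emptyset$. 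The key structural claim, which carries the full weight of \Cref{assu:independent}, is that each cross-moment on the right depends only on the set of coordinates at which the two evaluation points disagree---namely $\calS_1\triangle\calS_2$---and thus equals $H(\calS_1\triangle\calS_2)$. Re-indexing by $\calD:=\calS_1\triangle\calS_2$, counting that exactly $2^{|\calS|}$ ordered pairs $(\calS_1,\calS_2)$ with $\calS_1,\calS_2\subseteq\calS$ realize each $\calD\subseteq\calS$, and noting that $(-1)^{|\calS_1|+|\calS_2|}=(-1)^{|\calD|}$, yields $\xi(\wedge_{k\in\calS}W_k)=\frac{1}{\var(Y)}\sum_{\calD\subseteq\calS}(-1)^{|\calD|}H(\calD)$.

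Substituting this closed form into the RHS of the first claimed identity and swapping the order of summation, the coefficient of each $H(\calD)$ collapses to $\sum_{\calD\subseteq\calS'\subseteq\calS,\,\calS'\neq\emptyset}(-1)^{|\calS'|-1}$, which by the standard alternating-sum identity equals $1$ for $\calD=\emptyset$, $(-1)^{|\calS|-1}$ for $\calD=\calS$, and $0$ for any intermediate $\calD$. The surviving two contributions assemble to precisely $(H(\emptyset)-H(\calS))/\var(Y)=\xi(\vee_{k\in\calS}W_k)$, proving the first identity.

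The main obstacle is establishing the symmetric-difference invariance of the cross-moments, and the argument is careful bookkeeping rather than deep analysis. The coordinatewise mutual independence of $\{W_k,W_k'\}_{k\in[K]}$ implies that at each coordinate where $(\bm W_{\calS_1},\bm W'_{-\calS_1})$ and $(\bm W_{\calS_2},\bm W'_{-\calS_2})$ agree, their common value is a single shared draw from the marginal of $W_k$, and at each coordinate where they disagree, the two components behave as two independent draws. Hence only the ``agree vs.\ disagree'' partition---equivalently $\calS_1\triangle\calS_2$---enters the joint law of the two evaluation points, and therefore only this set enters the expected product.
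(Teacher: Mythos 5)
Your argument is correct, and it is worth noting that the paper itself does not prove this proposition --- it defers to Appendix~A.1 of the cited Causal ANOVA reference --- so yours is a self-contained proof where the paper offers none. The structure is sound: the reduction of every cross-moment $\EE\bigl[Y(\bm W_{\calS_1},\bm W'_{-\calS_1})\,Y(\bm W_{\calS_2},\bm W'_{-\calS_2})\bigr]$ to $H(\calS_1\triangle\calS_2)$ is exactly the right lemma, and I verified the bookkeeping: each $\calD\subseteq\calS$ is realized by $2^{|\calD|}\cdot 2^{|\calS|-|\calD|}=2^{|\calS|}$ ordered pairs, the sign $(-1)^{|\calS_1|+|\calS_2|}=(-1)^{2|\calS_1\cap\calS_2|+|\calD|}=(-1)^{|\calD|}$ is constant on each fiber, the $2^{|\calS|}$ cancels the normalization in the definition of $\xi(\wedge_{k\in\calS}W_k)$, and the alternating sum $\sum_{\calD\subseteq\calS'\subseteq\calS,\,\calS'\neq\emptyset}(-1)^{|\calS'|-1}$ does collapse to $1$, $(-1)^{|\calS|-1}$, and $0$ in the three cases, leaving $(H(\emptyset)-H(\calS))/\var(Y)$. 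The Möbius-involution step for the second identity is standard and checks out. Two small points you should make explicit in a final write-up: (i) the symmetric-difference invariance of the cross-moments uses not only \Cref{assu:independent} but also the super-population condition $Y(\cdot)\independent(\bm W,\bm W')$ from Eq.~\eqref{eq:super.population}, since one must integrate the deterministic kernel $(\bm u,\bm v)\mapsto\EE[Y(\bm u)Y(\bm v)]$ against the law of the pair of evaluation points; (ii) the identity $\xi(\vee_{k\in\calS}W_k)=(H(\emptyset)-H(\calS))/\var(Y)$ uses $\var\bigl(Y(\bm W'_\calS,\bm W_{-\calS})\bigr)=\var(Y(\bm W))$, which again rests on $(\bm W'_\calS,\bm W_{-\calS})\overset{d}{=}\bm W$ under independence, so the factor of $2$ in \Cref{defi:total.independent} cancels as you claim.
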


The proof is provided in \textit{Appendix A.1.} in \cite{pmlr-v275-gao25a}.

\Cref{prop:variance.decomposition} indicates the total explainability and the explainability of interactions admits a relationship resembling the inclusion-exclusion principle in its structural form. Based on it, the influence functions for the explainability of interactions among treatments can be derived from the inclusion-exclusion principle and the linearity property of influence functions (\Cref{fig:influence.function.dependence}).

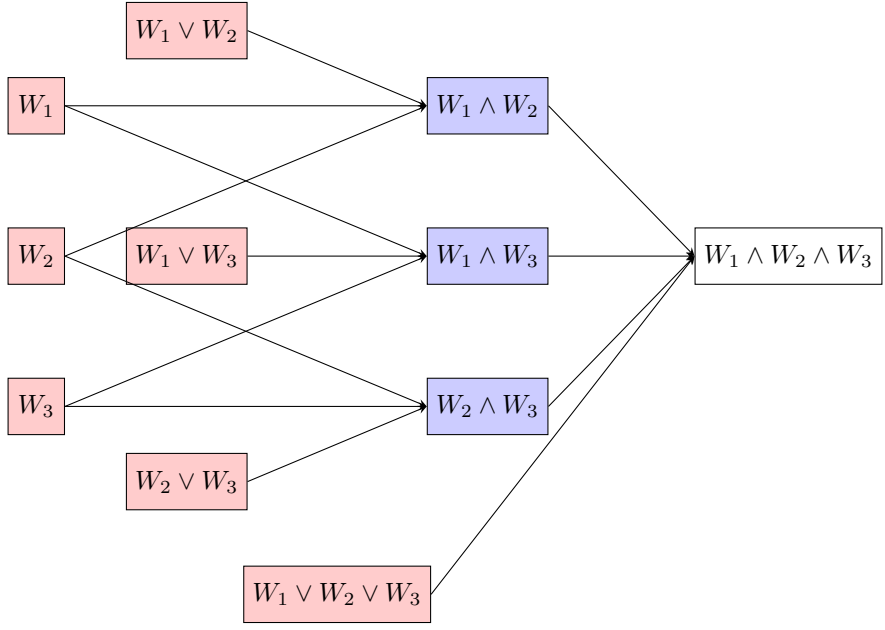
\begin{figure}[tbp]
    \centering
    \begin{tikzpicture}[>=stealth,
    node distance=1.5cm,
    every node/.style={rectangle, draw, minimum size=0.75cm}]
    
        \node (w1) at (-2,-1) [fill=red!20] {$W_1$};
        \node (w2) at (-2,-3) [fill=red!20] {$W_2$};
        \node (w3) at (-2,-5) [fill=red!20] {$W_3$};
        \node (w1v2) at (0,0) [fill=red!20] {$W_1 \vee W_2$};
        \node (w1v3) at (0,-3) [fill=red!20] {$W_1 \vee W_3$};
        \node (w2v3) at (0,-6) [fill=red!20] {$W_2 \vee W_3$};
        \node (w1v2v3) at (2,-7.5) [fill=red!20] {$W_1 \vee W_2 \vee W_3$};
        
        \node (w12) at (4,-1) [fill=blue!20] {$W_1 \wedge W_2$};
        \node (w13) at (4,-3) [fill=blue!20] {$W_1 \wedge W_3$};
        \node (w23) at (4,-5) [fill=blue!20] {$W_2 \wedge W_3$};
                
        \node (w123) at (8,-3) {$W_1 \wedge W_2 \wedge W_3$};
        
        \draw[->] (w1.east) -- (w12.west);
        \draw[->] (w1.east) -- (w13.west);
        \draw[->] (w1v2.east) -- (w12.west);
        \draw[->] (w2.east) -- (w12.west);
        \draw[->] (w2.east) -- (w23.west);
        \draw[->] (w2v3.east) -- (w23.west);
        \draw[->] (w3.east) -- (w13.west);
        \draw[->] (w3.east) -- (w23.west);
        \draw[->] (w1v3.east) -- (w13.west);
        
        \draw[->] (w12.east) -- (w123.west);
        \draw[->] (w13.east) -- (w123.west);
        \draw[->] (w23.east) -- (w123.west);
        \draw[->] (w1v2v3.east) -- (w123.west);
    \end{tikzpicture}
    \caption{Influence functions of total and interaction explainabilities: zeroth-order, first-order, second-order interaction terms are highlighted in red, blue, and white, respectively. The dependence is derived from the inclusion-exclusion principle and the linearity of influence functions.}
    \label{fig:influence.function.dependence}
\end{figure}
\setcounter{assumption}{2}

\subsection{Extension of NPSEMs}\label{sec:NPSEM.extension}

We next discuss three NPSEM models of increasing generality, with NPSEM-IE~\eqref{assu:additive.independent.error} as the starting point.

\paragraph{NPSEM-IE.} NPSEM-IE is a commonly used framework for causal discovery and inference \parencite{hoyer2008nonlinear, peters2014causal}. Under NPSEM-IE, when the nonparametric model is correctly specified, the multivariate distribution can not only identify the underlying causal structure and effects, but also typically allows for higher statistical accuracy compared to models with independent but heteroskedastic errors (below).
When correctly specified, it identifies causal structure and effects and can yield higher statistical efficiency than the extensions below.

\paragraph{NPSEM with additive noise.} 
Consider the NPSEM model where the noise remains additive, but its distribution may depend on the treatment level 
, i.e., heteroskedastic errors. In this setting, the counterfactual model satisfies the co-monotone property \parencite{pmlr-v275-gao25a} under which the joint distribution of all potential outcomes is identifiable. In principle, one could estimate this joint distribution, construct an oracle from it, and sample from this oracle to evaluate the Causal ANOVA quantities. However, learning this joint distribution is statistically challenging, and the inferential properties of such a procedure remain largely unexplored.

\paragraph{General NPSEM.} Finally, when the noise is not necessarily additive, the co-monotonicity structure may be violated, and the Causal ANOVA quantities become only partially identifiable. Characterizing and conducting inference for the resulting partial identification set is important but beyond the scope of this paper.

Given the above discussion on extensions beyond NPSEM-IE, we next describe a practical way to assess whether NPSEM-IE or the additive-error NPSEM is more appropriate. Specifically, one can estimate the residuals 
 for each possible 
 and examine whether the residual distribution varies across treatment levels 
. If the residual distributions are approximately invariant across 
, this provides empirical support for the NPSEM-IE assumption; otherwise, the additive-error NPSEM framework may be more suitable.

\subsection{Covariates adjustment}\label{covariates}

If there exists a set of covariates $\bm X$ such that the treatments are unconfounded given $\bm X$, that is $ Y(\cdot)_i \independent \bm W_i \mid \bm X_i$, the above analysis can be extended by conditioning on $\bm X$.\begin{assumption}[Unconfounded treatments]\label{assu:unconfounded}
    The treatments are independent of the potential outcomes conditional on the covariates,
    \begin{align*}
        Y_i(\cdot) \independent \bm W_i \mid \bm X_i.
    \end{align*}
\end{assumption}

\begin{assumption}[Conditionally independent treatments]\label{assu:conindependent}
    Treatments $W_k$, $k \in [K]$ are mutually independent conditional on the covariates.
\end{assumption}

\begin{definition}[Total explainability with covariates]\label{defi:total.conindependent}
    Let $\bm W'$ be an independent copy of $\bm W$ conditional on $\bm X$.
    For $\calS \subseteq [K]$,
    \begin{align}\label{eq:total.variance}
        \xi(\vee_{k \in \calS} W_k)
        := \frac{\mathbb{E}\left[ \mathsf{Var }\left( Y(\bm{W}) - Y(\bm{W}_\calS',\bm{W}_{-\calS}) \mid \bm{X} \right) \right]}{2\var\left(Y(\bm W)\right)}.
    \end{align}
\end{definition}

\begin{lemma}[Law of total conditional variance]\label{LTCV}
Let \((\Omega,\mathcal F,\mathbb P)\) be a probability space, let \(Y\in L_2(\mathbb P)\),
and let \(X\) and \(W\) be random elements. Define $\mathcal G := \sigma(X)$ and $\mathcal H := \mathcal G \vee \sigma(W)$. Then, \(\mathbb P\)-a.s.,
\begin{equation*}
\var(Y \mid \mathcal G)
= \mathbb{E}\!\left[\,\var(Y \mid \mathcal H)\,\middle|\, \mathcal G\right]
\;+\;
\var\!\left(\mathbb{E}[Y \mid \mathcal H] \,\middle|\, \mathcal G\right).
\end{equation*}
Equivalently,
\begin{equation*}
\var(Y \mid X)
= \mathbb{E}\!\left[\,\var(Y \mid X,W)\,\middle|\, X\right]
\;+\;
\var\!\left(\mathbb{E}[Y \mid X,W] \,\middle|\, X\right)
\quad \PP\text{-a.s.}
\end{equation*}
\end{lemma}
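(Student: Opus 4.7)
The plan is to reduce the identity to the standard law of total variance but performed pointwise in the coarser $\sigma$-algebra $\mathcal G$. Since $Y\in L_2(\mathbb P)$ and $\mathcal G\subseteq\mathcal H$, all conditional expectations and conditional variances are well defined and the required tower-property manipulations are justified $\mathbb P$-a.s.

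First, I would write the conditional variance as a conditional second moment minus the square of the conditional mean:
\begin{equation*}
\var(Y\mid\mathcal G)=\mathbb E[Y^2\mid\mathcal G]-\mathbb E[Y\mid\mathcal G]^2\quad\text{a.s.}
\end{equation*}
Then, using $\mathcal G\subseteq\mathcal H$ and the tower property, I would rewrite $\mathbb E[Y^2\mid\mathcal G]=\mathbb E\bigl[\mathbb E[Y^2\mid\mathcal H]\bigm|\mathcal G\bigr]$, and inside the inner expectation use the identity $\mathbb E[Y^2\mid\mathcal H]=\var(Y\mid\mathcal H)+\mathbb E[Y\mid\mathcal H]^2$.

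Next, I would handle the term $\mathbb E\bigl[\mathbb E[Y\mid\mathcal H]^2\bigm|\mathcal G\bigr]$ by adding and subtracting the square of its conditional mean: applying the tower property once more, $\mathbb E\bigl[\mathbb E[Y\mid\mathcal H]\bigm|\mathcal G\bigr]=\mathbb E[Y\mid\mathcal G]$, so
\begin{equation*}
\mathbb E\!\left[\mathbb E[Y\mid\mathcal H]^2\bigm|\mathcal G\right]=\var\!\left(\mathbb E[Y\mid\mathcal H]\bigm|\mathcal G\right)+\mathbb E[Y\mid\mathcal G]^2.
\end{equation*}
Substituting this back and cancelling $\mathbb E[Y\mid\mathcal G]^2$ yields the desired decomposition
\begin{equation*}
\var(Y\mid\mathcal G)=\mathbb E\!\left[\var(Y\mid\mathcal H)\bigm|\mathcal G\right]+\var\!\left(\mathbb E[Y\mid\mathcal H]\bigm|\mathcal G\right)\quad\text{a.s.}
\end{equation*}
The displayed equivalent form then follows by specializing $\mathcal G=\sigma(X)$ and $\mathcal H=\sigma(X,W)$.

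There is no real obstacle here beyond bookkeeping; the only place one must be slightly careful is asserting the almost-sure equalities and justifying the tower property in the conditional second moment step, which requires $Y\in L_2(\mathbb P)$ so that all quantities in sight are integrable. I would state this integrability once at the outset and then chain the identities above without repeating the a.s. qualifier.
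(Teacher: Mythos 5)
Your proof is correct and is essentially the same argument as the paper's: both expand the conditional variances as conditional second moments minus squared conditional means, apply the tower property for $\mathcal G\subseteq\mathcal H$, and cancel the cross terms. You work from the left-hand side to the right while the paper starts from the right-hand side, but this is the identical computation run in reverse.
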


\begin{proof}
Let $\mathcal G=\sigma(X)$ and $\mathcal H=\mathcal G\vee\sigma(W)$ with $\mathcal G\subseteq\mathcal H$.
Recall the conditional variance definition
\[
\var(Y\mid\mathcal A):=\EE\!\left[Y^2\mid\mathcal A\right]-\Big(\EE[Y\mid\mathcal A]\Big)^2,
\ \mathcal A\subseteq\mathcal F,
\]
which is well-defined since $Y\in L_2(\PP)$.

Start from the right-hand side:
\begin{align*}
&\EE\!\left[\var(Y\mid\mathcal H)\mid\mathcal G\right]
+\var\!\left(\EE[Y\mid\mathcal H]\mid\mathcal G\right) \\
&=\EE\!\left[\EE[Y^2\mid\mathcal H]-\big(\EE[Y\mid\mathcal H]\big)^2 \,\middle|\,\mathcal G\right]
+\Big\{\EE\!\left[\big(\EE[Y\mid\mathcal H]\big)^2\mid\mathcal G\right]
-\big(\EE[\EE[Y\mid\mathcal H]\mid\mathcal G]\big)^2\Big\}.
\end{align*}
The middle terms cancel, leaving
\[
\EE\!\left[\EE[Y^2\mid\mathcal H]\mid\mathcal G\right]
-\big(\EE[\EE[Y\mid\mathcal H]\mid\mathcal G]\big)^2.
\]
Now apply the tower property twice (since $\mathcal G\subseteq\mathcal H$):
\[
\EE\!\left[\EE[Y^2\mid\mathcal H]\mid\mathcal G\right]=\EE[Y^2\mid\mathcal G],
\
\EE[\EE[Y\mid\mathcal H]\mid\mathcal G]=\EE[Y\mid\mathcal G].
\]
Therefore,
\[
\EE\!\left[\var(Y\mid\mathcal H)\mid\mathcal G\right]
+\var\!\left(\EE[Y\mid\mathcal H]\mid\mathcal G\right)
=\EE[Y^2\mid\mathcal G]-\big(\EE[Y\mid\mathcal G]\big)^2
=\var(Y\mid\mathcal G) \quad \PP\text{-a.s.}
\]

\end{proof}

\begin{lemma}\label{lemma: definition}
   Under \Cref{assu:unconfounded} and \Cref{assu:conindependent}, the definition~\eqref{eq:total.variance} admits an equivalent form 
\begin{align}\label{eq:total.variance.2}
    \xi(\vee_{k \in \calS} W_k)
    =  \frac{\mathbb{E}[\mathbb{E}^2\left[Y\mid \bm W, \bm X]\right]}{\var\left(Y\right)}-\frac{\mathbb{E}\left[\mathbb{E}^2[Y\mid \bm W_{-\mathcal{S}},\bm X]\right]}{\var\left(Y\right)}.
\end{align}
\end{lemma}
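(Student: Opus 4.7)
The plan is to mirror the unconditional derivation of \Cref{lemma:definition}, replacing every marginal expectation/variance by its conditional version given $\bm X$ and then averaging over $\bm X$. To pin down the potential outcomes in terms of observables, I would impose a covariate-adjusted version of \Cref{assu:additive.independent.error}, namely $Y = f_Y(\bm W, \bm X) + E_Y$ with $E_Y \independent (\bm W, \bm X)$; combined with \Cref{assu:unconfounded}, this gives $f_Y(\bm w, \bm X) = \EE[Y \mid \bm W = \bm w, \bm X]$ and $Y(\bm w) = f_Y(\bm w, \bm X) + E_Y$ for every $\bm w$, with a common idiosyncratic error across treatment levels.

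The first concrete step is to use this representation to cancel the error in the contrast: since both $Y(\bm W)$ and $Y(\bm W'_\calS, \bm W_{-\calS})$ carry the same $E_Y$, the difference equals $f_Y(\bm W, \bm X) - f_Y(\bm W'_\calS, \bm W_{-\calS}, \bm X)$, and the conditional variance in the numerator of \Cref{defi:total.conindependent} reduces to $\var\bigl(f_Y(\bm W, \bm X) - f_Y(\bm W'_\calS, \bm W_{-\calS}, \bm X) \mid \bm X\bigr)$. Because $\bm W'$ is an independent copy of $\bm W$ given $\bm X$ and \Cref{assu:conindependent} makes $\bm W_\calS$ and $\bm W_{-\calS}$ conditionally independent given $\bm X$, the tuple $(\bm W'_\calS, \bm W_{-\calS})$ has the same conditional law as $\bm W$ given $\bm X$; expanding by bilinearity and symmetry then yields
\[
2\var(f_Y(\bm W, \bm X) \mid \bm X) \;-\; 2\cov\!\bigl(f_Y(\bm W, \bm X),\, f_Y(\bm W'_\calS, \bm W_{-\calS}, \bm X) \mid \bm X\bigr).
\]

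For the covariance I would condition further on $(\bm W_{-\calS}, \bm X)$. Under \Cref{assu:conindependent}, $\bm W_\calS$ and $\bm W'_\calS$ are conditionally i.i.d.\ given $(\bm W_{-\calS}, \bm X)$, so the two factors become conditionally independent with common conditional mean $\EE[Y \mid \bm W_{-\calS}, \bm X]$. This gives $\EE[f_Y(\bm W, \bm X)\, f_Y(\bm W'_\calS, \bm W_{-\calS}, \bm X) \mid \bm X] = \EE[\EE^2[Y \mid \bm W_{-\calS}, \bm X] \mid \bm X]$, while a direct second-moment computation yields $\var(f_Y(\bm W, \bm X) \mid \bm X) = \EE[\EE^2[Y \mid \bm W, \bm X] \mid \bm X] - \EE^2[Y \mid \bm X]$. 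The $\EE^2[Y \mid \bm X]$ terms cancel between the variance and covariance pieces, leaving a clean conditional expression.

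Finally, I would take the outer expectation over $\bm X$ and use the tower property to collapse $\EE[\EE[\EE^2[Y \mid \bm W, \bm X] \mid \bm X]]$ to $\EE[\EE^2[Y \mid \bm W, \bm X]]$, and analogously for the $\bm W_{-\calS}$ term; dividing by $2\var(Y)$ then delivers the claimed identity. The main obstacle I anticipate is the identification step itself: without an additive, independent-noise assumption, the joint law of $\bigl(Y(\bm W), Y(\bm W'_\calS, \bm W_{-\calS})\bigr)$ given $\bm X$ depends on the unidentified joint distribution of potential outcomes, and the cancellation of $E_Y$ is unavailable. The lemma therefore effectively requires a conditional counterpart of \Cref{assu:additive.independent.error} alongside \Cref{assu:unconfounded} and \Cref{assu:conindependent}; once that is stated, the remainder is bookkeeping of conditional moments.
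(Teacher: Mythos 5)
Your proof is correct and lands on essentially the same argument as the paper, whose entire stated proof is that it ``is similar to that of \Cref{lemma:definition} but uses \Cref{LTCV}.'' The only cosmetic difference is in the bookkeeping: the paper decomposes the conditional variance via the law of total conditional variance with inner conditioning on $(Y(\cdot),\bm W_{-\calS})$ (so the conditional-mean term vanishes and the two conditional variances coincide), whereas you expand $\var(A-B\mid\bm X)$ into variance and covariance terms and kill the cross term by the conditional i.i.d.\ structure of $(\bm W_\calS,\bm W'_\calS)$ given $(\bm W_{-\calS},\bm X)$; these are interchangeable second-moment manipulations. Your closing observation is the substantive point and is well taken: the lemma as stated lists only \Cref{assu:unconfounded} and \Cref{assu:conindependent}, but the paper's proof of \Cref{lemma:definition} explicitly invokes \Cref{assu:additive.independent.error} to identify $\var(Y\mid Y(\cdot),\bm W_{-\calS})$ in terms of conditional means, so the covariate version implicitly requires the conditional analogue $Y=f_Y(\bm W,\bm X)+E_Y$ with $E_Y\independent(\bm W,\bm X)$ that you supply --- without it the joint law of the two potential outcomes given $\bm X$ is not identified and the identity fails.
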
 
The proof is similar to that of \Cref{lemma:definition} but uses \Cref{LTCV}.

\begin{definition} [Interaction explainability with covariates]
    Let $\bm W'$ be an independent copy of $\bm W$ conditional on $\bm X$.
    For $\calS \subseteq [K]$,
    \begin{align}\label{eq:interaction.convariance}
        \xi(\wedge_{k \in \calS} W_{k}) = \frac{\EE\left[\var\left(I_{\calS, \bm W'} (\bm W) \mid \bm X\right)\right]}{2^{|\calS|}\var\left(Y(\bm W)\right)}.
    \end{align}
\end{definition}

\section{Additional algorithms}\label{appe:sec:algorithm}

\begin{algorithm}[H]
\caption{Randomization test for degenerate null}
\label{algo:randomization.test}
\begin{algorithmic}
\State\textbf{Input:} Data $\{\bm W_i,Y_i\}_{i=1}^n$, test statistic $T( \bm W, Y)$ (e.g. arbitrary estimator of $\xi(\vee_{k \in \calS} W_k)$), number of permutations $B$.
\For{$b = 1,\ldots,B$}

   \State{Generate $\bm W^{(b)}:=(\bm W_{\calS,i}^{(b)}, \bm W_{-\calS,i})$ by permuting $\bm W_{\calS,i}$ while keeping $\bm W_{-\calS,i}$ fixed.}

   \State Impute $Y_i(\bm W^{(b)}) = Y_i$ and compute $T(\bm W^{(b)}, Y)$.

   \EndFor

   \State{\textbf{Output:} 
   $$P := \frac{1}{B+1}\Big[1+\sum_{b=1}^B \mathbbm{1}\{T(\bm W^{(b)}, Y) \geq T(\bm W, Y)\}\Big].$$   
   }
\end{algorithmic}
\end{algorithm}

\clearpage

\begin{algorithm}\label{algorithm:general}
\caption{General algorithm for independent treatments}

\begin{tikzpicture}[
    node distance=1.2cm,
    every node/.style={font=\footnotesize, align=center, inner sep=3pt},
    >=stealth
]

\tikzset{
    startstop/.style={rectangle, rounded corners, minimum width=2.2cm, minimum height=0.7cm, draw},
    process/.style={rectangle, minimum width=2.5cm, minimum height=0.8cm, draw},
    decision/.style={diamond, aspect=1.8, minimum width=1.8cm, draw}
}

\node (start) [startstop] {Start};
\node (test) [process, below=of start] {For $k \in [K]$, test $H_0:W_k = 0$};
\node (reject) [decision, below=1cm of test] {Reject?};

\node (xi0) [process, below left=0.5cm and 0.5cm of reject] {
    $\xi(W_k) = 0$, \\
    $\xi(\vee_{k' \in \calS}W_{k'}\vee W_k ) = \xi(\vee_{k' \in \calS}W_{k'})$, \\
    $\xi(\vee_{k' \in \calS}W_{k'} \wedge W_k) = 0$,\\
    $\forall \calS \subseteq [K]\setminus \{k\}$ such that $\xi(\vee_{k' \in \calS}W_{k'}) >0$};

\node (xi1) [process, below right=0.5cm and 0.5cm of reject] {
    $\xi(W_k) > 0$, \\
    $\xi(\vee_{k' \in \calS}W_{k'} \vee W_k) > 0$, \\
    $\xi(\vee_{k' \in \calS}W_{k'} \wedge W_k) = \sum_{\calS'\subseteq \{k,\calS\}}(-1)^{|\calS'|+1}\xi(\vee_{k''\in \calS'}W_{k''})$\footnotemark,\\
     $\forall \calS \subseteq [K]\setminus \{k\}$ such that $\xi(\vee_{k' \in \calS}W_{k'}) >0$};

\draw [->] (start) -- (test);
\draw [->] (test) -- (reject);
\draw [->] (reject) -| node[pos=0.25, above] {No} (xi0);
\draw [->] (reject) -| node[pos=0.25, above] {Yes} (xi1);

\end{tikzpicture}
\end{algorithm}

\footnotetext{There might exist a degenerate null problem for zero interaction explainability, see simulation results in \Cref{sec:simulation}.}

\section{Additional lemmas and corollaries}\label{appe:sec:lemma}

\begin{lemma}[\textit{Theorem 5.6.}\label{56T} in \cite{tsiatis2006semiparametric}]If no restrictions are put on the conditional density $p_{W|Z}(w|z)$, where the marginal density of $Z$ is assumed to be from the semi-parametric model $p_Z(z, \beta, \eta)$, then the orthogonal complement of the tangent space $\mathscr{T}^{WZ}$ for the semi-parametric model for the joint distribution of $(W, Z)$ (i.e., $\mathscr{T}^{WZ^\perp}$) is equal to the orthogonal complement of the tangent space $\mathscr{T}^Z$ for the semi-parametric model of the marginal distribution for $Z$ alone (i.e., $\mathscr{T}^{Z^\perp}$).
    
\end{lemma}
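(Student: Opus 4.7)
The plan is to exploit the factorization $p_{WZ}(w,z)=p_{W\mid Z}(w\mid z)\,p_Z(z)$ in order to decompose the joint score into orthogonal pieces, characterize each piece separately, and only then take orthogonal complements. First, I would set the Hilbert space $\mathcal{H}:=L_2^0(\PP_{WZ})$ of mean-zero, square-integrable functions of $(W,Z)$ with inner product $\langle f,g\rangle=\EE[fg]$. For any regular parametric submodel of the joint law, differentiating $\log p_{WZ}=\log p_{W\mid Z}+\log p_Z$ along the parameter yields the score decomposition $S^{WZ}=S^{W\mid Z}+S^{Z}$ in which $\EE[S^{W\mid Z}\mid Z]=0$ a.s.\ and $S^{Z}$ depends only on $Z$. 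By iterated expectations these two summands are orthogonal, giving
\[
\mathscr{T}^{WZ}=\mathscr{T}^{W\mid Z}\oplus\mathscr{T}^{Z},
\]
where $\mathscr{T}^{W\mid Z}$ is the closed linear span of conditional scores and $\mathscr{T}^{Z}$ that of marginal scores.

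Next I would identify $\mathscr{T}^{W\mid Z}$ exactly. Since the conditional density $p_{W\mid Z}$ is left entirely unrestricted, for any bounded $g\in\mathcal{H}$ with $\EE[g\mid Z]=0$ a.s.\ the one-parameter perturbation $p_{W\mid Z}^{\varepsilon}(w\mid z)\propto p_{W\mid Z}(w\mid z)(1+\varepsilon g(w,z))$ is a regular parametric submodel whose score at $\varepsilon=0$ equals $g$; a truncation-plus-$L_2$-approximation argument removes the boundedness restriction. This produces the characterization
\[
\mathscr{T}^{W\mid Z}=\{\,g\in\mathcal{H}:\EE[g(W,Z)\mid Z]=0 \ \text{a.s.}\,\}.
\]
Its orthogonal complement within $\mathcal{H}$ is precisely the subspace of (mean-zero) functions of $Z$ alone, because any $h\in\mathcal{H}$ orthogonal to every conditional-mean-zero $g$ must coincide with $\EE[h\mid Z]$ a.s.\ (consider $g=h-\EE[h\mid Z]$, which is conditional-mean-zero and satisfies $\langle h,g\rangle=\EE[g^2]$, forcing $g=0$).

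Finally I would take orthogonal complements of the direct sum. If $h\in\mathscr{T}^{WZ\perp}$, orthogonality to $\mathscr{T}^{W\mid Z}$ forces $h(W,Z)=h_0(Z)$ for some $h_0\in L_2^0(\PP_Z)$ (viewed inside $\mathcal{H}$), and orthogonality to $\mathscr{T}^{Z}$ then says $h_0\in\mathscr{T}^{Z\perp}$. Conversely, any $h_0(Z)\in\mathscr{T}^{Z\perp}$ is automatically orthogonal to $\mathscr{T}^{W\mid Z}$ by iterated expectations (since $\EE[h_0(Z)\,g(W,Z)]=\EE[h_0(Z)\,\EE[g\mid Z]]=0$) and orthogonal to $\mathscr{T}^{Z}$ by assumption, hence lies in $\mathscr{T}^{WZ\perp}$. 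This double inclusion establishes $\mathscr{T}^{WZ\perp}=\mathscr{T}^{Z\perp}$, with both spaces canonically identified as subspaces of $\mathcal{H}$.

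The main technical obstacle is the step characterizing $\mathscr{T}^{W\mid Z}$: realizing an arbitrary conditional-mean-zero $g$ as the score of a \emph{regular} parametric submodel requires constructing bounded-score perturbations of $p_{W\mid Z}$, verifying Hellinger differentiability uniformly in $z$, and then closing up in $L_2$. Once this identification is in hand, the remainder of the argument is purely a Hilbert-space orthogonal-decomposition computation.
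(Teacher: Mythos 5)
The paper does not prove this lemma at all: it is imported verbatim as Theorem 5.6 of Tsiatis (2006) and used as a black box, so there is no in-paper argument to compare against. Your proof is correct and is essentially the standard textbook derivation of that result: the orthogonal score decomposition $\mathscr{T}^{WZ}=\mathscr{T}^{W\mid Z}\oplus\mathscr{T}^{Z}$, the identification of $\mathscr{T}^{W\mid Z}$ with the full conditional-mean-zero subspace $\{g:\EE[g\mid Z]=0\}$ (via the perturbation $p_{W\mid Z}(1+\varepsilon g)$, which for bounded conditional-mean-zero $g$ is already normalized), the computation $\langle h,\,h-\EE[h\mid Z]\rangle=\EE[(h-\EE[h\mid Z])^2]$ showing that $(\mathscr{T}^{W\mid Z})^{\perp}$ consists of functions of $Z$ alone, and the final intersection of complements are all sound. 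Your closing caveat is also the right one: the equality $\mathscr{T}^{WZ\perp}=\mathscr{T}^{Z\perp}$ only makes sense once $\mathscr{T}^{Z\perp}$ (computed inside $L_2^0(\PP_Z)$) is canonically embedded into $L_2^0(\PP_{WZ})$, which is exactly how the paper uses the result in \Cref{auxiliary1} and \Cref{auxiliary2}.
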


\begin{lemma}\label{auxiliary1}
Fix $\calS\subseteq [K]$ such that $0<|-\calS|<K$.
Consider the semiparametric model $\mathcal P_{\text{blk}}$ for $(Y,\bm W)=(Y,\bm W_{-\calS},\bm W_\calS)$ defined by
\[
p(y,\bm w_{-\calS},\bm w_\calS)=p_{Y\mid \bm W}\cdot p_{\bm W_{-\calS}}\cdot p_{\bm W_\calS},
\]
where we put no restrictions on
$p_{Y\mid \bm W}$, $p_{\bm W_{-\calS}}$, or $p_{\bm W_{\calS}}$ beyond domination and square integrability of scores.

Let $\dot{\mathcal P}_{\mathrm{blk}}$ be the tangent space of $\mathcal P_{\text{blk}}$ at $\PP$.
Let $\mathcal P_{\text{np}}(Y,\bm W_{-\calS})$ be the unrestricted model for the marginal law of $(Y,\bm W_{-\calS})$, with tangent space $\dot{\mathcal P}_{Y,\bm W_{-\calS}} = L_2^0(\PP_{Y,\bm W_{-\calS}})$.

Then
\[
\dot{\mathcal P}^\perp_{Y,\bm W_{-\calS}}=\{0\},
\ \text{but}\
\dot{\mathcal P}^\perp_{\mathrm{blk}}\neq \{0\}.
\]
\end{lemma}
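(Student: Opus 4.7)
The plan is to use the product-likelihood factorization in $\mathcal P_{\text{blk}}$ to decompose the tangent space into three mutually orthogonal pieces and then exhibit a non-zero ``interaction'' direction in the orthogonal complement.

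For the first claim, $\mathcal P_{\mathrm{np}}(Y,\bm W_{-\calS})$ places no restriction on the joint law of $(Y,\bm W_{-\calS})$, so its tangent space coincides by construction with the full Hilbert space $L_2^0(\PP_{Y,\bm W_{-\calS}})$; the orthogonal complement of the whole space in itself is $\{0\}$, which gives $\dot{\mathcal P}^\perp_{Y,\bm W_{-\calS}}=\{0\}$.

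For the second claim, I would first derive the tangent-space decomposition
\[
\dot{\mathcal P}_{\mathrm{blk}} \;=\; \calT_1 \;\oplus\; \calT_2 \;\oplus\; \calT_3,
\]
where $\calT_1=\{s(Y,\bm W):\EE[s\mid \bm W]=0\}$, $\calT_2=\{t(\bm W_{-\calS}):\EE[t]=0\}$, and $\calT_3=\{u(\bm W_{\calS}):\EE[u]=0\}$, by considering one-dimensional regular parametric submodels that perturb each factor separately. Under the product factorization $\PP_{\bm W}=\PP_{\bm W_{-\calS}}\otimes \PP_{\bm W_\calS}$ guaranteed by the block structure (and \Cref{prod_measure}), the three subspaces are pairwise orthogonal in $L_2^0(\PP)$. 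A function $f\in L_2^0(\PP)$ then belongs to $\dot{\mathcal P}^\perp_{\mathrm{blk}}$ iff (i) $f$ is $\sigma(\bm W)$-measurable, forced by orthogonality to $\calT_1$; (ii) $\EE[f\mid \bm W_{-\calS}]=0$, forced by orthogonality to $\calT_2$ together with $\EE[f]=0$; and (iii) $\EE[f\mid \bm W_\calS]=0$ by the symmetric argument for $\calT_3$.

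Because $0<|-\calS|<K$, both sub-vectors are non-trivial, so I can pick non-constant mean-zero $g(\bm W_{-\calS})\in L_2^0(\PP_{\bm W_{-\calS}})$ and $h(\bm W_\calS)\in L_2^0(\PP_{\bm W_\calS})$. Setting $f(\bm W):= g(\bm W_{-\calS})\,h(\bm W_\calS)$, independence of the two blocks yields $\EE[f\mid \bm W_{-\calS}]=g(\bm W_{-\calS})\,\EE[h(\bm W_\calS)]=0$ and symmetrically $\EE[f\mid \bm W_\calS]=0$, while $\|f\|_{L_2(\PP)}=\|g\|\,\|h\|>0$. Hence $f$ is a nonzero element of $\dot{\mathcal P}^\perp_{\mathrm{blk}}$, proving $\dot{\mathcal P}^\perp_{\mathrm{blk}}\neq\{0\}$. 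The main obstacle is the first step—rigorously justifying the tangent-space decomposition and, in particular, verifying pairwise orthogonality of the three pieces under the product law; once that is in hand, the constructed interaction direction $g\cdot h$ makes transparent why the ``conditional-unrestricted'' hypothesis of \Cref{56T} fails for $\mathcal P_{\mathrm{blk}}$.
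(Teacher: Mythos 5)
Your proof is correct and follows essentially the same route as the paper's: the same three-way orthogonal decomposition of $\dot{\mathcal P}_{\mathrm{blk}}$ into the conditional-score piece and the two marginal-score pieces, and the same product construction $g(\bm W_{-\calS})\,h(\bm W_{\calS})$ of a nonzero element of the orthocomplement. Your extra step of fully characterizing $\dot{\mathcal P}^\perp_{\mathrm{blk}}$ is a harmless strengthening; like the paper, you assert rather than fully derive the tangent-space decomposition, so there is no gap relative to the paper's own level of rigor.
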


\begin{proof}
Since $\mathcal P_{\text{np}}(Y,\bm W_{-\calS})$ is unrestricted, its tangent space is $L_2^0(\PP_{Y,\bm W_{-\calS}})$ and hence
$\dot{\mathcal P}^\perp_{Y,\bm W_{-\calS}}=\{0\}$.

Under $\mathcal P_{\text{blk}}$, the score contributions from $p_{Y\mid \bm W}$, $p_{\bm W_{-\calS}}$, and $p_{\bm W_{\calS}}$ vary
independently, so the tangent space decomposes orthogonally as
\[
\dot{\mathcal P}_{\mathrm{blk}}
=
\dot{\mathcal P}_{Y\mid \bm W}\ \oplus^\perp\ \dot{\mathcal P}_{\bm W_{-\calS}}\ \oplus^\perp\ \dot{\mathcal P}_{\bm W_\calS},
\]
where
\[
\dot{\mathcal P}_{Y\mid \bm W}=\{s(Y,\bm W):\EE[s\mid \bm W]=0\},\quad
\dot{\mathcal P}_{\bm W_{-\calS}}=\{s(\bm W_{-\calS}):\EE[s]=0\},\quad
\dot{\mathcal P}_{\bm W_\calS}=\{s(\bm W_\calS):\EE[s]=0\}.
\]

Hence $h(Y,\bm W)\in \dot{\mathcal P}^\perp_{\mathrm{blk}} = \dot{\mathcal P}^\perp_{Y\mid \bm W}\ \cap\ \dot{\mathcal P}^\perp_{\bm W_{-\calS}}\ \cap\ \dot{\mathcal P}^\perp_{\bm W_\calS}$ iff it is orthogonal to each component.
First, any $h(\bm W)$ is orthogonal to $\dot{\mathcal P}_{Y\mid \bm W}$ because for $s\in\dot{\mathcal P}_{Y\mid \bm W}$,
\[
\EE[h(\bm W)s(Y,\bm W)] = \EE\{h(\bm W)\EE[s(Y,\bm W)\mid \bm W]\}=0.
\]
So it suffices to find a nonzero $h(\bm W)$ orthogonal to both $\dot{\mathcal P}_{\bm W_{-\calS}}$ and $\dot{\mathcal P}_{\bm W_\calS}$.

Pick square-integrable functions $u(\bm W_\calS)$ and $v(\bm W_{-\calS})$ such that
$\EE[u(\bm W_\calS)]=\EE[v(\bm W_{-\calS})]=0$ and neither is a.s.\ zero. Define
\[
h(\bm W):=u(\bm W_\calS)\,v(\bm W_{-\calS}).
\]
Then $h(\bm W)\not\equiv 0$. Moreover, for any mean-zero $s(\bm W_\calS)$,
using block independence and $\EE[v]=0$,
\[
\EE[h(\bm W)\,s(\bm W_\calS)]
=
\EE\!\left[ v(\bm W_{-\calS}) \right]\,
\EE\!\left[u(\bm W_\calS)s(\bm W_\calS)\right]
=0,
\]
and similarly for any mean-zero $s(\bm W_{-\calS})$,
\[
\EE[h(\bm W)\,s(\bm W_{-\calS})]
=
\EE\!\left[ u(\bm W_\calS)\right]\,
\EE\!\left[v(\bm W_{-\calS})s(\bm W_{-\calS})\right]
=0.
\]
Thus $h\in \dot{\mathcal P}^\perp_{\mathrm{blk}}$ and the orthocomplement is nontrivial.
\end{proof}

\begin{lemma}\label{auxiliary2}
Assume $K\ge 2$ and the full-independence model $\mathcal P_{\text{ind}}$ for $(Y,\bm W)$:
\[
p(y,\bm w)=p_{Y\mid \bm W}\prod_{k=1}^K p_{w_k},
\]
with no restrictions on $p_{Y\mid \bm W}$ and each $p_{w_k}$ beyond domination and square-integrable scores.
Let $\dot{\mathcal P}_{\text{ind}}$ be its tangent space at $\PP$.
Let $\mathcal P_{\text{np}}(Y)$ be the unrestricted model for the marginal law of $Y$ alone,
with tangent space $\dot{\mathcal P}_Y=L_2^0(\PP_Y)$.

Then
\[
\dot{\mathcal P}_Y^\perp=\{0\},
\ \text{but}\
\dot{\mathcal P}_{\text{ind}}^\perp\neq \{0\}.
\]
\end{lemma}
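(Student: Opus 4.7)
The plan is to mirror the strategy used in \Cref{auxiliary1}, adapting it from a two-block independence structure to the fully factored model with $K$ independent marginals. First, I would dispose of the easy claim $\dot{\mathcal P}_Y^\perp=\{0\}$ immediately: since the marginal model for $Y$ is unrestricted, its tangent space is all of $L_2^0(\PP_Y)$, whose orthocomplement in $L_2^0(\PP_Y)$ is trivial by definition.

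The main work is showing $\dot{\mathcal P}_{\text{ind}}^\perp \neq \{0\}$. I would begin by writing down the orthogonal tangent-space decomposition induced by the factorization $p(y,\bm w)=p_{Y\mid\bm W}(y\mid\bm w)\prod_{k=1}^K p_{W_k}(w_k)$, namely
\[
\dot{\mathcal P}_{\text{ind}}
=\dot{\mathcal P}_{Y\mid \bm W}\ \oplus^\perp\ \bigoplus_{k=1}^K{}^{\perp}\,\dot{\mathcal P}_{W_k},
\]
where $\dot{\mathcal P}_{Y\mid\bm W}=\{s(Y,\bm W):\EE[s\mid\bm W]=0\}$ and $\dot{\mathcal P}_{W_k}=\{s(W_k):\EE[s(W_k)]=0\}$. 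The orthogonality across the $K+1$ summands follows because scores in distinct factors are mean-zero after conditioning appropriately, so their $L_2(\PP)$ inner products vanish by the tower property and independence.

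Next, I would exhibit an explicit nonzero element of the orthocomplement. Using $K\ge 2$, choose square-integrable, non-a.s.-zero, mean-zero functions $u_1(W_1)$ and $u_2(W_2)$, and set
\[
h(\bm W):=u_1(W_1)\,u_2(W_2).
\]
Then $h\not\equiv 0$ and $\EE[h]=\EE[u_1]\EE[u_2]=0$. Orthogonality to $\dot{\mathcal P}_{Y\mid\bm W}$ is automatic since $h$ depends only on $\bm W$, so $\EE[h\,s]=\EE\{h\,\EE[s\mid\bm W]\}=0$ for $s\in\dot{\mathcal P}_{Y\mid\bm W}$. For $\dot{\mathcal P}_{W_k}$, independence of the $W_j$'s gives, for any mean-zero $s(W_k)$,
\[
\EE[h(\bm W)\,s(W_k)]=\EE[u_1(W_1)\,s(W_k)\,\mathbf 1_{k=1}+\cdots]\cdot(\text{other factor means})=0,
\]
since whichever factor is not hit by $s$ contributes its mean, which is zero. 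Concretely: for $k=1$, $\EE[u_1 s]\cdot\EE[u_2]=0$; for $k=2$, $\EE[u_1]\cdot\EE[u_2 s]=0$; and for $k\ge 3$, $\EE[u_1]\EE[u_2]\EE[s]=0$. Hence $h\in\dot{\mathcal P}_{\text{ind}}^\perp\setminus\{0\}$.

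I do not expect a genuine obstacle here; the only subtle point is justifying the orthogonal direct sum decomposition of $\dot{\mathcal P}_{\text{ind}}$, which rests on the same standard fact invoked in the proof of \Cref{auxiliary1}: under a factored likelihood, the scores of each factor lie in pairwise orthogonal closed subspaces of $L_2^0(\PP)$. Given $K\ge 2$, the construction $h=u_1(W_1)u_2(W_2)$ suffices, so no higher-order tensor products are needed.
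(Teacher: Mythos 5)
Your proposal is correct and follows essentially the same route as the paper's proof: trivial orthocomplement for the unrestricted marginal model, the orthogonal decomposition of $\dot{\mathcal P}_{\text{ind}}$ into the conditional score space and the $K$ marginal score spaces, and the witness $h=u_1(W_1)u_2(W_2)$ (the paper uses $u(W_i)v(W_j)$ for generic distinct indices, which is the same construction up to relabeling). No gaps.
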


\begin{proof}
Since $\mathcal P_{\text{np}}(Y)$ is unrestricted, $\dot{\mathcal P}_Y=L_2^0(\PP_Y)$ and thus $\dot{\mathcal P}_Y^\perp=\{0\}$.

Under $\mathcal P_{\mathrm{ind}}$, the tangent space decomposes orthogonally as
\[
\dot{\mathcal P}_{\text{ind}}
=
\bigoplus_{k\in[K]}^\perp\dot{\mathcal P}_{W_k}\oplus^\perp \dot{\mathcal P}_{Y\mid \bm W},
\]
where $\dot{\mathcal P}_{Y\mid \bm W}=\{s(Y,\bm W):\EE[s\mid \bm W]=0\}$ and
$\dot{\mathcal P}_{W_k}=\{s(W_k):\EE[s]=0\}$.

Pick distinct indices $i\neq j$ and choose square-integrable $u(W_i),v(W_j)$ with
$\EE[u(W_i)]=\EE[v(W_j)]=0$ and not a.s.\ zero. Let
\[
h(\bm W):=u(W_i)v(W_j).
\]
Then $h(\bm W)\not\equiv 0$. As in \Cref{auxiliary1}, $h(\bm W)\perp \dot{\mathcal P}_{Y\mid \bm W}$.
For any $s(W_k)\in \dot{\mathcal P}_{W_k}$, mutual independence and centering imply
\[
\EE[h(\bm W)s(W_i)] = \EE[v(W_j)]\,\EE[u(W_i)s(W_i)]=0,\
\EE[h(\bm W)s(W_j)] = \EE[u(W_i)]\,\EE[v(W_j)s(W_j)]=0,
\]
and for $k\notin\{i,j\}$,
\[
\EE[h(\bm W)s(W_k)] = \EE[h(\bm W)]\,\EE[s(W_k)]=0.
\]
Thus $h\in \dot{\mathcal P}_{\text{ind}}^\perp = \bigcap_{k\in[K]}\dot{\mathcal P}_{W_k}^\perp\cap \dot{\mathcal P}_{Y\mid \bm W}^\perp$, proving the orthocomplement is nontrivial.
\end{proof}

\begin{corollary}\label{remark: propensity.2}If $\PP_{\bm W}$ is known, 
the efficient influence function takes the form
\begin{align*}
 \frac{2Y\mathbb{E}[Y\mid \bm W_{-\calS}] - 2\mathbb{E}[Y\mid \bm W_{-\calS}]^2}{\var\left(Y\right)}
    -\frac{\Big\{\left(Y - \EE\left(Y\right)\right)^2-\var\left(Y\right)\Big\}\cdot\mathbb{E}\left[\mathbb{E}\left[Y\mid \bm W_{-\calS}\right]^2\right]}{\var\left(Y\right)^2}.
\end{align*}

\end{corollary}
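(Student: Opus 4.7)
The plan is to derive the claimed influence function by combining representative influence functions for the numerator $A:=\mathbb{E}[\mu_{-\calS}^2(\bm W_{-\calS})]$ and denominator $B:=\var(Y)$ via the quotient rule, writing $\mu_{-\calS}(\bm W_{-\calS}) := \mathbb{E}[Y\mid\bm W_{-\calS}]$ and $\mu(\bm W):=\mathbb{E}[Y\mid\bm W]$ throughout for brevity. First I would pin down the tangent space of the restricted model: with $\PP_{\bm W}$ held fixed, only $\PP_{Y\mid\bm W}$ varies, so admissible perturbations $p_\epsilon(y\mid\bm w) = p(y\mid\bm w)\bigl(1+\epsilon s(y,\bm w)\bigr)$ must satisfy $\mathbb{E}[s(Y,\bm W)\mid\bm W]=0$, giving
\[
\dot{\mathcal{P}}_{\PP_{\bm W}\text{ known}} \;=\; \bigl\{s(Y,\bm W)\in L_2^0(\PP)\colon \mathbb{E}[s\mid\bm W]=0\bigr\},
\]
a proper subspace of the nonparametric tangent space $L_2^0(\PP)$, whose orthocomplement inside $L_2^0(\PP)$ consists of mean-zero functions of $\bm W$ alone.

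Next I would compute pathwise derivatives in this submodel. Since $\PP_{\bm W}$ is fixed, a direct differentiation gives $\tfrac{d}{d\epsilon}\mu_{-\calS,\epsilon}(\bm w_{-\calS}) = \mathbb{E}[Ys\mid\bm W_{-\calS}=\bm w_{-\calS}]$, hence $\tfrac{d}{d\epsilon}A_\epsilon = 2\,\mathbb{E}[\mu_{-\calS}(\bm W_{-\calS})\,Y\,s]$. Because $\mu_{-\calS}^2$ is a function of $\bm W$ alone, its pairing with any score $s$ satisfying $\mathbb{E}[s\mid\bm W]=0$ vanishes by the tower property, so $\varphi_A := 2\mu_{-\calS}(Y-\mu_{-\calS})$ is mean-zero and satisfies $\mathbb{E}[\varphi_A s] = 2\,\mathbb{E}[\mu_{-\calS}Ys]$, i.e.\ it is a valid influence function for $A$. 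Analogously, using $\mathbb{E}[s]=0$, one gets $\tfrac{d}{d\epsilon}B_\epsilon = \mathbb{E}[(Y-\mathbb{E}[Y])^2 s]$, yielding $\varphi_B := (Y-\mathbb{E}[Y])^2 - B$ for $B$.

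Finally I would apply the quotient rule $\varphi = \varphi_A/B - A\,\varphi_B/B^2$ and substitute the above expressions, which reproduces the stated formula after collecting terms. The main obstacle is the nonuniqueness of $\varphi_A$ (and of $\varphi_B$): each is determined only up to an element of the orthocomplement of the tangent space, so a choice such as $2\mu_{-\calS}(Y-\mu)$—which would arise, for instance, by projecting the nonparametric $\varphi_{\IF}$ in Eq.~\eqref{eq:IF.np} onto $\dot{\mathcal{P}}_{\PP_{\bm W}\text{ known}}$ via $\varphi_{\IF} - \mathbb{E}[\varphi_{\IF}\mid\bm W]$—is equally valid but yields a different algebraic form. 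The key technical step is making this representative choice transparent and verifying the orthogonality relations $\mathbb{E}[\mu_{-\calS}^2 s]=0$ and $\mathbb{E}[B\,s]=0$ for all admissible $s$, so that the particular combination above recovers the compact closed-form expression in the corollary.
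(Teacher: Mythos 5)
Your overall strategy---fix the reduced tangent space $\{s(Y,\bm W):\EE[s\mid\bm W]=0\}$, compute pathwise derivatives of the numerator $A=\EE[\mu_{-\calS}^2]$ and denominator $B=\var(Y)$ along submodel paths, and combine by the quotient rule---is sound, and it is essentially a from-scratch version of the paper's one-line argument, which simply ``removes the irrelevant parts'' of the earlier efficient influence function, i.e.\ projects it onto the new tangent space. Your derivative computations $\tfrac{d}{d\epsilon}A_\epsilon=2\,\EE[\mu_{-\calS}Ys]$ and $\tfrac{d}{d\epsilon}B_\epsilon=\EE[(Y-\EE[Y])^2 s]$ are correct, and your chosen representatives $\varphi_A=2\mu_{-\calS}(Y-\mu_{-\calS})$ and $\varphi_B=(Y-\EE[Y])^2-B$ do reproduce the displayed formula.

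The problem is the issue you defer to your final paragraph as ``making the representative choice transparent'': it is not a presentational matter but a genuine gap. The efficient influence function is by definition the \emph{unique} gradient lying in the tangent space, so there is no freedom of choice; to prove the corollary as stated you must verify $\EE[\varphi\mid\bm W]=0$ for the displayed $\varphi$, and this fails. Indeed $\EE\!\left[2Y\mu_{-\calS}-2\mu_{-\calS}^2\mid\bm W\right]=2\mu_{-\calS}(\mu-\mu_{-\calS})$ and $\EE\!\left[(Y-\EE[Y])^2-\var(Y)\mid\bm W\right]=\EE[(Y-\EE[Y])^2\mid\bm W]-\var(Y)$, and these do not vanish or cancel in general. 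What your argument actually establishes is that the displayed expression is \emph{an} influence function (a gradient) in the model with $\PP_{\bm W}$ known, differing from the efficient one by a mean-zero function of $\bm W$. Carrying out the projection $\varphi\mapsto\varphi-\EE[\varphi\mid\bm W]$ that you yourself mention yields $\frac{2\mu_{-\calS}(Y-\mu)}{\var(Y)}-\frac{\EE[\mu_{-\calS}^2]\left\{(Y-\EE[Y])^2-\EE[(Y-\EE[Y])^2\mid\bm W]\right\}}{\var(Y)^2}$, which is the element of the tangent space and hence the actual efficient influence function; it is algebraically different from the corollary's display (this tension is present in the paper's own one-line proof sketch as well). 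So either complete the projection and note the discrepancy with the stated formula, or weaken the claim to ``an influence function''; as written, your proof does not establish efficiency of the displayed expression.
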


The proof is simple as we only need to remove the irrelevant parts from the original EIF according to the new tangent space.

\begin{lemma}[\textit{Lemma 2.} in \cite{kennedy2020ins}]\label{l2k}Let $\widehat{f}(\mathbf{o})$ be a function estimated from a sample $\mathbf{O}^N = (\mathbf{O}_{n+1}, \dots, \mathbf{O}_N)$, and let $\mathbb{P}_n$ denote the empirical measure over $(\mathbf{O}_1, \dots, \mathbf{O}_n)$, which is independent of $\mathbf{O}^N$. Then
\[
(\mathbb{P}_n - \mathbb{P})(\widehat{f} - f) = O_{\mathbb{P}}\left( \frac{\| \widehat{f} - f \|}{\sqrt{n}} \right).
\]

\end{lemma}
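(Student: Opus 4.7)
The plan is to proceed by conditioning on the training sample $\mathbf{O}^N$, which makes $\widehat{f}$ a deterministic function on the remaining sample $(\mathbf{O}_1,\dots,\mathbf{O}_n)$, and then to apply a second-moment/Chebyshev argument. The key observation is that sample splitting decouples the randomness in $\widehat{f}$ from the empirical process over which it is evaluated, so the empirical process $(\mathbb{P}_n-\mathbb{P})$ acts on $\widehat{f}-f$ as if $\widehat{f}-f$ were a fixed function, without any Donsker-type condition.

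First, I would condition on $\mathbf{O}^N$ and compute the first two conditional moments of $(\mathbb{P}_n-\mathbb{P})(\widehat{f}-f)$. Writing
\[
(\mathbb{P}_n-\mathbb{P})(\widehat{f}-f)=\frac{1}{n}\sum_{i=1}^{n}\bigl\{(\widehat{f}-f)(\mathbf{O}_i)-\mathbb{E}[(\widehat{f}-f)(\mathbf{O})\mid \mathbf{O}^N]\bigr\},
\]
and using independence of $(\mathbf{O}_1,\dots,\mathbf{O}_n)$ from $\mathbf{O}^N$, the conditional mean is zero and the conditional variance equals $n^{-1}\mathrm{Var}\bigl[(\widehat{f}-f)(\mathbf{O})\mid \mathbf{O}^N\bigr]\le n^{-1}\mathbb{E}\bigl[(\widehat{f}-f)(\mathbf{O})^2\mid \mathbf{O}^N\bigr]=n^{-1}\|\widehat{f}-f\|^2$, where the $L_2(\mathbb{P})$ norm treats $\widehat{f}$ as fixed (so $\|\widehat{f}-f\|$ is itself $\mathbf{O}^N$-measurable).

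Next, I would apply the conditional Chebyshev inequality: for any $M>0$,
\[
\mathbb{P}\Bigl(\bigl|(\mathbb{P}_n-\mathbb{P})(\widehat{f}-f)\bigr|> M\,\|\widehat{f}-f\|/\sqrt{n}\ \Big|\ \mathbf{O}^N\Bigr)\le \frac{1}{M^{2}}.
\]
Taking unconditional expectations preserves this bound, giving
\[
\mathbb{P}\Bigl(\sqrt{n}\,\bigl|(\mathbb{P}_n-\mathbb{P})(\widehat{f}-f)\bigr|/\|\widehat{f}-f\|> M\Bigr)\le \frac{1}{M^{2}},
\]
which is exactly the definition of the rate $O_{\mathbb{P}}(\|\widehat{f}-f\|/\sqrt{n})$.

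The step that requires a little care, rather than being a serious obstacle, is the interpretation of $\|\widehat{f}-f\|$: it must be understood as the random quantity $\bigl(\mathbb{E}[(\widehat{f}(\mathbf{O})-f(\mathbf{O}))^2\mid \mathbf{O}^N]\bigr)^{1/2}$, so that the inequality relating the probability bound to this random norm remains valid after de-conditioning. One should also verify measurability of $\widehat{f}$ as a function of $(\mathbf{O},\mathbf{O}^N)$ so that the conditional expectations above are well defined, but under any standard estimator this is automatic. No Donsker or entropy condition is needed; sample splitting alone carries the argument.
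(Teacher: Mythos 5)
Your proof is correct and is essentially the standard argument for this result: the paper itself does not reprove the lemma (it imports it directly as Lemma~2 of the cited Kennedy et al.\ reference), and the proof there is exactly your conditioning-on-$\mathbf{O}^N$ plus conditional-Chebyshev argument. The only point worth noting is the trivial edge case $\|\widehat{f}-f\|=0$, in which the left-hand side is almost surely zero conditionally, so the claimed bound still holds.
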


\begin{lemma}\label{L2swap-train}
Let $\bm W_1,\dots,\bm W_n$ be i.i.d.\ from $\PP$. Fix a fold $l\subset\{1,\dots,n\}$
with $m:=|l|$ and let $n_{-l}:=n-m$. Define the empirical measure on the training
sample $-l$ by
\[
\widehat{\PP}_{-l} f := \frac{1}{n_{-l}}\sum_{i\notin l} f(\bm W_i).
\]
Let $h_{-l}:\mathcal{W}\to\mathbb{R}$ be $\sigma(\bm W_i:i\notin l)$-measurable. Assume there exists
a nonrandom function class $\mathcal{H}$ such that $h_{-l}\in\mathcal{H}$ a.s.\ for all $l$.
Let $\mathcal{F} := \{h^2: h\in\mathcal{H}\}$. If $\mathcal{F}$ is $\PP$--Donsker, then
\[
\big|\|h_{-l}\|_{L_2(\widehat{\PP}_{-l})}^2-\|h_{-l}\|_{L_2(\PP)}^2\big|
= \big|(\widehat{\PP}_{-l}-\PP)\,h_{-l}^2\big| = O_{\PP}(n_{-l}^{-1/2}).
\]


In particular, under fixed $K$-fold cross-fitting, $n_{-l}\asymp n$, hence the
difference is $o_{\PP}(1/\sqrt{n})$.
\end{lemma}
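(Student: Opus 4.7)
The plan is to rewrite the squared-norm difference as a single empirical-process increment and then invoke the Donsker assumption on $\mathcal{F}$ to control that increment at the data-dependent function $h_{-l}^2$ via a uniform bound over $\mathcal{F}$.

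First, I would expand the squared-norm difference. By the definitions of the two $L_2$ norms,
\[
\|h_{-l}\|^2_{L_2(\widehat{\PP}_{-l})} - \|h_{-l}\|^2_{L_2(\PP)}
= \widehat{\PP}_{-l}\, h_{-l}^2 - \PP\, h_{-l}^2
= (\widehat{\PP}_{-l} - \PP)\, h_{-l}^2,
\]
which yields the first equality in the claimed display and reduces the proof to bounding the single random quantity $(\widehat{\PP}_{-l}-\PP)\, h_{-l}^2$.

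Second, I would invoke the Donsker hypothesis to obtain uniform control over $\mathcal{F}$. Since $\bm W_1,\dots,\bm W_n$ are i.i.d.\ from $\PP$, the training sub-sample $\{\bm W_i : i \notin l\}$ consists of $n_{-l}$ i.i.d.\ draws from $\PP$, so the $\PP$-Donsker property implies that the empirical process $f \mapsto \sqrt{n_{-l}}\,(\widehat{\PP}_{-l} - \PP) f$ converges weakly in $\ell^\infty(\mathcal{F})$ to a tight Gaussian limit $\mathbb{G}_\PP$. Applying the continuous mapping theorem to the sup-norm functional on $\ell^\infty(\mathcal{F})$ gives
\[
\sup_{f \in \mathcal{F}} \bigl|\sqrt{n_{-l}}\,(\widehat{\PP}_{-l} - \PP) f\bigr|\ \rightsquigarrow\ \|\mathbb{G}_\PP\|_{\mathcal{F}},
\]
and tightness of the limit forces $\sup_{f \in \mathcal{F}} \bigl|\sqrt{n_{-l}}\,(\widehat{\PP}_{-l} - \PP) f\bigr| = O_{\PP}(1)$. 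Then, because $\mathcal{H}$ is nonrandom and $h_{-l} \in \mathcal{H}$ a.s., we have $h_{-l}^2 \in \mathcal{F}$ a.s., so
\[
\bigl|(\widehat{\PP}_{-l} - \PP)\, h_{-l}^2\bigr|
\le \frac{1}{\sqrt{n_{-l}}}\sup_{f \in \mathcal{F}} \bigl|\sqrt{n_{-l}}\,(\widehat{\PP}_{-l} - \PP) f\bigr|
= O_{\PP}\bigl(n_{-l}^{-1/2}\bigr),
\]
which is the stated rate. The ``in particular'' conclusion is immediate from $n_{-l} \asymp n$ under fixed $K$-fold cross-fitting.

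The main subtlety, rather than a genuine obstacle, is that $h_{-l}$ is constructed from the same observations that define $\widehat{\PP}_{-l}$, so the ``independent estimator'' device used in \Cref{l2k} is unavailable. The Donsker hypothesis is precisely the tool that replaces independence by uniform control of the empirical process over $\mathcal{F}$, thereby permitting the pointwise bound to hold at the random element $h_{-l}^2 \in \mathcal{F}$.
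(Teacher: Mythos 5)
Your proposal is correct and follows essentially the same route as the paper's own proof: rewrite the squared-norm difference as the empirical-process increment $(\widehat{\PP}_{-l}-\PP)h_{-l}^2$, dominate it by $\sup_{f\in\mathcal F}|(\widehat{\PP}_{-l}-\PP)f|$ using $h_{-l}^2\in\mathcal F$ a.s., and invoke asymptotic tightness of the empirical process in $\ell^\infty(\mathcal F)$ from the Donsker property to conclude the $O_{\PP}(n_{-l}^{-1/2})$ rate. Your added detail on the continuous mapping theorem is a harmless elaboration of the same tightness argument.
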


\begin{proof}
Note that $\|h_{-l}\|_{L_2(\widehat{\PP}_{-l})}^2=\widehat{\PP}_{-l}[h_{-l}^2]$ and
$\|h_{-l}\|_{L_2(\PP)}^2=\PP[h_{-l}^2]$, so the left-hand side equals
$|(\widehat{\PP}_{-l}-\PP)h_{-l}^2|$.

Because $h_{-l}\in\mathcal{H}$ a.s., we have $h_{-l}^2\in\mathcal{F}$ a.s., hence the domination
\[
\big|(\widehat{\PP}_{-l}-\PP)h_{-l}^2\big|
\le \sup_{f\in\mathcal{F}}\big|(\widehat{\PP}_{-l}-\PP)f\big|.
\]

 If $\mathcal{F}$ is $\PP$--Donsker, then the empirical process
\(
\mathbb{G}_{-l}:=\sqrt{n_{-l}}(\widehat{\PP}_{-l}-\PP)
\)
is asymptotically tight in $\ell^\infty(\mathcal{F})$, which implies
\[
\sup_{f\in\mathcal{F}}|\mathbb{G}_{-l}f| = O_{\PP}(1).
\]
Therefore,
\[
\sqrt{n_{-l}}\big|(\widehat{\PP}_{-l}-\PP)h_{-l}^2\big|
\le \sup_{f\in\mathcal{F}}|\mathbb{G}_{-l}f|
=O_{\PP}(1),
\]
i.e.\ $|(\widehat{\PP}_{-l}-\PP)h_{-l}^2|=O_{\PP}(n_{-l}^{-1/2})$.
\end{proof}

\section{Omitted proofs}\label{appe:sec:proof}

\subsection{Proof of \Cref{lemma:definition}}\label{appe:sec:derivation:IF.np}

    \begin{proof} \begin{align*}
    \xi(\vee_{k \in \calS} W_k)  := & \frac{\var\left(Y - Y(\bm W'_{\calS}, \bm W_{-\calS})\right)}{2\var\left(Y\right)}\\
     = & \frac{\EE\left[\var\left(Y - Y(\bm W'_{\calS}, \bm W_{-\calS})\mid Y(\cdot), \bm W_{-\calS}\right)\right]}{2\var\left(Y\right)}\\
     &+ \frac{\var\left[\EE\left(Y - Y(\bm W'_{\calS}, \bm W_{-\calS})\mid  Y(\cdot), \bm W_{-\calS}\right)\right]}{2\var\left(Y\right)} \  (\text{\Cref{eq:super.population}}) \\
     = & \frac{\EE\left[\var\left(Y \mid Y(\cdot), \bm W_{-\calS}\right)\right]+\EE\left[\var\left( Y(\bm W'_{\calS}, \bm W_{-\calS}) \mid Y(\cdot), \bm W_{-\calS}\right)\right]}{2\var\left(Y\right)} \  (\text{\Cref{assu:independent}})\\
     = & \frac{\EE\left[\var\left(Y \mid Y(\cdot), \bm W_{-\calS}\right)\right]}{\var\left(Y\right)} \  (\text{independent copy}) \\ 
         =& \frac{\mathbb{E}[\var(Y\mid \bm W_{-\mathcal{S}}, E_Y)]}{\var\left(Y\right)} \quad (\text{\Cref{defi:total.independent}})\\
         =& \frac{\mathbb{E}[\var(\EE[Y \mid \bm W] + E_Y \mid \bm W_{-\mathcal{S}}, E_Y)]}{\var\left(Y\right)}\quad(\text{\Cref{assu:additive.independent.error}})\\
         =&\frac{\mathbb{E}\left[\var(\mathbb{E}[Y\mid \bm W] \mid \bm W_{-\mathcal{S}})\right]}{\var\left(Y\right)} \quad (\bm W \independent E_Y)\\
         =& \frac{\mathbb{E}[\mathbb{E}^2\left[Y\mid \bm W]\right]}{\var\left(Y\right)}-\frac{\mathbb{E}\left[\mathbb{E}^2[Y\mid \bm W_{-\mathcal{S}}]\right]}{\var\left(Y\right)}. \quad (\text{Definition of variance})
\end{align*}

\end{proof}

\subsection{Proof of \Cref{lemm:influence.function.conditional.expectation.squared}}\label{appe:sec:proof:lemma:influence.function.conditional.expectation.squared}

\begin{proof}

We can rewrite the parameter of interest as $\Psi:=\psi\circ T$, where $\Psi:\mathcal P \to \mathbb R$, $\psi: \mathbb D \to \mathbb R$ and $T:\mathcal P \to \mathbb D$. Let $T_1:=\EE\!\left[\EE\!\left\{Y(\bm W)\mid \bm W_{-\mathcal S}\right\}^2\right]$, $T_2:= \var(Y)$, and define $T := (T_1,T_2)$,
$\psi(x,y):=x/y$. Assume that \(T_2>0\), so that \(T\in\mathbb D:=\{(x,y)\in\mathbb R^2:y>0\}\). Since \(\psi\) is \(\mathbb C^1\) on the open set \(\mathbb D\), it is Fr\'echet differentiable there, and hence Hadamard differentiable. By \textit{Lemma 1.} in \cite{Williamson2021}, $T$ is pathwise differentiable, i.e. Hadamard differentiable along the tangent space $\dot{\mathcal P}$ \parencite{van1991differentiable,bickel1993efficient}. Since Hadamard differentiability allows chain rule \parencite{averbukh1967theory}, $\Psi=\psi\circ T$ is Hadamard differentiable along $\dot{\mathcal P}$, i.e. pathwise differentiable, with derivative
\[
\dot\Psi(S) = \dot\psi_{T}\circ \dot T = \frac{\dot T_{1}(S)}{T_2}-\frac{T_1}{T_2^2}\dot T_{2}(S) \in \mathbb R,
\quad S\in\dot{\mathcal P}.
\]Then by Riesz-Fr\'echet representation theorem, there exists a unique $\varphi_{\mathrm{EIF}} \in \dot{\mathcal P}$ such that\[\dot\Psi(S) = \langle \varphi_{\mathrm{EIF}}, S\rangle_{L_2} = \langle \frac{\varphi_{T_1,\mathrm{EIF}}}{T_2}-\frac{T_1}{T_2^2}\varphi_{T_2,\mathrm{EIF}}, S\rangle_{L_2}.\]

If we ignore the independence among treatments, then the model we consider is a non-parametric model $\mathcal{P}_{\text{np}}$ with tangent space  $\dot{\mathcal{P}}_{\text{np}} = L_2^0(\PP)$. Under the assumption of discrete treatments, for $-\calS \subseteq [K]$, we use the operator $\mathbb{IF}$ for pathwise differentiable functionals \parencite{kennedy2024semiparametric} to get the non-parametric influence function of the numerator:\begin{align*}
        \varphi_{T_1,\mathrm{IF}} & = \mathbb{IF}\Big\{\mathbb{E}\left[\mathbb{E}\left[Y(\bm W)\mid \bm W_{-\calS}\right]^2\right]\Big\} \\
        & = \mathbb{IF}\Big\{\int \mathbb{E}\left[Y(\bm W)\mid \bm W_{-\calS}=\bm w_{-\calS}\right]^2 p_{\bm w_{-\calS}} \lambda(d\bm w_{-\calS})\Big\}\\
        & = \int\mathbb{IF}\{\mathbb{E}\left[Y(\bm W)\mid \bm W_{-\calS}=\bm w_{-\calS}\right]^2\}p_{\bm w_{-\calS}} \lambda(d\bm w_{-\calS})\\
        &\quad~+ \int \mathbb{E}\left[Y(\bm W)\mid \bm W_{-\calS}=\bm w_{-\calS}\right]^2\mathbb{IF}(p_{\bm w_{-\calS}}) \lambda(d\bm w_{-\calS})\\
        & =\int 2\mathbb{E}[Y(\bm W)\mid \bm W_{-\calS}=\bm w_{-\calS}]\frac{\1(\bm W_{-\calS}=\bm w_{-\calS})}{p_{\bm w_{-\calS}}}\left(Y(\bm W)-\mathbb{E}[Y(\bm W)\mid \bm W_{-\calS}=\bm w_{-\calS}]\right)p_{\bm w_{-\calS}} \lambda(d\bm w_{-\calS})\\
        &\quad~+ \int \mathbb{E}[Y(\bm W)\mid \bm W_{-\calS}=\bm w_{-\calS}]^2\left( \1(\bm W_{-\calS} = \bm w_{-\calS})-p_{\bm w_{-\calS}}\right) \lambda(d\bm w_{-\calS})\\
        & = 2Y(\bm W)\mathbb{E}[Y(\bm W)\mid \bm W_{-\calS}] - \mathbb{E}[Y(\bm W)\mid \bm W_{-\calS}]^2 - \mathbb{E}\left[\mathbb{E}[Y(\bm W)\mid \bm W_{-\calS}]^2\right].
    \end{align*}

    For the denominator, we have its non-parametric influence function as

    \begin{align*}
        \varphi_{T_2,\mathrm{IF}} = \mathbb{IF}\Big\{\var\left(Y(\bm W)\right)\Big\} = \left(Y(\bm W) - \EE\left(Y(\bm W)\right)\right)^2-\var\left(Y(\bm W)\right).
    \end{align*}

   So the non-parametric influence function of the parameter of interest is \begin{align*}
     \varphi_{\mathrm{IF}} 
       & = \frac{2Y(\bm W)\mathbb{E}[Y(\bm W)\mid \bm W_{-\calS}] - \mathbb{E}[Y(\bm W)\mid \bm W_{-\calS}]^2 - \mathbb{E}\left[\mathbb{E}[Y(\bm W)\mid \bm W_{-\calS}]^2\right]}{\var\left(Y(\bm W)\right)}\\
        & \quad~-\frac{\left(Y(\bm W) - \EE\left(Y(\bm W)\right)\right)^2\cdot\mathbb{E}\left[\mathbb{E}\left[Y(\bm W)\mid \bm W_{-\calS}\right]^2\right]-\var\left(Y(\bm W)\right)\cdot\mathbb{E}\left[\mathbb{E}\left[Y(\bm W)\mid \bm W_{-\calS}\right]^2\right]}{\var\left(Y(\bm W)\right)^2}\\
        & = \frac{2Y(\bm W)\mathbb{E}[Y(\bm W)\mid \bm W_{-\calS}] - \mathbb{E}[Y(\bm W)\mid \bm W_{-\calS}]^2}{\var\left(Y(\bm W)\right)}\\
        &\quad~-\mathbb{E}\left[\mathbb{E}[Y(\bm W)\mid \bm W_{-\calS}]^2\right]\cdot \Big\{\frac{Y(\bm W) - \EE\left(Y(\bm W)\right)}{\var\left(Y(\bm W)\right)}\Big\}^2.
   \end{align*}

    Given $-\calS \subsetneq [K]$ and fully independent treatments, the joint probability measure of observed data can be factorized as $\mathbb{P}_{Y,\bm W} = \mathbb{P}_{Y\mid\bm W}\cdot\prod_{j\in \mathcal{-S}} \mathbb{P}_{W_j}\cdot\prod_{i\in \mathcal{S}} \mathbb{P}_{W_i}$. Let $\mathcal{P}_{\text{ind}}$ be the set of all regular densities $p_{Y,\bm W}$ such that \begin{align*}
        p_{Y,\bm W} = p_{Y\mid \bm W}\cdot\prod_{j \in -\calS} p_{W_j} \cdot \prod_{i \in \calS} p_{W_i}.
    \end{align*} The reason why we have $\{W_i: i \in \calS\}$ here is that, even though they are not needed to define our parameter of interest, fully independent treatments implies that $p_{\bm W_{-\calS}\mid \bm W_{\calS}} = p_{\bm W_{-\calS}}$, then by \Cref{auxiliary1}, the orthogonal complement of the tangent space $\dot{\mathcal{P}}_{Y,\bm W_{-\calS}, \bm W_{\calS}}$ for the joint distribution of $(Y,\bm W_{-\calS}, \bm W_{\calS})$, $\dot{\mathcal{P}}_{Y,\bm W_{-\calS}, \bm W_{\calS}}^\bot$, is not the same as the orthogonal complement of the tangent space $\dot{\mathcal{P}}_{Y,\bm W_{-\calS}}$ for the joint distribution of $(Y,\bm W_{-\calS})$, $\dot{\mathcal{P}}_{Y,\bm W_{-\calS}}^\bot$. So we can still gain efficiency if these auxiliary variables are included. Then for this semi-parametric model, we can project the non-parametric influence function of the numerator onto the corresponding tangent space,\begin{align*}
        \dot{\mathcal{P}}_{\text{ind}} 
        &= \bigoplus^{\perp}_{k\in [K]} \dot{\mathcal{P}}_{W_k} \oplus^{\perp} \dot{\mathcal{P}}_{Y\mid \bm W} \\
        &= \bigoplus^{\perp}_{k\in [K]}\{S(W_k): \mathbb{E}(S) = 0\}\oplus^{\perp}\{S(Y, \bm W): \mathbb{E}(S\mid \bm W) = 0\},\end{align*}
        to get the efficient influence function
        \begin{align*}
        \varphi_{T_1,\mathrm{EIF}}
        =& \sum_{k\in [K]}\prod\left(\mathbb{IF}\Big\{\mathbb{E}\left[\mathbb{E}[Y(\bm W)\mid \bm W_{-\calS}]^2\right]\Big\}\mid \dot{\mathcal{P}}_{W_k}\right)\\
        &+\prod\left(\mathbb{IF}\Big\{\mathbb{E}\left[\mathbb{E}[Y(\bm W)\mid \bm W_{-\calS}]^2\right]\Big\}\mid \dot{\mathcal{P}}_{Y\mid \bm W}\right)\\
 = & \sum_{k \in [K]}\EE\left(\mathbb{IF}\Big\{\mathbb{E}\left[\mathbb{E}[Y(\bm W)\mid \bm W_{-\calS}]^2\right]\Big\}\mid W_k\right)+\mathbb{IF}\Big\{\mathbb{E}\left[\mathbb{E}[Y(\bm W)\mid \bm W_{-\calS}]^2\right]\Big\}\\
 &-\EE\left(\mathbb{IF}\Big\{\mathbb{E}\left[\mathbb{E}[Y(\bm W)\mid \bm W_{-\calS}]^2\right]\Big\}\mid \bm W\right)\\
  = & 2\left(Y(\bm W)-\mathbb{E}[Y(\bm W)\mid \bm W]\right)\cdot \mathbb{E}[Y(\bm W)\mid \bm W_{-\calS}]\\
  &+ \sum_{j\in -\calS}\Big\{\mathbb{E}\left[\mathbb{E}[Y(\bm W)\mid \bm W_{-\calS}]^2\mid W_j\right]-\mathbb{E}\left[\mathbb{E}[Y(\bm W)\mid \bm W_{-\calS}]^2\right]\Big\}\\
 & + 2\sum_{i\in \calS}\Big\{\mathbb{E}\left[\mathbb{E}[Y(\bm W)\mid \bm W]\cdot\mathbb{E}[Y(\bm W)\mid \bm W_{-\calS}]\mid W_i\right]-\mathbb{E}\left[\mathbb{E}[Y(\bm W)\mid \bm W_{-\calS}]^2\right]\Big\}.
\end{align*}

For the denominator, we can also project its non-parametric influence function onto $\dot{\mathcal{P}}_{\text{ind}}$. Because constraining the marginal distribution of $\bm W$ to be factorizable (mutually independent) implicitly restricts the conditional density $p_{\bm W\mid Y}$ to only those specific forms that preserve this independence after integrating out $Y$, which violates the condition of \Cref{56T}. So \Cref{auxiliary2} kicks in and we get\begin{align*}
    \varphi_{T_2,\mathrm{EIF}} = &\left(Y-\EE\left[Y\right]\right)^2-\var\left(Y\mid \bm W\right)-\left(\EE[Y\mid\bm W]-\EE[Y]\right)^2\\
&+\sum_{k\in[K]}\Big\{\EE\left[(Y-\EE[Y])^2\mid W_k\right]-\var\left(Y\right)\Big\}.
\end{align*} 

So the efficient influence function of the parameter of interest for $\mathcal{P}_{\text{ind}}$ is\begin{align*}
   \varphi_{\mathrm{EIF}} &= \frac{2\left(Y(\bm W)-\mathbb{E}[Y(\bm W)\mid \bm W]\right)\cdot \mathbb{E}[Y(\bm W)\mid \bm W_{-\calS}]}{\var\left(Y(\bm W)\right)} \\
    & \quad~+   \frac{2\sum_{i\in \calS}\Big\{\mathbb{E}\left[\mathbb{E}[Y(\bm W)\mid \bm W]\cdot\mathbb{E}[Y(\bm W)\mid \bm W_{-\calS}]\mid W_i\right]-\mathbb{E}\left[\mathbb{E}[Y(\bm W)\mid \bm W_{-\calS}]^2\right]\Big\}}{\var\left(Y(\bm W)\right)}\\
&\quad~+\frac{\sum_{j\in -\calS}\Big\{\mathbb{E}\left[\mathbb{E}[Y(\bm W)\mid \bm W_{-\calS}]^2\mid W_j\right]-\mathbb{E}\left[\mathbb{E}[Y(\bm W)\mid \bm W_{-\calS}]^2\right]\Big\}}{\var\left(Y(\bm W)\right)}\\
        & \quad~-\frac{\Big\{\left(Y-\EE\left[Y\right]\right)^2-\var\left(Y\mid \bm W\right)-\left(\EE[Y\mid\bm W]-\EE[Y]\right)^2\Big\}\cdot\mathbb{E}\left[\mathbb{E}\left[Y(\bm W)\mid \bm W_{-\calS}\right]^2\right]}{\var\left(Y(\bm W)\right)^2}\\
        & \quad~-\frac{\Big\{\sum_{k\in[K]}\Big\{\EE\left[(Y-\EE[Y])^2\mid W_k\right]-\var\left(Y\right)\Big\}\Big\}\cdot\mathbb{E}\left[\mathbb{E}\left[Y(\bm W)\mid \bm W_{-\calS}\right]^2\right]}{\var\left(Y(\bm W)\right)^2}.
\end{align*}

\end{proof}



\subsection{Proof of \Cref{prop:limiting.np}}

    See \textit{Theorem 1.} in \cite{Williamson2021}.

\subsection{Proof of \Cref{prop:limiting}}\label{appe:sec:proof:prop: limiting}

\begin{proof}Define the remainder term based on probability measures $\overline{\mathbb{P}}$ and $\mathbb{P}$ as $R(\overline{\mathbb{P}},\mathbb{P}) = \psi_{\overline{\mathbb{P}}}-\psi +  \mathbb{P}\{\varphi_{\overline{\mathbb{P}}}\}$, where $\psi$ is any estimand.  We begin with the numerator, denote it by $\Theta$ and its efficient influence function by $\theta$. For $l = 1,\ldots,L$,\begin{align*}
    \widehat{\Theta}_l-\Theta &= \Theta_{\widehat{\mathbb{P}}_{-l}} + \mathbb{P}_n^l\{\theta_{\widehat{\mathbb{P}}_{-l}}\}-\Theta\\
    & = (\mathbb{P}_n^l-\mathbb{P})\{\theta_{\widehat{\mathbb{P}}_{-l}}\}+R(\widehat{\mathbb{P}}_{-l},\mathbb{P})\\
    & = (\mathbb{P}_n^l-\mathbb{P})\{\theta\}+(\mathbb{P}_n^l-\mathbb{P})\{\theta_{\widehat{\mathbb{P}}_{-l}}-\theta\}+R(\widehat{\mathbb{P}}_{-l},\mathbb{P}).
\end{align*}


By the Central Limit Theorem, the first term, a de-meaned sample average of the true efficient influence function, will behave as a normal random variable with variance $\var\{\theta\}/n$, up to error $o_{\mathbb{P}}(1/\sqrt{n})$, as long as $\var\{\theta\} < \infty$. For the empirical process term, according to \Cref{l2k}, cross-fitting enables us to conclude that,\begin{align*}(\mathbb{P}_n^l-\mathbb{P})\{\theta_{\widehat{\mathbb{P}}_{-l}}-\theta\} = O_{\mathbb{P}}\left(||\theta_{\widehat{\mathbb{P}}_{-l}}-\theta||/\sqrt{n}\right),\end{align*}which means we only need to show that  $\theta_{\widehat{\mathbb{P}}_{-l}}$ converge to $\theta$ in $L_2(\mathbb{P})$ norm to get the desired convergence rate. The exact form of the first half of $\theta_{\widehat{\mathbb{P}}_{-l}}-\theta$ is\begin{align*}
   & (\theta_{\widehat{\mathbb{P}}_{-l}}-\theta)_{\text{first half}} =  \ 2Y(\bm W)\widehat{\mu}_{-l}-2\widehat{\mu}_{-l}^2+\sum_{k \in [K]}\widehat{\EE}_{-l}[\widehat{\mu}_{-l}^2\mid W_k]-[K]\cdot \widehat{\EE}_{-l}[\widehat{\mu}_{-l}^2]\\
    & - \Big\{2Y(\bm W) \mathbb{E}[Y(\bm W)\mid \bm W]-2\mathbb{E}[Y(\bm W)\mid \bm W]^2 + \sum_{k \in [K]}\mathbb{E}\big[\mathbb{E}[Y(\bm W)\mid \bm W]^2\mid W_k\big]-[K]\cdot\mathbb{E}\big[\mathbb{E}[Y(\bm W)\mid \bm W]^2\big]\Big\}\\
    = & \ \underbrace{2Y(\bm W)\widehat{\mu}_{-l}-2\widehat{\mu}_{-l}^2-2Y(\bm W) \mathbb{E}[Y(\bm W)\mid \bm W]+2\mathbb{E}[Y(\bm W)\mid \bm W]^2}_{: = a}\\
    & + \underbrace{\sum_{k \in [K]}\widehat{\EE}_{-l}[\widehat{\mu}_{-l}^2\mid W_k]-[K]\cdot \widehat{\EE}_{-l}[\widehat{\mu}_{-l}^2]-\sum_{k \in [K]}\mathbb{E}\big[\mathbb{E}[Y(\bm W)\mid \bm W]^2\mid W_k\big]+[K]\cdot\mathbb{E}\big[\mathbb{E}[Y(\bm W)\mid \bm W]^2\big]}_{: = b}.
\end{align*}

Note that, under C1. and C2., we have\begin{align*}
    & \ ||a|| = ||2Y(\bm W)\widehat{\mu}_{-l}-2\widehat{\mu}_{-l}^2-2Y(\bm W) \mathbb{E}[Y(\bm W)\mid \bm W]+2\mathbb{E}[Y(\bm W)\mid \bm W]^2||\\
  = & \ ||2Y(\bm W)\cdot\left(\widehat{\mu}_{-l}-\mathbb{E}[Y(\bm W)\mid \bm W]\right)-2\left(\widehat{\mu}_{-l}+\mathbb{E}[Y(\bm W)\mid \bm W]\right)\cdot\left(\widehat{\mu}_{-l}-\mathbb{E}[Y(\bm W)\mid \bm W]\right)||\\
  = & \ ||2\left(Y(\bm W)-\widehat{\mu}_{-l}-\mathbb{E}[Y(\bm W)\mid \bm W]\right)\cdot\left(\widehat{\mu}_{-l}-\mathbb{E}[Y(\bm W)\mid \bm W]\right)||\\
  \lesssim & \ ||\widehat{\mu}_{-l}-\mathbb{E}[Y(\bm W)\mid \bm W]||
  = o_{\mathbb{P}}(1)
\end{align*}
and\begin{align*}
\|b\|
&=\Bigg\|\sum_{k\in[K]}\widehat{\EE}_{-l}\!\big[\widehat{\mu}_{-l}^2\mid W_k\big]
-[K]\cdot \widehat{\EE}_{-l}\!\big[\widehat{\mu}_{-l}^2\big]
-\sum_{k\in[K]}\EE\!\Big[\EE\!\big[Y(\bm W)\mid \bm W\big]^2\mid W_k\Big]
+[K]\cdot \EE\!\Big[\EE\!\big[Y(\bm W)\mid \bm W\big]^2\Big]\Bigg\| \\
&\le
\sum_{k\in[K]}
\Big\|\widehat{\EE}_{-l}\!\big[\widehat{\mu}_{-l}^2\mid W_k\big]
-\EE\!\Big[\EE\!\big[Y(\bm W)\mid \bm W\big]^2\mid W_k\Big]\Big\|
+[K]\cdot
\Big|\widehat{\EE}_{-l}\!\big[\widehat{\mu}_{-l}^2\big]
-\EE\!\Big[\EE\!\big[Y(\bm W)\mid \bm W\big]^2\Big]\Big|.
\end{align*}

For the first term, for each fixed $k\in[K]$, we have
\begin{align*}
&\widehat{\EE}_{-l}\!\big[\widehat{\mu}_{-l}^2\mid W_k\big]
-\EE\!\Big[\EE\!\big[Y(\bm W)\mid \bm W\big]^2\mid W_k\Big] \\
&\quad=
\int \Big(\widehat{\mu}_{-l}^2-\EE\!\big[Y(\bm W)\mid \bm W\big]^2\Big)\prod_{k'\neq k} d\widehat{\mathbb P}_{W_{k'},-l}
\;+\;
\int \EE\!\big[Y(\bm W)\mid \bm W\big]^2
\Big(\prod_{k'\neq k} d\widehat{\mathbb P}_{W_{k'},-l}-\prod_{k'\neq k} d\mathbb P_{W_{k'}}\Big).
\end{align*}
Hence, by the triangle inequality and the fact that this quantity is $W_k$-measurable,
\begin{align*}
&\Big\|\widehat{\EE}_{-l}\!\big[\widehat{\mu}_{-l}^2\mid W_k\big]
-\EE\!\Big[\EE\!\big[Y(\bm W)\mid \bm W\big]^2\mid W_k\Big]\Big\|
=
\Big\|\widehat{\EE}_{-l}\!\big[\widehat{\mu}_{-l}^2\mid W_k\big]
-\EE\!\Big[\EE\!\big[Y(\bm W)\mid \bm W\big]^2\mid W_k\Big]\Big\|_{L_2(\mathbb P_{W_k})}\\
&\quad\le
\Bigg\|
\int \Big(\widehat{\mu}_{-l}^2-\EE\!\big[Y(\bm W)\mid \bm W\big]^2\Big)\prod_{k'\neq k} d\widehat{\mathbb P}_{W_{k'},-l}
\Bigg\|_{L_2(\mathbb P_{W_k})}
+
\Bigg\|
\int \EE\!\big[Y(\bm W)\mid \bm W\big]^2
\Big(\prod_{k'\neq k} d\widehat{\mathbb P}_{W_{k'},-l}-\prod_{k'\neq k} d\mathbb P_{W_{k'}}\Big)
\Bigg\|_{L_2(\mathbb P_{W_k})}.
\end{align*}

Write $\widehat{\mu}_{-l}^2-\EE\!\big[Y(\bm W)\mid \bm W\big]^2
=
\Big(\widehat{\mu}_{-l}+\EE\!\big[Y(\bm W)\mid \bm W\big]\Big)
\Big(\widehat{\mu}_{-l}-\EE\!\big[Y(\bm W)\mid \bm W\big]\Big)$. For the left-hand side term, under C1., by Minkowski's integral inequality, 
\begin{align*}
&\Bigg\|
\int \Big(\widehat{\mu}_{-l}^2-\EE\!\big[Y(\bm W)\mid \bm W\big]^2\Big)\prod_{k'\neq k} d\widehat{\mathbb P}_{W_{k'},-l}
\Bigg\|_{L_2(\mathbb P_{W_k})}\\
&\quad\le
\int
\Big\|
\Big(\widehat{\mu}_{-l}+\EE\!\big[Y(\bm W)\mid \bm W\big]\Big)
\Big(\widehat{\mu}_{-l}-\EE\!\big[Y(\bm W)\mid \bm W\big]\Big)
\Big\|_{L_2(\mathbb P_{W_k})}
\prod_{k'\neq k} d\widehat{\mathbb P}_{W_{k'},-l}\\
&\quad\le
2C\int
\Big\|\widehat{\mu}_{-l}-\EE\!\big[Y(\bm W)\mid \bm W\big]\Big\|_{L_2(\mathbb P_{W_k})}
\prod_{k'\neq k} d\widehat{\mathbb P}_{W_{k'},-l}\\
&\quad\le
2C\Bigg(
\int
\Big\|\widehat{\mu}_{-l}-\EE\!\big[Y(\bm W)\mid \bm W\big]\Big\|_{L_2(\mathbb P_{W_k})}^2
\prod_{k'\neq k} d\widehat{\mathbb P}_{W_{k'},-l}
\Bigg)^{1/2}\\
&\quad=
2C\Bigg(
\int
\Big(\widehat{\mu}_{-l}-\EE\!\big[Y(\bm W)\mid \bm W\big]\Big)^2
\, d\mathbb P_{W_k}\prod_{k'\neq k} d\widehat{\mathbb P}_{W_{k'},-l}
\Bigg)^{1/2}.
\end{align*}
Using $0\le(\widehat{\mu}_{-l}-\EE[Y(\bm W)\mid \bm W])^2\le (2C)^2$ and the telescoping identity for products, we get
\begin{align*}
&\Bigg|
\int
\Big(\widehat{\mu}_{-l}-\EE\!\big[Y(\bm W)\mid \bm W\big]\Big)^2
\, d\mathbb P_{W_k}\prod_{k'\neq k} d\widehat{\mathbb P}_{W_{k'},-l}
-
\int
\Big(\widehat{\mu}_{-l}-\EE\!\big[Y(\bm W)\mid \bm W\big]\Big)^2
\, d\mathbb P_{\bm W}
\Bigg| \\
& \qquad= \Bigg|
\int
\Big(\widehat{\mu}_{-l}-\EE[Y(\bm W)\mid \bm W]\Big)^2
\, d\mathbb P_{W_k}
\Big(\prod_{k'\neq k} d\widehat{\mathbb P}_{W_{k'},-l}-\prod_{k'\neq k} d\mathbb P_{W_{k'}}\Big)
\Bigg| \\
&\qquad\le
\sum_{j\neq k}
\int
\Big(\widehat{\mu}_{-l}-\EE[Y(\bm W)\mid \bm W]\Big)^2
\, d\mathbb P_{W_k}
\Big|d\widehat{\mathbb P}_{W_j,-l}-d\mathbb P_{W_j}\Big|
\prod_{\substack{r\neq k\\ r<j}} d\widehat{\mathbb P}_{W_r,-l}
\prod_{\substack{r\neq k\\ r>j}} d\mathbb P_{W_r} \\
&\qquad\le
(2C)^2
\sum_{j\neq k}
\int
\Big|d\widehat{\mathbb P}_{W_j,-l}-d\mathbb P_{W_j}\Big|
\underbrace{\int d\mathbb P_{W_k}\prod_{\substack{r\neq k\\ r<j}} d\widehat{\mathbb P}_{W_r,-l}
\prod_{\substack{r\neq k\\ r>j}} d\mathbb P_{W_r}}_{=\,1} \\
&\qquad=
(2C)^2
\sum_{j\neq k}
\int
\Big|d\widehat{\mathbb P}_{W_j,-l}-d\mathbb P_{W_j}\Big|\\
&\qquad=
(2C)^2
\sum_{j\neq k}
\int \big|\widehat p_{W_j,-l}-p_{W_j}\big|\,d\lambda_j.
\end{align*}So under C2., we have
\begin{align*}
\Big|\int
\Big(\widehat{\mu}_{-l}-\EE[Y(\bm W)\mid \bm W]\Big)^2
\, d\mathbb P_{W_k}\prod_{k'\neq k} d\widehat{\mathbb P}_{W_{k'},-l}
-\int
\Big(\widehat{\mu}_{-l}-\EE[Y(\bm W)\mid \bm W]\Big)^2
\, d\mathbb P_{\bm W}
\Big| = o_{\PP}(1).
\end{align*}
Now use the inequality for $x,y\ge 0$,
\[
\big|\sqrt{x}-\sqrt{y}\big|
=\frac{|x-y|}{\sqrt{x}+\sqrt{y}}
\le \sqrt{|x-y|}.
\]
Applying this with
\[
x=\int
\Big(\widehat{\mu}_{-l}-\EE[Y(\bm W)\mid \bm W]\Big)^2
\, d\mathbb P_{W_k}\prod_{k'\neq k} d\widehat{\mathbb P}_{W_{k'},-l},
\
y=\int
\Big(\widehat{\mu}_{-l}-\EE[Y(\bm W)\mid \bm W]\Big)^2
\, d\mathbb P_{\bm W},
\]
gives
\begin{align*}
&\Bigg(
\int
\Big(\widehat{\mu}_{-l}-\EE[Y(\bm W)\mid \bm W]\Big)^2
\, d\mathbb P_{W_k}\prod_{k'\neq k} d\widehat{\mathbb P}_{W_{k'},-l}
\Bigg)^{1/2} \\
&\qquad\le
\Bigg(
\int
\Big(\widehat{\mu}_{-l}-\EE[Y(\bm W)\mid \bm W]\Big)^2
\, d\mathbb P_{\bm W}
\Bigg)^{1/2}\\
&\qquad+
\Bigg|
\int
\Big(\widehat{\mu}_{-l}-\EE[Y(\bm W)\mid \bm W]\Big)^2
\, d\mathbb P_{W_k}\prod_{k'\neq k} d\widehat{\mathbb P}_{W_{k'},-l}
-
\int
\Big(\widehat{\mu}_{-l}-\EE[Y(\bm W)\mid \bm W]\Big)^2
\, d\mathbb P_{\bm W}
\Bigg|^{1/2} \\
&\qquad=
\Big\|\widehat{\mu}_{-l}-\EE[Y(\bm W)\mid \bm W]\Big\|_{L_2(\mathbb P_{\bm W})}
+ o_{\PP}(1) = o_{\PP}(1).
\end{align*}

For the right-hand side term, we have
\begin{align*}
&\Bigg\|
\int \EE\!\big[Y(\bm W)\mid \bm W\big]^2
\Big(\prod_{k'\neq k} d\widehat{\mathbb P}_{W_{k'},-l}-\prod_{k'\neq k} d\mathbb P_{W_{k'}}\Big)
\Bigg\|_{L_2(\mathbb P_{W_k})}\\
&\quad\le
\sup_{w_k}
\Bigg|
\int \EE\!\big[Y(\bm W)\mid \bm W\big]^2
\Big(\prod_{k'\neq k} d\widehat{\mathbb P}_{W_{k'},-l}-\prod_{k'\neq k} d\mathbb P_{W_{k'}}\Big)
\Bigg|\\
&\quad\le
C^2\sum_{k'\neq k}\int \big|\widehat p_{W_{k'},-l}-p_{W_{k'}}\big|\,d\lambda_{k'} \;=\; o_{\PP}(1).
\end{align*}
Combining the two parts, for each $k$,
\[
\Big\|\widehat{\EE}_{-l}\!\big[\widehat{\mu}_{-l}^2\mid W_k\big]
-\EE\!\Big[\EE\!\big[Y(\bm W)\mid \bm W\big]^2\mid W_k\Big]\Big\|
=o_{\PP}(1).
\]

Similarly, for the second term,
\begin{align*}
&\Big|\widehat{\EE}_{-l}\!\big[\widehat{\mu}_{-l}^2\big]
-\EE\!\Big[\EE\!\big[Y(\bm W)\mid \bm W\big]^2\Big]\Big|\\
&\quad=
\Bigg|\int \Big(\widehat{\mu}_{-l}^2-\EE\!\big[Y(\bm W)\mid \bm W\big]^2\Big)\prod_{k} d\widehat{\mathbb P}_{W_k,-l}
+
\int \EE\!\big[Y(\bm W)\mid \bm W\big]^2
\Big(\prod_{k} d\widehat{\mathbb P}_{W_k,-l}-\prod_k d\mathbb P_{W_k}\Big)\Bigg|\\
&\quad\le
2C\Bigg(\int
\Big(\widehat{\mu}_{-l}-\EE\!\big[Y(\bm W)\mid \bm W\big]\Big)^2
\prod_k d\widehat{\mathbb P}_{W_k,-l}\Bigg)^{1/2}
+
C^2\sum_{k}\int \big|\widehat p_{W_k,-l}-p_{W_k}\big|\,d\lambda_k\\
&\quad\le
2C\Big\|\widehat{\mu}_{-l}-\EE\!\big[Y(\bm W)\mid \bm W\big]\Big\|_{L_2(\mathbb P_{\bm W})}
+o_{\PP}(1)
=o_{\PP}(1).
\end{align*}

Since $K$ is fixed, $\|b\|
\le
\sum_{k\in[K]} o_{\PP}(1) + [K]\cdot o_{\PP}(1)
=
o_{\PP}(1)$. So we conclude that $\|b\| = o_{\PP}(1)$ and $||(\theta_{\widehat{\mathbb{P}}_{-l}}-\theta)_{\text{first half}}|| \leq ||a||+||b|| = o_{\mathbb{P}}(1)$.

  For the second half,\begin{align*}
    &(\theta_{\widehat{\mathbb{P}}_{-l}}-\theta)_{\text{second half}} =  \ 2Y(\bm W)\widehat{\mu}_{-\calS,-l}-2\widehat{\mu}_{-l}\widehat{\mu}_{-\calS,-l}+\sum_{j \in -\calS}\widehat{\EE}_{-l}[\widehat{\mu}_{-\calS,-l}^2\mid W_j]-|-\calS|\cdot \widehat{\EE}_{-l}[\widehat{\mu}_{-\calS,-l}^2]\\
    &+2\sum_{i \in \calS}\widehat{\EE}_{-l}[\widehat{\mu}_{-l}\widehat{\mu}_{-\calS,-l}\mid W_i]-2|\calS|\cdot \widehat{\EE}_{-l}[\widehat{\mu}_{-\calS,-l}^2]\\
    & - \Big\{2Y(\bm W) \mathbb{E}[Y(\bm W)\mid \bm W_{-\calS}]-2\mathbb{E}[Y(\bm W)\mid \bm W]\mathbb{E}[Y(\bm W)\mid \bm W_{-\calS}] + \sum_{j \in -\calS}\mathbb{E}\big[\mathbb{E}[Y(\bm W)\mid \bm W_{-\calS}]^2\mid W_j\big]\\
    &-|-\calS|\cdot\mathbb{E}\big[\mathbb{E}[Y(\bm W)\mid \bm W_{-\calS}]^2\big]+2\sum_{i \in \calS}\EE\left[\mathbb{E}[Y(\bm W)\mid \bm W]\mathbb{E}[Y(\bm W)\mid \bm W_{-\calS}]\mid W_i\right]\\
    &-2|\calS|\cdot \mathbb{E}\big[\mathbb{E}[Y(\bm W)\mid \bm W_{-\calS}]^2\big]\Big\}\\
    = & \ \underbrace{2Y(\bm W)\widehat{\mu}_{-\calS,-l}-2\widehat{\mu}_{-l}\widehat{\mu}_{-\calS,-l}-2Y(\bm W) \mathbb{E}[Y(\bm W)\mid \bm W_{-\calS}]+2\mathbb{E}[Y(\bm W)\mid \bm W]\mathbb{E}[Y(\bm W)\mid \bm W_{-\calS}]}_{:=a}\\
    & + \underbrace{\sum_{j \in -\calS}\widehat{\EE}_{-l}[\widehat{\mu}_{-\calS,-l}^2\mid W_j]-|-\calS|\cdot \widehat{\EE}_{-l}[\widehat{\mu}_{-\calS,-l}^2]-\sum_{j \in -\calS}\mathbb{E}\big[\mathbb{E}[Y(\bm W)\mid \bm W_{-\calS}]^2\mid W_j\big]}_{:=b}\\
    &\underbrace{+|-\calS|\cdot\mathbb{E}\big[\mathbb{E}[Y(\bm W)\mid \bm W_{-\calS}]^2\big]}_{:=b\ \text{continued}}\\
     &+  \underbrace{2\sum_{i \in \calS}\widehat{\EE}_{-l}[\widehat{\mu}_{-l}\widehat{\mu}_{-\calS,-l}\mid W_i]-2|\calS|\cdot \widehat{\EE}_{-l}[\widehat{\mu}_{-\calS,-l}^2]-2\sum_{i \in \calS}\EE\left[\mathbb{E}[Y(\bm W)\mid \bm W]\mathbb{E}[Y(\bm W)\mid \bm W_{-\calS}]\mid W_i\right]}_{:=c}\\
&+\underbrace{2|\calS|\cdot \mathbb{E}\big[\mathbb{E}[Y(\bm W)\mid \bm W_{-\calS}]^2\big]}_{:=c \ \text{continued}}.
\end{align*}

Also, under C1. and C2., we have\begin{align*}
    & \ ||a|| = ||2Y(\bm W)\widehat{\mu}_{-\calS,-l}-2\widehat{\mu}_{-l}\widehat{\mu}_{-\calS,-l}-2Y(\bm W) \mathbb{E}[Y(\bm W)\mid \bm W_{-\calS}]+2\mathbb{E}[Y(\bm W)\mid \bm W]\mathbb{E}[Y(\bm W)\mid \bm W_{-\calS}]||\\
  = & \ ||2Y(\bm W)\cdot (\widehat{\mu}_{-\calS,-l}-\mathbb{E}[Y(\bm W)\mid \bm W_{-\calS}])-2\widehat{\mu}_{-l}\widehat{\mu}_{-\calS,-l}+2\widehat{\mu}_{-l}\mathbb{E}[Y(\bm W)\mid \bm W_{-\calS}]\\
  & - 2\widehat{\mu}_{-l}\mathbb{E}[Y(\bm W)\mid \bm W_{-\calS}] +2\mathbb{E}[Y(\bm W)\mid \bm W]\mathbb{E}[Y(\bm W)\mid \bm W_{-\calS}]||\\
  = & \ ||2(Y(\bm W)-\widehat{\mu}_{-l})\cdot(\widehat{\mu}_{-\calS,-l}-\mathbb{E}[Y(\bm W)\mid \bm W_{-\calS}])-2\mathbb{E}[Y(\bm W)\mid \bm W_{-\calS}]\cdot(\widehat{\mu}_{-l}-\mathbb{E}[Y(\bm W)\mid \bm W])||\\
   = & \ \max\Big\{O_{\PP}(||\widehat{\mu}_{-\calS,-l}-\mathbb{E}[Y(\bm W)\mid \bm W_{-\calS}]||),O_{\PP}(||\widehat{\mu}_{-l}-\mathbb{E}[Y(\bm W)\mid \bm W]||)\Big\}\\
   = & \ o_{\mathbb{P}}(1).
\end{align*}We can also get $||b|| = o_{\mathbb{P}}(1)$ and $||c|| = o_{\mathbb{P}}(1)$, where the proofs are similar to that in the first half. So we can conclude that $||c|| = o_{\PP}(1)$ and $||(\theta_{\widehat{\mathbb{P}}_{-l}}-\theta)_{\text{second half}}|| \leq ||a||+||b||+||c||= o_{\mathbb{P}}(1)$, and the empirical process term is of the order $o_{\mathbb{P}}(1/\sqrt{n})$.

The exact form of the first half of the remainder term $R(\widehat{\mathbb{P}}_{-l},\mathbb{P})  =  \ \Theta_{\widehat{\mathbb{P}}_{-l}}-\Theta+\mathbb{P}\{\theta_{\widehat{\mathbb{P}}_{-l}}\}$ is\begin{align*}
    R(\widehat{\mathbb{P}}_{-l},\mathbb{P})_{\text{first half}}  = & \ \widehat{\EE}_{-l}[\widehat{\mu}_{-l}^2]  - \int \mathbb{E}\left[Y(\bm W)\mid \bm W\right]^2 d\mathbb{P}_{\bm W} \\
    & +\int \Big\{2Y(\bm W)\widehat{\mu}_{-l}-2\widehat{\mu}_{-l}^2+\sum_{k \in [K]}\widehat{\EE}_{-l}[\widehat{\mu}_{-l}^2\mid W_k]-[K]\cdot \widehat{\EE}_{-l}[\widehat{\mu}_{-l}^2]\Big\}d\mathbb{P}_{\bm W} \\
    = & \underbrace{\int \Big\{2Y(\bm W)\widehat{\mu}_{-l}-\widehat{\mu}_{-l}^2-\mathbb{E}\left[Y(\bm W)\mid \bm W\right]^2\Big\} d\mathbb{P}_{\bm W}}_{:=a}\\
    & +\underbrace{\int\Big\{\sum_{k \in [K]}\widehat{\EE}_{-l}[\widehat{\mu}_{-l}^2\mid W_k]-[K]\cdot\widehat{\EE}_{-l}[\widehat{\mu}_{-l}^2]+\widehat{\EE}_{-l}[\widehat{\mu}_{-l}^2]-\widehat{\mu}_{-l}^2\Big\}d\mathbb{P}_{\bm W}}_{:=b}.
\end{align*}

Note that, under C1. and C3.,\begin{align*}
  |a| = & \  |\int \Big\{2Y(\bm W)\widehat{\mu}_{-l}-\widehat{\mu}_{-l}^2-\mathbb{E}\left[Y(\bm W)\mid \bm W\right]^2\Big\} d\mathbb{P}_{\bm W}|\\
   = & \ \int (\widehat{\mu}_{-l}-\mathbb{E}\left[Y(\bm W)\mid \bm W\right])^2d\mathbb{P}_{\bm W}\\
   = & \ ||\widehat{\mu}_{-l}-\mathbb{E}\left[Y(\bm W)\mid \bm W\right]||^2\\
   = & \ o_{\mathbb{P}}(1/\sqrt{n}).
\end{align*}We observe that, for $b$,
\begin{align*}
\int \Big(\widehat{\mu}_{-l}^2-\widehat{\mathbb E}_{-l}[\widehat{\mu}_{-l}^2]\Big)\,d\PP_{\bm W}
&=
\int \widehat{\mu}_{-l}^2 \prod_{k\in[K]} d\PP_{W_k}
-
\int \widehat{\mu}_{-l}^2 \prod_{k\in[K]} d\widehat{\PP}_{W_k,-l}.
\end{align*}
Expanding the difference of product measures yields the exact identity
\begin{align*}
\int \Big(\widehat{\mu}_{-l}^2-\widehat{\mathbb E}_{-l}[\widehat{\mu}_{-l}^2]\Big)\,d\PP_{\bm W}
&=
\int \sum_{k\in[K]} \widehat{\mu}_{-l}^2\Big(d\PP_{W_k}-d\widehat{\PP}_{W_k,-l}\Big)
\prod_{k'\neq k} d\widehat{\PP}_{W_{k'},-l} \\
&\quad+
\int \sum_{k\neq k'} \widehat{\mu}_{-l}^2\Big(d\PP_{W_k}-d\widehat{\PP}_{W_k,-l}\Big)
\Big(d\PP_{W_{k'}}-d\widehat{\PP}_{W_{k'},-l}\Big)
\prod_{k''\neq k,k'} d\widehat{\PP}_{W_{k''},-l} \\
&\quad+ \text{higher-order terms}.
\end{align*}
Moreover, for each fixed $k\in[K]$,
\begin{align*}
\int \widehat{\mu}_{-l}^2\Big(d\PP_{W_k}-d\widehat{\PP}_{W_k,-l}\Big)\prod_{k'\neq k} d\widehat{\PP}_{W_{k'},-l}
&=
\int \Big(\widehat{\mathbb E}_{-l}[\widehat{\mu}_{-l}^2\mid W_k]-\widehat{\mathbb E}_{-l}[\widehat{\mu}_{-l}^2]\Big)\,d\PP_{\bm W},
\end{align*}
so summing over $k$ gives
\begin{align*}
\int \Big(\widehat{\mu}_{-l}^2-\widehat{\mathbb E}_{-l}[\widehat{\mu}_{-l}^2]\Big)\,d\PP_{\bm W}
&=
\int \Big(\sum_{k\in[K]}\widehat{\mathbb E}_{-l}[\widehat{\mu}_{-l}^2\mid W_k]-[K]\cdot \widehat{\mathbb E}_{-l}[\widehat{\mu}_{-l}^2]\Big)\,d\PP_{\bm W} \\
&\quad+
\int \sum_{k\neq k'} \widehat{\mu}_{-l}^2\Big(d\PP_{W_k}-d\widehat{\PP}_{W_k,-l}\Big)
\Big(d\PP_{W_{k'}}-d\widehat{\PP}_{W_{k'},-l}\Big)
\prod_{k''\neq k,k'} d\widehat{\PP}_{W_{k''},-l} \\
&\quad+ \text{higher-order terms}.
\end{align*}
By the definition of $b$,
\begin{align*}
b
&=
\int \Big(\sum_{k\in[K]}\widehat{\mathbb E}_{-l}[\widehat{\mu}_{-l}^2\mid W_k]-[K]\cdot \widehat{\mathbb E}_{-l}[\widehat{\mu}_{-l}^2]\Big)\,d\PP_{\bm W}
-\int \Big(\widehat{\mu}_{-l}^2-\widehat{\mathbb E}_{-l}[\widehat{\mu}_{-l}^2]\Big)\,d\PP_{\bm W} \\
&=
-\int \sum_{k\neq k'} \widehat{\mu}_{-l}^2\Big(d\PP_{W_k}-d\widehat{\PP}_{W_k,-l}\Big)
\Big(d\PP_{W_{k'}}-d\widehat{\PP}_{W_{k'},-l}\Big)
\prod_{k''\neq k,k'} d\widehat{\PP}_{W_{k''},-l}
-\text{higher-order terms}.
\end{align*}
Consequently, under C1.,
\begin{align*}
|b|
&\le
O_{\mathbb P}\!\left(
\int \max_{k}\big(p_{W_k}-\widehat p_{W_k,-l}\big)^2\,
\prod_{k''\neq k,k'} d\widehat{\PP}_{W_{k''},-l}\, d\lambda_k\, d\lambda_{k'}
\right)\\
  & = \ O_{\PP}\left(||\ \frac{\underset{k}{\max}(p_{W_k} - \widehat{p}_{W_k,-l})}{\widehat{p}_{W_k}\widehat{p}_{W_{k^\prime}}}\ ||^2_{L_2(\widehat{\PP}_{ -l})}\right).
\end{align*}

Fix a fold $l$ and write, for a fixed $k'\in[K]$,
\[
\Delta_{k,-l}(w_k):=p_{W_k}-\widehat p_{W_k,-l},
\
r_{k,-l}(w):=\frac{\Delta_{k,-l}(w_k)}{\widehat{p}_{W_k}\,\widehat{p}_{W_{k'}}},
\]
so that
\[
\Big\|\max_{k} r_{k,-l}\Big\|_{L_2(\widehat{\PP}_{-l})}^2
=\widehat{\PP}_{-l}\Big[\Big(\max_{k} r_{k,-l}\Big)^2\Big].
\]
Using the inequality $(\max_{k}|a_k|)^2\le \sum_{k=1}^K a_k^2$,
we obtain
\begin{align*}
\Big\|\max_{k\in[K]} r_{k,-l}\Big\|_{L_2(\widehat{\PP}_{-l})}^2
&=\widehat{\PP}_{-l}\Big[\Big(\max_{k\in[K]} r_{k,-l}\Big)^2\Big] \\
&\le \widehat{\PP}_{-l}\Big[\sum_{k=1}^K r_{k,-l}^2\Big]
= \sum_{k=1}^K \widehat{\PP}_{-l}\big[r_{k,-l}^2\big].
\end{align*}
Under C4., for every $k\in[K]$,
\[
\frac{1}{\widehat{p}_{W_k}^2\,\widehat{p}_{W_{k'}}^2}\ \le\ \underline c^{-4},
\]
and hence
\begin{align*}
\widehat{\PP}_{-l}\big[r_{k,-l}^2\big]
&=\widehat{\PP}_{-l}\Big[\frac{\Delta_{k,-l}(W_k)^2}{\widehat{p}_{W_k}^2\,\widehat{p}_{W_{k'}}^2}\Big] \\
&\le \underline c^{-4}\,\widehat{\PP}_{-l}\big[\Delta_{k,-l}(W_k)^2\big]
= \underline c^{-4}\,\|\Delta_{k,-l}\|_{L_2(\widehat{\PP}_{-l})}^2.
\end{align*}
Combining the displays yields
\[
\Big\|\max_{k\in[K]} r_{k,-l}\Big\|_{L_2(\widehat{\PP}_{-l})}^2
\le \underline c^{-4}\sum_{k=1}^K \|\Delta_{k,-l}\|_{L_2(\widehat{\PP}_{-l})}^2.
\]

Decompose
\begin{align}
\|\Delta_{k,-l}\|_{L_2(\widehat{\PP}_{-l})}^2
&=\widehat{\PP}_{-l}\big[\Delta_{k,-l}^2\big] \notag\\
&=\PP\big[\Delta_{k,-l}^2\big] + (\widehat{\PP}_{-l}-\PP)\Delta_{k,-l}^2 \notag\\
&=\|\Delta_{k,-l}\|_{L_2(\PP)}^2 + (\widehat{\PP}_{-l}-\PP)\Delta_{k,-l}^2.
\label{eq:decomp-delta}
\end{align}

By \Cref{L2swap-train}, and moreover assume
\begin{equation}
\|\Delta_{k,-l}^2\|_{L_2(\PP)} = \|\Delta_{k,-l}\|_{L_4(\PP)}^2 \xrightarrow{\PP} 0,\label{eq:L4-to-0}
\end{equation}
which holds under C1. that
$\|\Delta_{k,-l}\|_{\infty}=O_{\PP}(1)$, since then
$\PP[\Delta_{k,-l}^4]\le \|\Delta_{k,-l}\|_{\infty}^2\PP[\Delta_{k,-l}^2]\to 0$.

Let $\mathbb G_{-l}:=\sqrt{n_{-l}}(\widehat{\PP}_{-l}-\PP)$ be the empirical process.
Because $\mathcal F$ is Donsker by C5., $\mathbb G_{-l}$ is asymptotically tight in
$\ell^{\infty}(\mathcal F)$ and is asymptotically uniformly equicontinuous w.r.t.\ the
$L_2(\PP)$ semimetric \parencite{Vaart2023}.

Let $d(\cdot,\cdot)$ be a semimetric on $\mathcal F$ and define the event $A := \{d(f_n,0)<\delta\}$ for any $\delta >0$. On $A$, the pair $(f_n,0)$ is admissible for the supremum over $\{(f,g):d(f,g)<\delta\}$, hence
\[
|\mathbb G_{-l}(f_n)-\mathbb G_{-l}(0)|
\le \sup_{\substack{f,g\in\mathcal F:\\ d(f,g)<\delta}}
|\mathbb G_{-l}(f)-\mathbb G_{-l}(g)|.
\]
Therefore,
\[
\Big\{|\mathbb G_{-l}(f_n)-\mathbb G_{-l}(0)|>\varepsilon\Big\}\cap A
\subseteq
\left\{
\sup_{\substack{f,g\in\mathcal F:\\ d(f,g)<\delta}}
|\mathbb G_{-l}(f)-\mathbb G_{-l}(g)|>\varepsilon
\right\}.
\]
Splitting the event by $A$ vs.\ $A^c$ yields
\begin{align*}
\PP\Big(|\mathbb G_{-l}(f_n)-\mathbb G_{-l}(0)|>\varepsilon\Big)
&\le \PP\big(A^c\big)
+ \PP\Big(\big\{|\mathbb G_{-l}(f_n)-\mathbb G_{-l}(0)|>\varepsilon\big\}\cap A\Big)\\
&\le \PP\big(d(f_n,0)\ge\delta\big)
+ \PP\left(
\sup_{\substack{f,g\in\mathcal F:\\ d(f,g)<\delta}}
|\mathbb G_{-l}(f)-\mathbb G_{-l}(g)|>\varepsilon
\right).
\end{align*}

Hence, with $f_{n}:=\Delta_{k,-l}^2\in\mathcal F$ and
$\|f_n-0\|_{L_2(\PP)}\xrightarrow[]{\PP}0$ by \eqref{eq:L4-to-0}, we have
\[
\mathbb G_{-l}(f_n)-\mathbb G_{-l}(0) \xrightarrow{\PP} 0,
\]
and since $\mathbb G_{-l}(0)=0$, it follows that
\[
\sqrt{n_{-l}}\,(\widehat{\PP}_{-l}-\PP)\Delta_{k,-l}^2
=\mathbb G_{-l}(\Delta_{k,-l}^2)
=o_{\PP}(1),
\ \text{i.e.}\
(\widehat{\PP}_{-l}-\PP)\Delta_{k,-l}^2
=o_{\PP}(1/\sqrt{n}).
\]
Combining with \eqref{eq:decomp-delta} gives
\[
\|\Delta_{k,-l}\|_{L_2(\widehat{\PP}_{-l})}^2
=\|\Delta_{k,-l}\|_{L_2(\PP)}^2 + o_{\PP}(1/\sqrt{n})
= o_{\PP}(1/\sqrt{n}).
\]

Under positivity $\inf\limits_{k}p_{W_k}\ge \underline c>0$ and fixed $K$, 
\[
\Big\|\max_{k\in[K]} r_{k,-l}\Big\|_{L_2(\widehat{\PP}_{-l})}^2
\le \underline c^{-4}\sum_{k=1}^K \|\Delta_{k,-l}\|_{L_2(\widehat{\PP}_{-l})}^2
= o_{\PP}(1/\sqrt{n}).
\]So we get $|R(\widehat{\mathbb{P}}_{-l},\mathbb{P})_{\text{first half}} | \leq |a| + |b| =o_{\PP}(1/\sqrt{n})$.
 
 For the second half,\begin{align*}
    &R(\widehat{\mathbb{P}}_{-l},\mathbb{P})_{\text{second half}}\\ 
    = & \ \widehat{\EE}_{-l}[\widehat{\mu}_{-\calS,-l}^2]  - \int \mathbb{E}\left[Y(\bm W)\mid \bm W_{-\calS}\right]^2 d\mathbb{P}_{\bm w_{-\calS}} \\
    & + \int \Big\{2Y(\bm W)\widehat{\mu}_{-\calS,-l}-2\widehat{\mu}_{-l}\widehat{\mu}_{-\calS,-l} + \sum_{j\in -\calS}\widehat{\mathbb{E}}_{-l}\left[\widehat{\mu}_{-\calS,-l}^2\mid W_j\right]-|-\calS|\cdot\widehat{\mathbb{E}}_{-l}\left[\widehat{\mu}_{-\calS,-l}^2\right]\\
 & + 2\sum_{i\in \calS}\widehat{\mathbb{E}}_{-l}\left[\widehat{\mu}_{-l}\widehat{\mu}_{-\calS,-l}\mid W_i\right]-2|\calS|\cdot\widehat{\mathbb{E}}_{-l}\left[\widehat{\mu}_{-\calS,-l}^2\right]\Big\}d\mathbb{P}_{\bm W}\\
  = & \ \int \Big\{2Y(\bm W)\widehat{\mu}_{-\calS,-l}-\widehat{\mu}_{-\calS,-l}^2-\mathbb{E}\left[Y(\bm W)\mid \bm W_{-\calS}\right]^2\Big\}d\PP_{\bm W}\\
  & + \int \Big\{\sum_{j\in -\calS}\widehat{\mathbb{E}}_{-l}\left[\widehat{\mu}_{-\calS,-l}^2\mid W_j\right]-|-\calS|\cdot\widehat{\mathbb{E}}_{-l}\left[\widehat{\mu}_{-\calS,-l}^2\right]+\widehat{\mathbb{E}}_{-l}\left[\widehat{\mu}_{-\calS,-l}^2\right]-\widehat{\mu}_{-\calS,-l}^2\Big\}d\PP_{\bm W}\\
  & + 2\int \Big\{\sum_{k\in [K]}\widehat{\EE}[\widehat{\mu}_{-l}\widehat{\mu}_{-\calS,-l}\mid W_k]-|K|\cdot \widehat{\EE}[\widehat{\mu}_{-l}\widehat{\mu}_{-\calS,-l}]+\widehat{\EE}[\widehat{\mu}_{-l}\widehat{\mu}_{-\calS,-l}]-\widehat{\mu}_{-l}\widehat{\mu}_{-\calS,-l}\Big\}d\PP_{\bm W}\\
  &-2\int \Big\{\sum_{j\in -\calS}\widehat{\EE}[\widehat{\mu}^2_{-\calS,-l}\mid W_j]-|-\calS|\cdot \widehat{\EE}[\widehat{\mu}^2_{-\calS,-l}]+\widehat{\EE}[\widehat{\mu}^2_{-\calS,-l}]-\widehat{\mu}^2_{-\calS,-l}\Big\}d\PP_{\bm W}\\
   = & \ \underbrace{\int \Big\{2Y(\bm W)\widehat{\mu}_{-\calS,-l}-\widehat{\mu}_{-\calS,-l}^2-\mathbb{E}\left[Y(\bm W)\mid \bm W_{-\calS}\right]^2\Big\}d\PP_{\bm W}}_{:=a}\\
  & - \underbrace{\int \Big\{\sum_{j\in -\calS}\widehat{\mathbb{E}}_{-l}\left[\widehat{\mu}_{-\calS,-l}^2\mid W_j\right]-|-\calS|\cdot\widehat{\mathbb{E}}_{-l}\left[\widehat{\mu}_{-\calS,-l}^2\right]+\widehat{\mathbb{E}}_{-l}\left[\widehat{\mu}_{-\calS,-l}^2\right]-\widehat{\mu}_{-\calS,-l}^2\Big\}d\PP_{\bm W}}_{:=b}\\
  & + \underbrace{2\int \Big\{\sum_{k\in [K]}\widehat{\EE}[\widehat{\mu}_{-l}\widehat{\mu}_{-\calS,-l}\mid W_k]-|K|\cdot \widehat{\EE}[\widehat{\mu}_{-l}\widehat{\mu}_{-\calS,-l}]+\widehat{\EE}[\widehat{\mu}_{-l}\widehat{\mu}_{-\calS,-l}]-\widehat{\mu}_{-l}\widehat{\mu}_{-\calS,-l}\Big\}d\PP_{\bm W}}_{:=c}
 \end{align*} The second equality holds because $\widehat{\EE}[\widehat{\mu}_{-l}\widehat{\mu}_{-\calS,-l}] = \widehat{\EE}\left[\widehat{\EE}[\widehat{\mu}_{-l}\widehat{\mu}_{-\calS,-l}\mid \bm W_{-\calS}]\right] = \widehat{\EE}[\widehat{\mu}_{-\calS,-l}^2]$ and similarly $\widehat{\EE}[\widehat{\mu}_{-l}\widehat{\mu}_{-\calS,-l}\mid W_j] = \widehat{\EE}\left[\widehat{\EE}[\widehat{\mu}_{-l}\widehat{\mu}_{-\calS,-l}\mid \bm W_{-\calS}]|W_j\right] = \widehat{\EE}[\widehat{\mu}^2_{-\calS,-l}\mid W_j]$, for $j \in -\calS$. It is obvious that $|a|$, $|b|$ and $|c|$ are of the order $o_{\PP}(1/\sqrt{n})$ from the previous results. So if C1. and C3. hold, then the above results enable us to conclude that the remainder term is also of the order $o_{\PP}(1/\sqrt{n})$. To sum up, we have $ \widehat{\Theta} - \Theta = \frac{1}{L}\sum_{l=1}^L( \widehat{\Theta}_l-\Theta) =\frac{1}{L}\sum_{l=1}^L(\mathbb{P}_n^l-\mathbb{P})\{\theta\}+o_{\PP}(1/\sqrt{n})$.

For the denominator, we denote its efficient influence function by $\eta$. The exact form of the empirical process term is
\begin{align*}
    &\eta_{\widehat{\PP}_{-l}} - \eta \\
    =& Y(\bm W)^2-2Y(\bm W)\widehat{\EE}_{-l}[Y(\bm W)]-\widehat{\upsilon}_{-l}+2\widehat{\mu}_{-l}\widehat{\EE}_{-l}[Y(\bm W)]+\sum_{k\in[K]}\widehat{\upsilon}_{k,-l}\\
    &-2\sum_{k \in [K]}\widehat{\EE}_{-l}[Y(\bm W)]\widehat{\mu}_{k,-l}-|K|\cdot\widehat{\EE}_{-l}[Y(\bm W)^2]+2|K|\cdot\widehat{\EE}_{-l}[Y(\bm W)]^2\\
    &-\Big\{Y(\bm W)^2-2Y(\bm W)\EE[Y(\bm W)]-\EE[Y(\bm W)^2\mid \bm W]+2\EE[Y(\bm W)\mid \bm W]\EE[Y(\bm W)]\Big\}\\
    &-\Big\{\sum_{k\in[K]}\EE[Y(\bm W)^2\mid W_k]-2\sum_{k\in[K]}\EE[Y(\bm W)]\EE[Y(\bm W)\mid W_k]-|K|\cdot\EE[Y(\bm W)^2]+2|K|\cdot\EE[Y(\bm W)]^2\Big\}\\
    = & 2Y(\bm W)\cdot\left(\EE[Y(\bm W)]-\widehat{\EE}_{-l}[Y(\bm W)]\right)+\left(\EE[Y(\bm W)^2\mid \bm W]-\widehat{\upsilon}_{-l}\right)\\
    &+\left(2\widehat{\mu}_{-l}\widehat{\EE}_{-l}[Y(\bm W)]-2\EE[Y(\bm W)\mid \bm W]\EE[Y(\bm W)]\right)\\
&+\sum_{k\in[K]}\left(\widehat{\upsilon}_{k,-l}-\EE[Y(\bm W)^2\mid W_k]\right)+2\sum_{k\in[K]}\left(\EE[Y(\bm W)]\EE[Y(\bm W)\mid W_k]-\widehat{\EE}_{-l}[Y(\bm W)]\widehat{\mu}_{k,-l}\right)\\
&+|K|\cdot\left(\EE[Y(\bm W)^2]-\widehat{\EE}_{-l}[Y(\bm W)^2]\right)+2|K|\cdot\left(\EE[Y(\bm W)]^2-\widehat{\EE}_{-l}[Y(\bm W)]^2\right),
\end{align*}which can be easily proved to convergence at the rate of $o_{\PP}(1)$ using the conditions and techniques we applied in the previous steps.

We then move on to the reminder term:\begin{align*}
&\var(Y)_{\widehat{\PP}_{-l}} - \var(Y)+\PP\{\eta_{\widehat{\PP}_{-l}}\}\\ =&\widehat{\EE}_{-l}[Y^2]-\widehat{\EE}_{-l}[Y]^2-\EE[Y^2]+\EE[Y]^2\\
&+ \int \Big\{Y^2-2Y\cdot\widehat{\EE}_{-l}[Y]-\widehat{\upsilon}_{-l}+2\widehat{\mu}_{-l}\cdot\widehat{\EE}_{-l}[Y]+\sum_{k\in[K]}\widehat{\upsilon}_{k,-l}\\
&-2\sum_{k\in[K]}\widehat{\mu}_{k,-l}\cdot\widehat{\EE}_{-l}[Y]-|K|\cdot\widehat{\EE}_{-l}[Y^2]+2|K|\cdot\widehat{\EE}_{-l}[Y]^2\Big\} d\PP\\
= & \underbrace{\int\Big\{\EE[Y]^2-2Y\cdot\widehat{\EE}_{-l}[Y]+\widehat{\EE}_{-l}[Y]^2\Big\}d\PP}_{:=a}+\underbrace{\int\Big\{ \widehat{\EE}_{-l}[Y^2]-\widehat{\upsilon}_{-l}+\sum_{k\in[K]}\widehat{\upsilon}_{k,-l}-|K|\cdot\widehat{\EE}_{-l}[Y^2]\Big\}d\PP}_{:=b}\\
& + \underbrace{\int\Big\{2\widehat{\mu}_{-l}\cdot\widehat{\EE}_{-l}[Y]-2\widehat{\EE}_{-l}[Y]^2+2|K|\cdot\widehat{\EE}_{-l}[Y]^2-2\sum_{k\in[K]}\widehat{\mu}_{k,-l}\cdot\widehat{\EE}_{-l}[Y]\Big\} d\PP}_{:=c}.
\end{align*}

For $a$,\begin{align*}
    &\int\Big\{\EE[Y]^2-2Y\cdot\widehat{\EE}_{-l}[Y]+\widehat{\EE}_{-l}[Y]^2\Big\}d\PP\\
    = & \int\left(\widehat{\EE}_{-l}[Y]-\EE[Y]\right)^2d\PP,
\end{align*}where $\widehat{\EE}_{-l}[Y]$ is the sample mean defined on the data excluding the $l$-th fold. Here we have $\widehat{\EE}_{-l}[Y]-\EE[Y] = o_{\PP}(n^{-1/4})$ since by Chebyshev's inequality, for any \(\varepsilon>0\),
\begin{align*}    
    \PP\left(\left|\widehat{\EE}_{-l}[Y]-\EE[Y]\right| > \varepsilon n_{-l}^{-1/4}\right)
    \le \frac{\var\left(\widehat{\EE}_{-l}[Y]\right)}{\varepsilon^2 n_{-l}^{-1/2}} 
    \;=\; \frac{\var(Y)}{\varepsilon^2 n_{-l}^{1/2}} \xrightarrow[]{n \to \infty} 0.
\end{align*}

For $b$,\begin{align*}
    &\int\Big\{\widehat{\upsilon}_{-l}-\widehat{\EE}_{-l}[Y^2]\Big\}d\PP\\
    =& \int Y^2d\widehat{\PP}_{Y\mid\bm{W}}d\PP_{\bm{W}}-\int Y^2 d\widehat{\PP}_{Y}\\
    =& \int Y^2d\widehat{\PP}_{Y\mid\bm{W}}(\prod_{k} d\PP_{W_{k}}-\prod_{k} d\widehat{\PP}_{W_{k},-l})\\
    = & \int \sum_{k\in[K]}Y^2d\widehat{\PP}_{Y\mid\bm{W}}(d\PP_{W_k} - d\widehat{\PP}_{W_k,-l}) \prod_{k' \neq k} d\widehat{\PP}_{W_{k'},-l} \\
    &+\int
    \sum_{k \neq k'} 
    Y^2 (d\PP_{W_k} - d\widehat{\PP}_{W_k,-l}) (d\PP_{W_{k'}} - d\widehat{\PP}_{W_{k'},-l}) \prod_{k'' \neq k,k'} d\widehat{\PP}_{W_{k''},-l}+ \text{higher-order terms}\\
    = & \int \Big\{\sum_{k\in[K]}\widehat{\upsilon}_{k,-l}-|K|\cdot\widehat{\EE}_{-l}[Y^2]\Big\}d\PP+ O_{\PP}(\int \max_{k} (p_{W_k} - \widehat{p}_{W_k,-l})^2\prod_{k'' \neq k,k'} d\widehat{\PP}_{W_{k''},-l}d\lambda_k d\lambda_{k^\prime}).
\end{align*}

For $c$,\begin{align*}
    &\int\Big\{2\widehat{\mu}_{-l}\cdot\widehat{\EE}_{-l}[Y]-2\widehat{\EE}_{-l}[Y]^2\Big\}d\PP\\
    =& 2\widehat{\EE}_{-l}[Y]\cdot\Big\{ \int Yd\widehat{\PP}_{Y\mid\bm{W}}d\PP_{\bm{W}}-\int Y d\widehat{\PP}_{Y}  \Big\}\\
    =& 2\widehat{\EE}_{-l}[Y]\cdot\Big\{\int Yd\widehat{\PP}_{Y\mid\bm{W}}(\prod_{k} d\PP_{W_{k}}-\prod_{k} d\widehat{\PP}_{W_{k},-l})\Big\}\\
     = & 2\widehat{\EE}_{-l}[Y]\cdot\Big\{\int \sum_{k\in[K]}Yd\widehat{\PP}_{Y\mid\bm{W}}(d\PP_{W_k} - d\widehat{\PP}_{W_k,-l}) \prod_{k' \neq k} d\widehat{\PP}_{W_{k'},-l} \\
    &+\int
    \sum_{k \neq k'} 
    Y (d\PP_{W_k} - d\widehat{\PP}_{W_k,-l}) (d\PP_{W_{k'}} - d\widehat{\PP}_{W_{k'},-l}) \prod_{k'' \neq k,k'} d\widehat{\PP}_{W_{k''},-l}\Big\}+ \text{higher-order terms}\\
    = & 2\widehat{\EE}_{-l}[Y]\cdot\Big\{\int \Big\{\sum_{k\in[K]}\widehat{\mu}_{k,-l}-|K|\cdot\widehat{\EE}_{-l}[Y]\Big\}d\PP\Big\}+ O_{\PP}(\int \max_{k} (p_{W_k} - \widehat{p}_{W_k,-l})^2\prod_{k'' \neq k,k'} d\widehat{\PP}_{W_{k''},-l}d\lambda_k d\lambda_{k^\prime}).
\end{align*}

The above terms can also be easily proved to convergence at the rate of $o_{\PP}(1/\sqrt{n})$ using the conditions and techniques we applied in the previous steps. 

Then by the delta method,\begin{align*}
    \widehat{\xi} - \xi & = \frac{\widehat{\Theta}}{\widehat{\var}(Y)} - \frac{\Theta}{\var\left(Y(\bm W)\right)}  \\
    & = (\PP_n-\PP)\Big\{\frac{\var\left(Y(\bm W)\right)\theta-\Theta\eta}{\var\left(Y(\bm W)\right)^2}\Big\}+o_{\PP}(1/\sqrt{n})\\
    & = (\PP_n-\PP)\{\varphi\}+o_{\PP}(1/\sqrt{n}).
\end{align*}If $\var\{\varphi\} < \infty$ holds, by the Central Limit Theorem,\begin{align*}\sqrt{n}\left(\widehat{\xi}-\xi\right) \rightsquigarrow \mathcal{N}\big(0, \var\{\varphi\}\big).\end{align*}That is, $\widehat{\xi}$ is a root-$n$ consistent and asymptotically normal estimator.

\end{proof}

\subsection{Proof of \Cref{prop:sharp.null}}\label{appe:sec:proof:sharp:null}
\begin{proof}\begin{align}\label{eq:sharp.null}
    \begin{split}        
    \xi(W_{\calS}) = 0
    \;\Leftrightarrow\;
    &\var\!\big(Y(W)-Y(W_{\calS}',W_{-\calS})\big)=0\\
    \;\Leftrightarrow\;
    &Y(w_{\calS},w_{-\calS}) = Y(w_{\calS}',w_{-\calS}), \quad \forall
    w_{\calS}, w_{\calS}',w_{-\calS}.
    \end{split}
\end{align}
\end{proof}

\subsection{Proof of \Cref{prop:coverage.degenerate}}\label{appe:sec:proof:prop:coverage.degenerate}
\begin{proof}By Eq.~\eqref{eq:permutation.distribution}, under Eq.~\eqref{prop:sharp.null.finite} implied by the null $\xi(W_{\mathcal{S}})=0$, we have
\begin{align*}
    (W^*_{\mathcal{S}}, W_{-\mathcal{S}}, Y)
    \stackrel{d}{=} 
    (W_{\mathcal{S}}, W_{-\mathcal{S}}, Y).    
\end{align*}
The distributional equivalence guarantees that the reference distribution generated by the randomization procedure matches the null distribution of the test statistic, thereby ensuring the validity of the randomization test.
\end{proof}

\subsection{Proof of \Cref{prop:coverage.sequential}}\label{appe:sec:proof:prop:coverage.sequential}
\begin{proof}For $\xi_{Y}(W_{\mathcal{S}})=0$, by \Cref{prop:coverage.degenerate}, the first step of the sequential test will not reject with probability at least $1-\alpha$, and the resulting interval $\{0\}$ covers the truth.  

For $\xi_{Y}(W_{\mathcal{S}}) > 0$,
As the power of the randomization test converges to one, the first step will be passed with probability approaching one. 
Then, by \Cref{coro:coverage}, the subsequent confidence interval based on the central limit theorem achieves the correct asymptotic coverage.  

If the power of the randomization test does not converge to one and the first-step interval is replaced by $[0,+\infty)$, the coverage error is upper bounded by that of Eq.~\eqref{eq:confidence.interval} when $\xi_{Y}(W_{\mathcal{S}}) > 0$, and zero when $\xi_{Y}(W_{\mathcal{S}}) = 0$.
\end{proof}

\section{Additional tables}\label{appe:sec:table}

\begin{table}[tbp]
\centering
\renewcommand{\arraystretch}{1.15}
\begin{tabular}{@{}l p{0.72\linewidth}@{}}
\toprule
\textbf{Symbol} & \textbf{Meaning} \\
\midrule
$[K]$ & Index set $\{1,\ldots,K\}$ for the $K$ factors. \\
$Y$ & Outcome. \\
$\bm W=(W_1,\ldots,W_K)$ & Vector of factors. \\
$\bm W_{\mathcal S}$ & Subvector $\{W_k: k\in\mathcal S\}$ for $\mathcal S\subseteq[K]$. \\
$W_k$ & $k$-th component of $\bm W$ (for $k\in[K]$). \\
$\mathbb P,\ \widehat{\mathbb P}$ & True / estimated joint law of $(Y,\bm W)$. \\
$\mathbb P_{\bm W},\ \widehat{\mathbb P}_{\bm W}$ & True / estimated marginal law of $\bm W$. \\
$\mathbb P_{W_k},\ \widehat{\mathbb P}_{W_k}$ & True / estimated marginal law of $W_k$. \\
$\mathbb P_{Y\mid \bm W},\ \widehat{\mathbb P}_{Y\mid \bm W}$ & True / estimated conditional law of $Y$ given $\bm W$. \\
$p,\ \hat p$ & True / estimated joint density of $(Y,\bm W)$ w.r.t.\ a chosen dominating measure (when it exists). \\
$p_{\bm W},\ \widehat p_{\bm W}$ & True / estimated marginal density of $\bm W$. \\
$p_{W_k},\ \widehat p_{W_k}$ & True / estimated marginal density of $W_k$. \\
$p_{Y\mid \bm W},\ \widehat p_{Y\mid \bm W}$ & True / estimated conditional density of $Y$ given $\bm W$. \\
$\mathbb E_{\mathbb P}[\cdot],\ \mathbb E_{\widehat{\mathbb P}}[\cdot]$ & Expectation under $\mathbb P$ / $\widehat{\mathbb P}$. \\
$\mu(\bm w)$ & Conditional mean $\mathbb E_{\mathbb P}[Y\mid \bm W=\bm w]$. \\
$\upsilon(\bm w)$ & Conditional second moment $\mathbb E_{\mathbb P}[Y^2\mid \bm W=\bm w]$. \\
\midrule
$\mathcal P$ & Statistical model. \\
$\mathcal P_{\mathrm{np}}$ & Nonparametric model. \\
$\mathcal P_{\mathrm{ind}}$ & Independence submodel for $\bm W$. \\
$\dot{\mathcal P}_{\mathbb P}$ & Tangent space at $\mathbb P\in\mathcal P$. \\
$\xi$ & Causal estimand of interest. \\
$\xi\!\left(\bigvee_{k\in\mathcal S} W_k\right)$, $\mathcal S\subseteq[K]$ & Causal explainability attributable to the union of factors in $\mathcal S$. \\
$\xi\!\left(W_k \wedge W_{k'}\right)$ & Causal explainability attributable to the interaction of $W_k$ and $W_{k'}$. \\
$\varphi_{\IF}^\xi,\ \varphi_{\EIF}^\xi,\ \varphi^\xi$ & Influence function, efficient influence function, and a generic influence function/estimator-specific score for $\xi$. \\
\bottomrule
\end{tabular}
\caption{Table of notation. When unambiguous, subscripts/superscripts may be omitted.}
\label{tab:notation}
\end{table}

\begin{table}[tbp]
    \centering
    \begin{tabular}{l|l}
    \toprule
    Expression & Formula\\ \midrule
       $\widehat{\EE}[Y\mid \bm W=\bm w]$  & $\widehat{\mu}(\bm W)$ \\
       $\widehat{\EE}[Y\mid \bm W_{-\calS}=\bm w_{-\calS}]$  & $\int \widehat{\mu}(\bm w) d\widehat{\PP}(\bm w_{\calS})$ \\    
         $\widehat{\EE}[\widehat{\EE}^2[Y\mid \bm W_{-\calS}]]$ & $\int \left(\int \widehat{\mu}(\bm w) d\widehat{\PP}(\bm w_{\calS})\right)^2 d\widehat{\PP}(\bm w_{-\calS})$\\
       $\widehat{\mathbb{E}}\left[\widehat{\mathbb{E}}[Y\mid \bm W]\cdot\widehat{\mathbb{E}}[Y\mid \bm W_{-\calS}]\mid W_i=W_i\right]$, $i \in \calS$  & $\int \widehat{\mu}(\bm w) \cdot \left(\int \widehat{\mu}(\bm w) d\widehat{\PP}(\bm w_{\calS})\right)d\widehat{\PP}(\bm w_{-i})$\\
       $\widehat{\mathbb{E}}\left[\widehat{\mathbb{E}}^2[Y\mid \bm W_{-\calS}]\mid W_j=W_j\right]$, $j \notin \calS$  & $\int \left(\int \widehat{\mu}(\bm w) d\widehat{\PP}(\bm w_{\calS})\right)^2 d\widehat{\PP}(\bm w_{-j})$\\
       $\var(Y)=\EE[Y^2] -\EE^2[Y]$ & $\int \widehat{\nu}(\bm w) d\widehat{\PP}(\bm w) -\left(\int \widehat{\mu}(\bm w) d\widehat{\PP}(\bm w)\right)^2$ \\ \bottomrule
       \end{tabular}
    \caption{Formulas for the one-step correction estimator.}
    \label{tab:EIF}
\end{table}

\clearpage

\section{Additional empirical studies}
\label{app:simulation}

\subsection{Additional simulations with true nuisances}\label{app:simulation.noise.magnitude}

We implement the two methods using the true nuisance functions. 
In \Cref{fig:simulation2}, we vary the noise magnitude.

\begin{figure}[htbp]
    \centering
    \begin{minipage}{\linewidth}
        \centering
        \begin{minipage}{0.32\textwidth}
            \centering
            \includegraphics[clip, trim = 0cm 0cm 0cm 1cm, width=\linewidth]{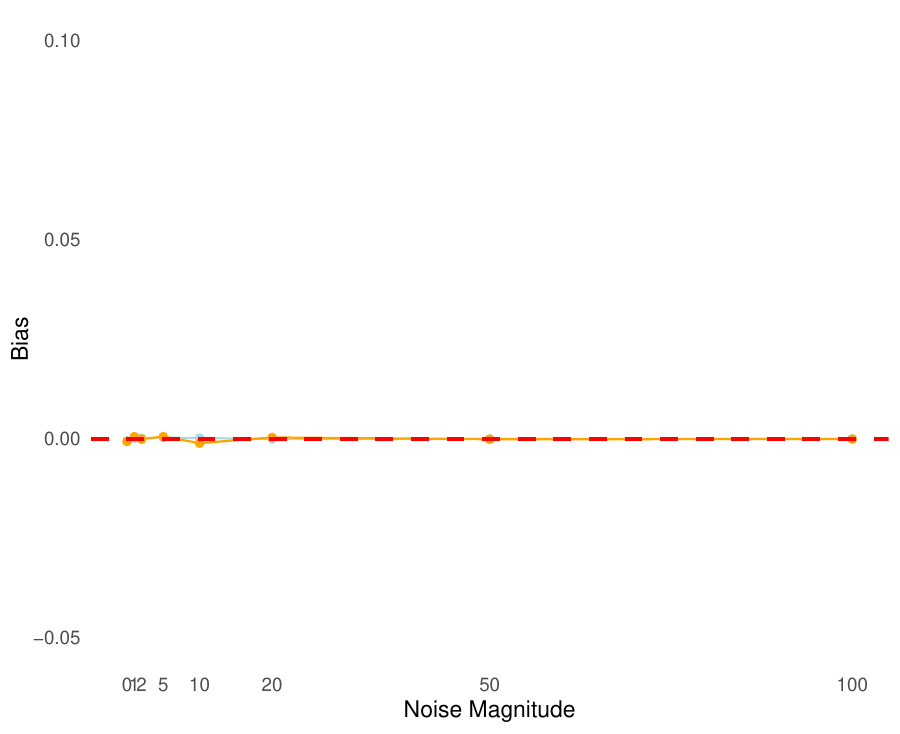}
        \end{minipage}
        \hfill
        \begin{minipage}{0.32\textwidth}
            \centering
            \includegraphics[clip, trim = 0cm 0cm 0cm 1cm, width=\linewidth]{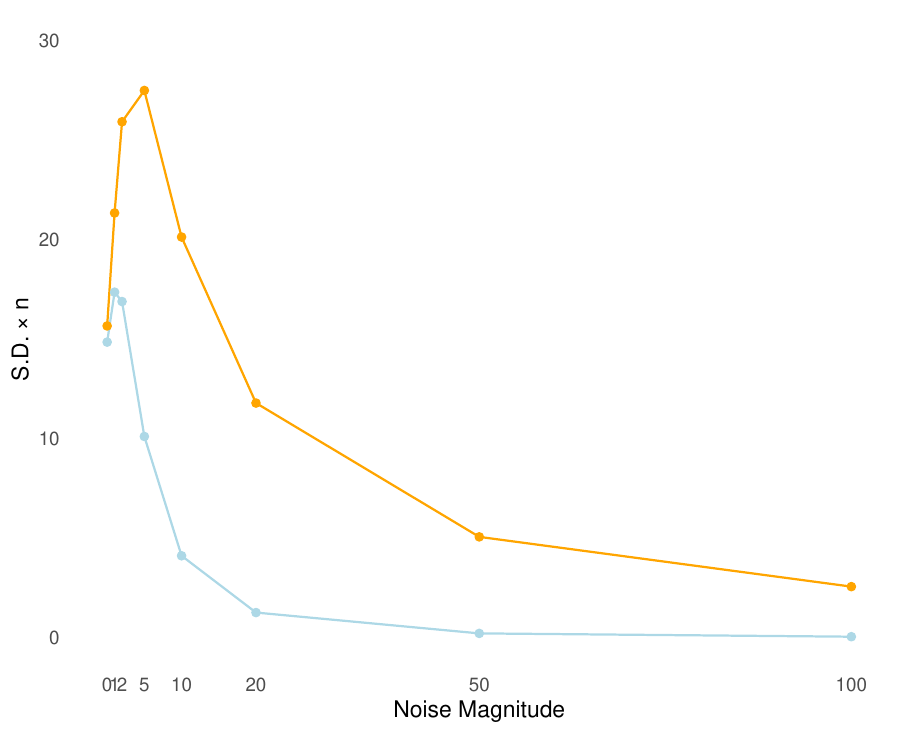}
        \end{minipage}
        \hfill
        \begin{minipage}{0.32\textwidth}
            \centering
            \includegraphics[clip, trim = 0cm 0cm 0cm 1cm, width=\linewidth]{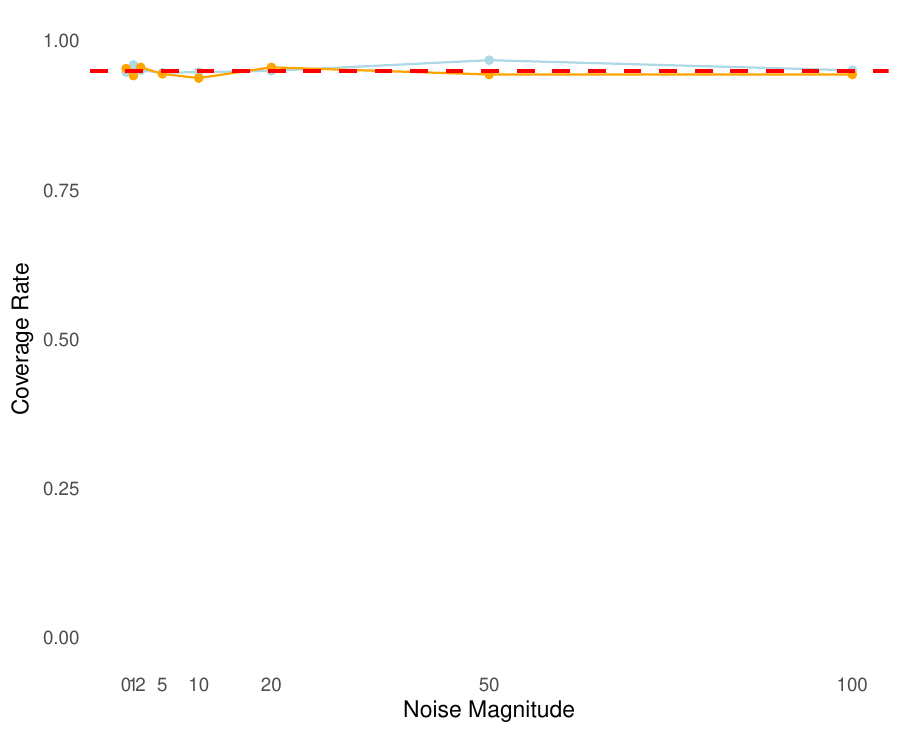}
        \end{minipage}
        \par\smallskip
        \centering
        \small{(d) $\xi( W_3)$; True nuisance functions; Varying noise magnitude; Sample size = 1000}
    \end{minipage}
    \vspace{1em}
    \begin{minipage}{\linewidth}
        \centering
        \begin{minipage}{0.32\textwidth}
            \centering
            \includegraphics[clip, trim = 0cm 0cm 0cm 1cm, width=\linewidth]{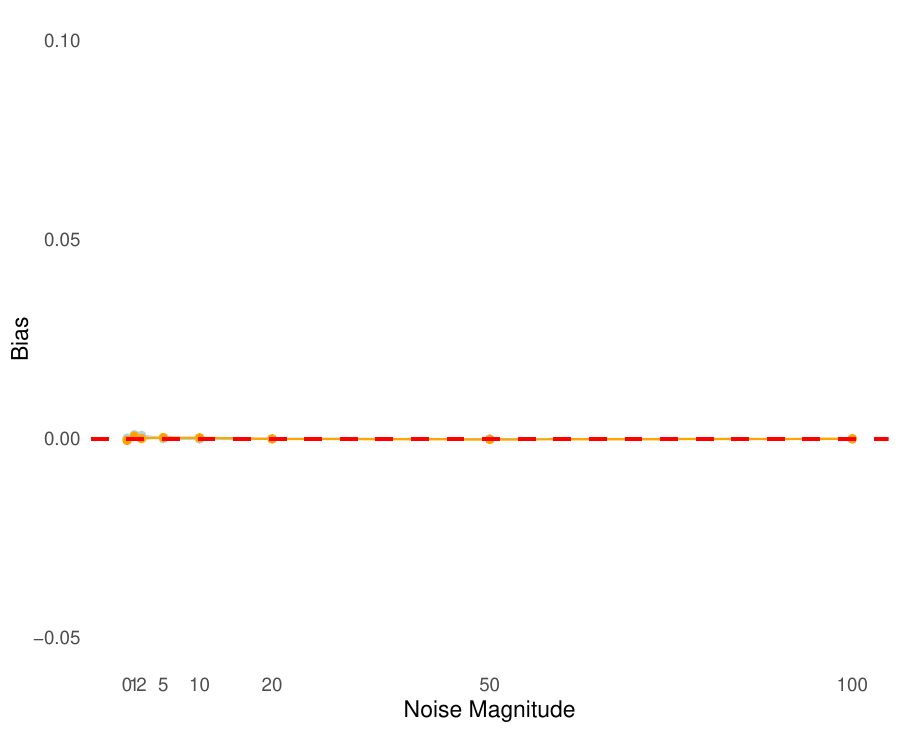}
        \end{minipage}
        \hfill
        \begin{minipage}{0.32\textwidth}
            \centering
            \includegraphics[clip, trim = 0cm 0cm 0cm 1cm, width=\linewidth]{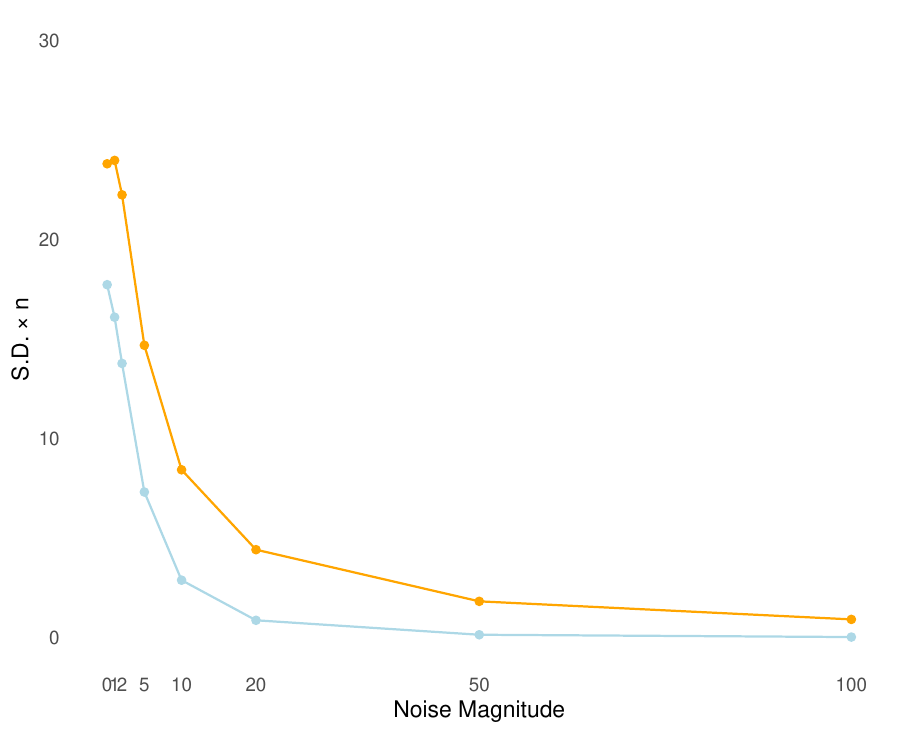}
        \end{minipage}
        \hfill
        \begin{minipage}{0.32\textwidth}
            \centering
            \includegraphics[clip, trim = 0cm 0cm 0cm 1cm, width=\linewidth]{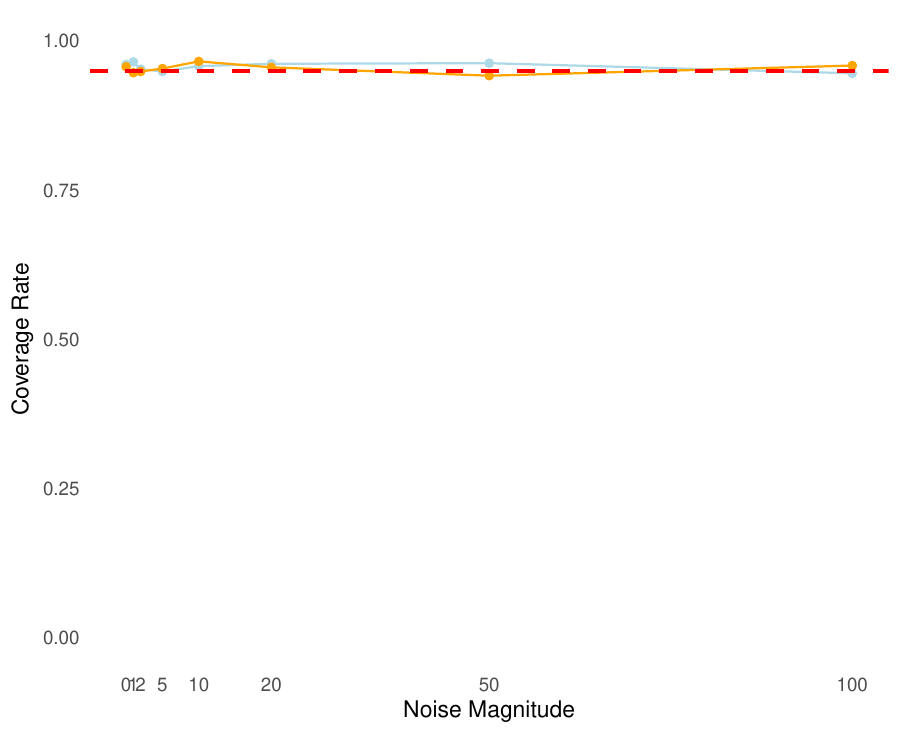}
        \end{minipage}
        \par\smallskip
        \centering
        \small{(e) $\xi(W_1 \vee W_3)$; True nuisance functions; Varying noise magnitude; Sample size = 1000}
    \end{minipage}
    \vspace{1em}

    \begin{minipage}{\linewidth}
        \centering
        \begin{minipage}{0.32\textwidth}
            \centering
            \includegraphics[clip, trim = 0cm 0cm 0cm 1cm, width=\linewidth]{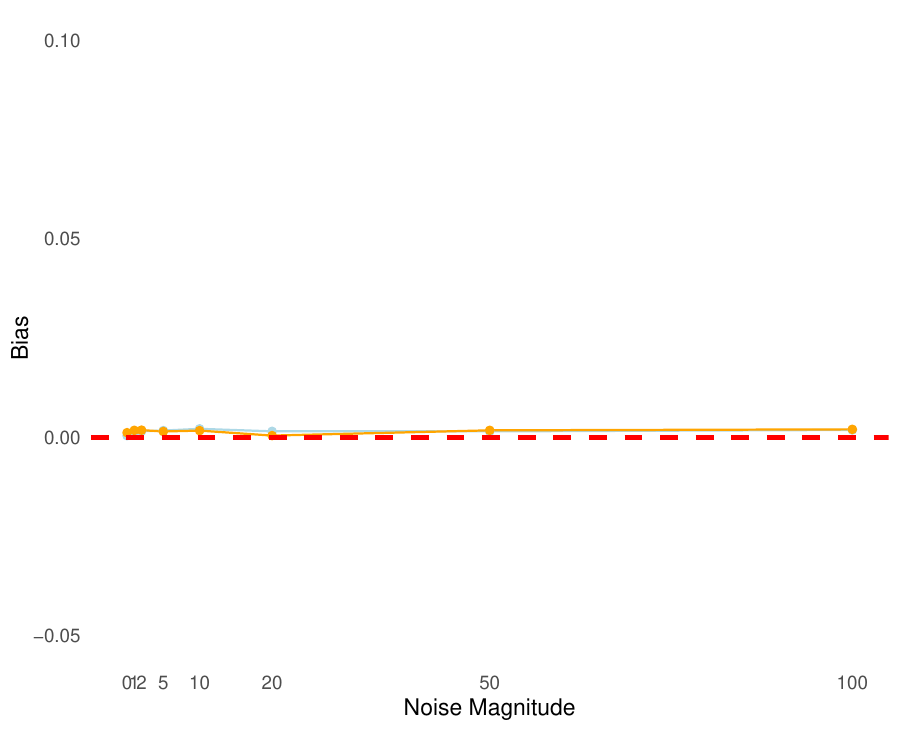}
        \end{minipage}
        \hfill
        \begin{minipage}{0.32\textwidth}
            \centering
            \includegraphics[clip, trim = 0cm 0cm 0cm 1cm, width=\linewidth]{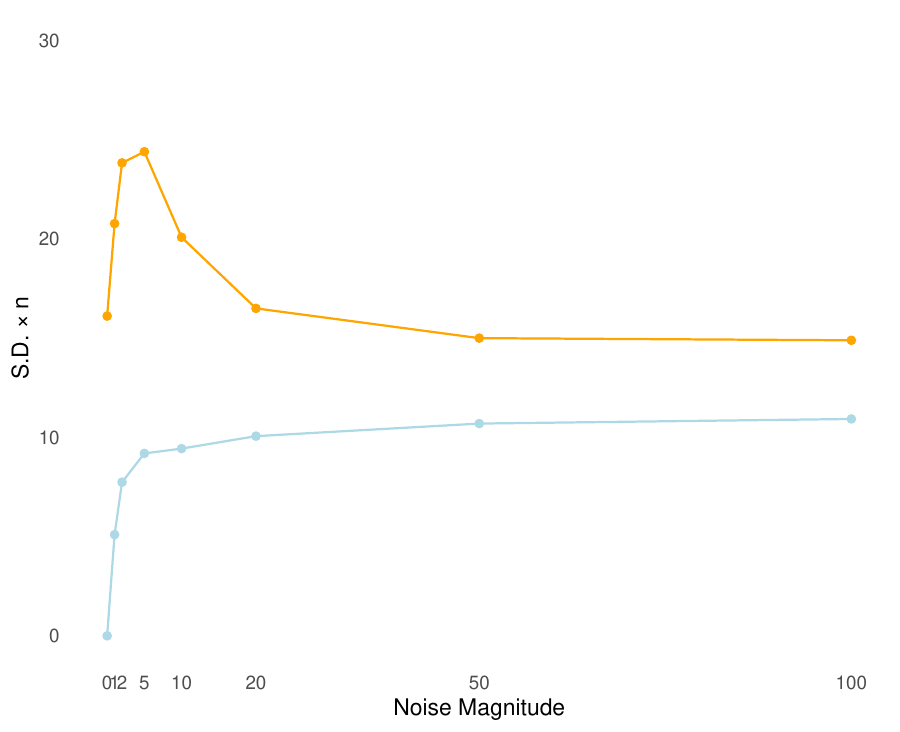}
        \end{minipage}
        \hfill
        \begin{minipage}{0.32\textwidth}
            \centering
            \includegraphics[clip, trim = 0cm 0cm 0cm 1cm, width=\linewidth]{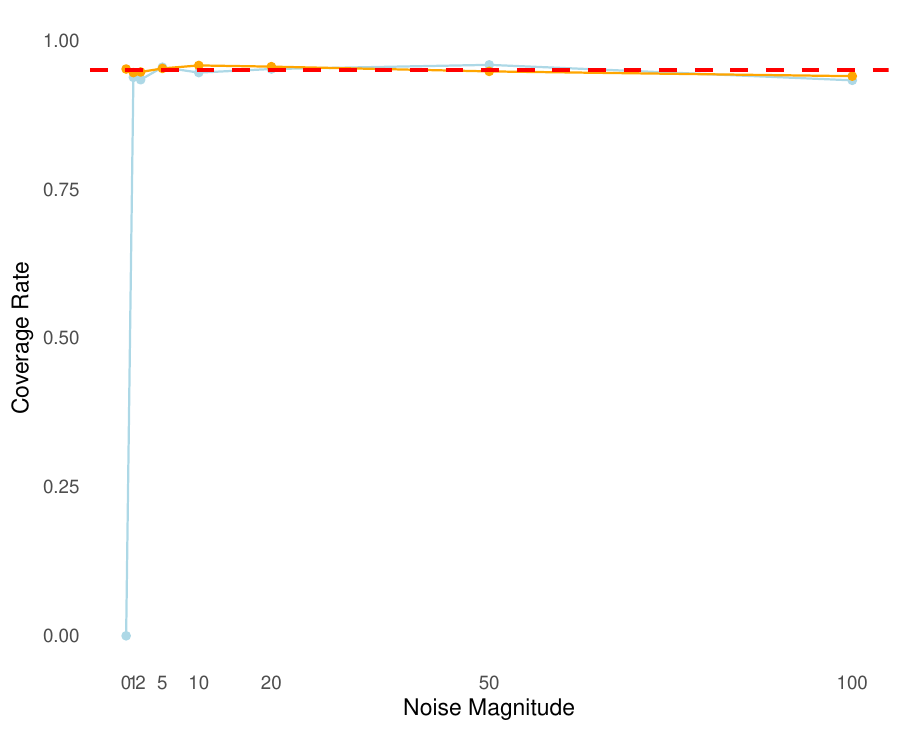}
        \end{minipage}
        \par\smallskip
        \centering
        \small{(f) $\xi(W_1 \wedge W_3)$; True nuisance functions; Varying noise magnitude; Sample size = 1000}
    \end{minipage}


  \caption{Comparison of bias (left panel), estimated standard deviation times sample size (mid panel), and coverage rate (right panel; significance level $\alpha = 0.05$) with varying noise magnitude.
    $\varphi_{\EIF}$-based method in blue, $\varphi_{\IF}$-based method in gold.
    True nuisance functions are used. Results aggregated over $1000$ trials.}
    \label{fig:simulation2}
\end{figure}

\subsection{Additional simulations with estimated nuisances}\label{app:estimated}

We compared $\varphi_{\EIF}$-based estimator and the $\varphi_{\IF}$-based estimator for total explainability in \Cref{sec:simulation}, and here we focus on the more challenging problem of interaction explainability. 
Owing to the increased complexity of nuisance function estimation inherent in interaction effects, both methods exhibit coverage below the target level. 
Recall that $\varphi_{\EIF}$-based estimator requires estimating more complicated nuisance components to exploit independence; in this setting, the theoretical gain in asymptotic variance is offset by the increased variability from nuisance estimation. As a result, the $\varphi_{\EIF}$-based estimator appears less favorable in terms of both MSE and coverage.

\begin{table}[ht]
\centering
\caption{Performance comparison of $\varphi_{\EIF}$ and $\varphi_{\IF}$-based estimators with estimated nuisance function regarding interaction expliainability. Aggregated over $100$ trials.}
\label{tab:eif_nif}
\begin{tabular}{cccc}
\toprule
Method & Bias & SD $\times n$ & Coverage ($\alpha=0.05$) \\
\midrule
$\varphi_{\EIF}$  & -0.11 & 2.87 &  0.79 \\
$\varphi_{\IF}$  & 0.07 & 2.58 & 0.83 \\
\bottomrule
\end{tabular}
\end{table}

\section{Details of the immigration experiment}\label{app:rda}

The attributes include each immigrant’s gender, education level, employment plans, job experience, profession, language skills, country of origin, reasons for applying, and prior trips to the United States. Respondents' characteristics such as age, education, ethnicity, gender, and ethnocentrism are also available in the survey.

One thing should be noted is that there are two restrictions imposed on the possible combinations of immigrant attributes. First, those immigrant profiles
who were fleeing persecution were restricted to come from countries where such an application was
plausible (e.g., Iraq, Sudan). Second, those immigrants occupying high-skill occupations (e.g.,
research scientists, doctors) were constrained to have at least two years of college education. In this case, the distribution of immigrants’ occupations is dependent on their education, but conditionally independent of the other attributes and the other profiles in the same choice task given their education \parencite{hainmueller2015conjoint}. However, as this is an artificial experiment, it is very unclear whether the attributes are \textit{causally} dependent or not. The reason is that the restrictions are put in the design stage, so attributes are simultaneously randomized at the moment the profile is generated. To avoid the problem of unknown directions in a directed acyclic graph, we choose to collapse the two pairs of dependent variables into two variables —— \textit{Origin \& Application reason} and \textit{Education \& Profession}. Another thing is that although it is a randomized experiment, the authors did not provide any information on the randomization mechanism. So we do the analysis assuming that the propensity scores are unknown. 



\begin{table}[ht]
\centering
\caption{Conditional Randomization Test \parencite{ham2024using}}
\begin{tabular}{ll}
\hline
\textbf{Interaction} & \textbf{$p$ value} \\
\hline
Gender \& Job Experience & 0.9108911  \\
Job Plan \& Gender & 0.6831683 \\
Job Plan \& Language & 0.0990099 \\
Job Plan \& Job Experience & 0.00990099** \\
Job Plan \& Prior trips to U.S. & 0.5049505 \\
\hline
\end{tabular}
\end{table}

\begin{table}[ht]
\centering
\caption{Average Component Interaction Effects (ACIE) \cite{hainmueller2015conjoint}}
\resizebox{\textwidth}{!}{%
\begin{tabular}{l r r r r}
\toprule
\textbf{Attribute} & \textbf{Estimate} & \textbf{Std. Err} & \textbf{z value} & \textbf{Pr($>|z|$)} \\
\midrule
\multicolumn{5}{l}{\textit{Gender:Job Experience}} \\
\quad male:1-2 years & -0.0017453 & 0.023181 & -0.07529 & 0.93998 \\
\quad male:3-5 years & -0.0215082 & 0.024037 & -0.89481 & 0.37089 \\
\quad male:5+ years & -0.0088778 & 0.024508 & -0.36224 & 0.71717 \\
\addlinespace 
\multicolumn{5}{l}{\textit{Gender:Job Plan}} \\
\quad male:contract with employer & 0.0034950 & 0.023374 & 0.14953 & 0.88114 \\
\quad male:interviews with employer & -0.0185813 & 0.024573 & -0.75616 & 0.44955 \\
\quad male:no plans to look for work & 0.0042381 & 0.022811 & 0.18579 & 0.85261 \\
\addlinespace
\multicolumn{5}{l}{\textit{Job Plan:Language}} \\
\quad contract with employer:broken English & 0.0477691 & 0.031921 & 1.49649 & 0.134527 \\
\quad interviews with employer:broken English & 0.0652677 & 0.033518 & 1.94725 & 0.051505 \\
\quad no plans to look for work:broken English & 0.0117892 & 0.032696 & 0.36057 & 0.718423 \\
\quad contract with employer:tried English but unable & 0.0100675 & 0.032462 & 0.31013 & 0.756463 \\
\quad interviews with employer:tried English but unable & 0.0274033 & 0.032797 & 0.83555 & 0.403405 \\
\quad no plans to look for work:tried English but unable & 0.0069235 & 0.031912 & 0.21696 & 0.828243 \\
\quad contract with employer:used interpreter & 0.0150713 & 0.032272 & 0.46700 & 0.640497 \\
\quad interviews with employer:used interpreter & 0.0499209 & 0.032747 & 1.52444 & 0.127399 \\
\quad no plans to look for work:used interpreter & 0.0402530 & 0.031193 & 1.29046 & 0.196892 \\
\bottomrule
\end{tabular}%
}
\par\medskip 
\footnotesize
\centering
Number of Obs. = 13960, Number of Respondents = 1396. \\
Significance codes: 0 *** 0.001 ** 0.01 * 0.05 
\end{table}

\begin{table}[ht]
\centering
\caption{Average Component Interaction Effects (ACIE) -- continued}
\resizebox{\textwidth}{!}{%
\begin{tabular}{l r r r r}
\toprule
\textbf{Attribute} & \textbf{Estimate} & \textbf{Std. Err} & \textbf{z value} & \textbf{Pr($>|z|$)} \\
\midrule
\multicolumn{5}{l}{\textit{Job Experience:Job Plan}} \\
\quad 1-2 years:contract with employer & 0.03528079 & 0.032295 & 1.092449 & 0.274636 \\
\quad 3-5 years:contract with employer & 0.00069369 & 0.033193 & 0.020899 & 0.983327 \\
\quad 5+ years:contract with employer & -0.02503323 & 0.033785 & -0.740960 & 0.458718 \\
\quad 1-2 years:interviews with employer & 0.01152705 & 0.033719 & 0.341853 & 0.732462 \\
\quad 3-5 years:interviews with employer & -0.02887657 & 0.034100 & -0.846821 & 0.397095 \\
\quad 5+ years:interviews with employer & -0.03515407 & 0.034809 & -1.009902 & 0.312542 \\
\quad 1-2 years:no plans to look for work & -0.00947529 & 0.031805 & -0.297922 & 0.765762 \\
\quad 3-5 years:no plans to look for work & -0.07701162 & 0.032256 & -2.387519 & 0.016963$^*$ \\
\quad 5+ years:no plans to look for work & -0.06210952 & 0.032943 & -1.885391 & 0.059377 \\
\addlinespace 
\multicolumn{5}{l}{\textit{Job Plan:Prior Entry}} \\
\quad contract with employer:once as tourist & -0.0306128 & 0.036545 & -0.83767 & 0.402218 \\
\quad interviews with employer:once as tourist & 0.0639714 & 0.037340 & 1.71320 & 0.086676 \\
\quad no plans to look for work:once as tourist & 0.0469911 & 0.037612 & 1.24936 & 0.211533 \\
\quad contract with employer:many times as tourist & -0.0034264 & 0.036436 & -0.09404 & 0.925077 \\
\quad interviews with employer:many times as tourist & 0.0075229 & 0.038365 & 0.19609 & 0.844544 \\
\quad no plans to look for work:many times as tourist & 0.0355192 & 0.037469 & 0.94797 & 0.343144 \\
\quad contract with employer:six months with family & 0.0191688 & 0.036835 & 0.52039 & 0.602792 \\
\quad interviews with employer:six months with family & 0.0508093 & 0.037914 & 1.34012 & 0.180206 \\
\quad no plans to look for work:six months with family & 0.0392413 & 0.036617 & 1.07166 & 0.283871 \\
\quad contract with employer:once w/o authorization & 0.0091775 & 0.036384 & 0.25224 & 0.800859 \\
\quad interviews with employer:once w/o authorization & 0.0302793 & 0.037539 & 0.80661 & 0.419892 \\
\quad no plans to look for work:once w/o authorization & 0.0620386 & 0.035214 & 1.76175 & 0.078112 \\
\bottomrule
\end{tabular}%
}
\par\medskip
\footnotesize
\centering
Number of Obs. = 13960, Number of Respondents = 1396. \\
Significance codes: 0 *** 0.001 ** 0.01 * 0.05 
\end{table}

\begin{table}[ht]
\centering
\caption{Average Component Interaction Effects (ACIE) -- continued}
\resizebox{\textwidth}{!}{%
\begin{tabular}{l r r r r}
\toprule
\textbf{Attribute} & \textbf{Estimate} & \textbf{Std. Err} & \textbf{z value} & \textbf{Pr($>|z|$)} \\
\midrule
\multicolumn{5}{l}{\textit{Job:Job Plan}} \\
\quad waiter:contract with employer & -0.0327390 & 0.047473 & -0.689639 & 0.490421 \\
\quad child care provider:contract with employer & -0.0900152 & 0.050735 & -1.774211 & 0.076028 \\
\quad gardener:contract with employer & 0.0311797 & 0.048505 & 0.642814 & 0.520345 \\
\quad financial analyst:contract with employer & -0.0484468 & 0.085065 & -0.569525 & 0.569000 \\
\quad construction worker:contract with employer & -0.0493627 & 0.049477 & -0.997686 & 0.318432 \\
\quad teacher:contract with employer & -0.0363747 & 0.050976 & -0.713561 & 0.475499 \\
\quad computer programmer:contract with employer & -0.1469473 & 0.085449 & -1.719704 & 0.085486 \\
\quad nurse:contract with employer & -0.0603685 & 0.047341 & -1.275183 & 0.202245 \\
\quad research scientist:contract with employer & -0.1356852 & 0.080691 & -1.681531 & 0.092660 \\
\quad doctor:contract with employer & -0.1538243 & 0.077780 & -1.977672 & 0.047966$^*$ \\
\quad waiter:interviews with employer & -0.0450223 & 0.049338 & -0.912537 & 0.361486 \\
\quad child care provider:interviews with employer & -0.0383598 & 0.049890 & -0.768895 & 0.441956 \\
\quad gardener:interviews with employer & -0.0413542 & 0.050545 & -0.818161 & 0.413265 \\
\quad financial analyst:interviews with employer & -0.0863427 & 0.092088 & -0.937614 & 0.348443 \\
\quad construction worker:interviews with employer & -0.0211739 & 0.051522 & -0.410968 & 0.681096 \\
\quad teacher:interviews with employer & 0.0301964 & 0.050484 & 0.598142 & 0.549746 \\
\quad computer programmer:interviews with employer & -0.1196731 & 0.088597 & -1.350757 & 0.176773 \\
\quad nurse:interviews with employer & -0.0116562 & 0.049813 & -0.234000 & 0.814985 \\
\quad research scientist:interviews with employer & -0.0679747 & 0.084104 & -0.808220 & 0.418964 \\
\quad doctor:interviews with employer & -0.1211354 & 0.082133 & -1.474867 & 0.140248 \\
\quad waiter:no plans to look for work & 0.0028488 & 0.047582 & 0.059871 & 0.952258 \\
\quad child care provider:no plans to look for work & 0.0035945 & 0.048850 & 0.073583 & 0.941343 \\
\quad gardener:no plans to look for work & 0.0303911 & 0.048098 & 0.631859 & 0.527479 \\
\quad financial analyst:no plans to look for work & 0.0443680 & 0.086945 & 0.510300 & 0.609842 \\
\quad construction worker:no plans to look for work & -0.0190767 & 0.048589 & -0.392611 & 0.694607 \\
\quad teacher:no plans to look for work & 0.0125936 & 0.049917 & 0.252292 & 0.800815 \\
\quad computer programmer:no plans to look for work & 0.0012924 & 0.083344 & 0.015506 & 0.987628 \\
\quad nurse:no plans to look for work & -0.0131314 & 0.047923 & -0.274010 & 0.784077 \\
\quad research scientist:no plans to look for work & 0.0331237 & 0.085202 & 0.388766 & 0.697449 \\
\quad doctor:no plans to look for work & -0.0040783 & 0.079754 & -0.051136 & 0.959217 \\
\bottomrule
\end{tabular}%
}
\par\medskip
\footnotesize
\centering
Number of Obs. = 13960, Number of Respondents = 1396. \\
Significance codes: 0 *** 0.001 ** 0.01 * 0.05 
\end{table}

\end{document}